\newtheorem{thm}{Theorem}
\newtheorem{lemma}[thm]{Lemma}
\theoremstyle{remark}
\definecolor{antonio}{rgb}{.2,.5,.1}
\newcommand{\mcl}{\mathcal{L}}
\newcommand{\mcc}{\mathcal{C}}
\newcommand{\mch}{\mathcal{H}}
\newcommand{\mco}{\mathcal{O}}
\newcommand{\mcu}{\mathcal{U}}
\newcommand{\mcm}{\mathcal{M}}
\newcommand{\mct}{\mathcal{T}}
\newcommand{\mcd}{\mathcal{D}}
\newcommand{\mcw}{\mathcal{W}}
\newcommand{\mbi}{\mathbb{I}}
\newcommand{\mbc}{\mathbb{C}}
\newcommand{\mbz}{\mathbb{Z}}
\newcommand{\mbe}{\mathbb{E}}
\newcommand{\mbs}{\mathbb{S}}
\newcommand{\mbu}{\mathbb{U}}
\newcommand{\id}{\mathds{1}}
\newcommand{\om}{\ketbra{\Omega}}
\newcommand{\lm}{\lambda}
\newcommand*{\defeq}{\mathrel{\vcenter{\baselineskip0.5ex \lineskiplimit0pt
                     \hbox{\scriptsize.}\hbox{\scriptsize.}}}%
                     =}
\begin{document}
\title{Real classical shadows} 

\author{Maxwell West}
\thanks{westm2@student.unimelb.edu.au}
\affiliation{School of Physics, University of Melbourne, Parkville, VIC 3010, Australia}
\affiliation{Theoretical Division, Los Alamos National Laboratory, Los Alamos, NM 87545, USA}

\author{Antonio Anna Mele}
\affiliation{Dahlem Center for Complex Quantum Systems, Freie Universität Berlin, 14195 Berlin, Germany}
\affiliation{Theoretical Division, Los Alamos National Laboratory, Los Alamos, NM 87545, USA}

\author{Mart\'{i}n Larocca}
\affiliation{Theoretical Division, Los Alamos National Laboratory, Los Alamos, NM 87545, USA}
\affiliation{Center for Non-Linear Studies, Los Alamos National Laboratory, 87545 NM, USA}

\author{M. Cerezo}
\thanks{cerezo@lanl.gov}
\affiliation{Information Sciences, Los Alamos National Laboratory, Los Alamos, NM 87545, USA}

\begin{abstract}
Efficiently learning expectation values of a quantum state using classical shadow tomography has become a fundamental task in quantum information theory. In a classical shadows protocol, one  measures a state in a chosen basis $\mathcal{W}$ after it has evolved under a unitary transformation randomly sampled from a chosen distribution $\mathcal{U}$. In this work we study the case where $\mathcal{U}$ corresponds to either local or global orthogonal Clifford gates, and $\mathcal{W}$ consists of real-valued vectors. Our results show that for various situations of interest, this ``real'' classical shadow protocol improves the sample complexity over the standard scheme based on general Clifford unitaries. For example, when one is interested in estimating the expectation values of  arbitrary real-valued observables, global orthogonal Cliffords typically decrease the required number of samples by a factor of two. More dramatically, for $k$-local observables composed only of real-valued Pauli operators,  sampling local orthogonal Cliffords leads to a reduction by an exponential-in-$k$ factor in the sample complexity over local unitary Cliffords. Finally, we show that by measuring in a basis containing  complex-valued vectors, orthogonal shadows can, in the limit of large system size, exactly reproduce the original unitary shadows protocol.
\end{abstract}

\maketitle

\section{Introduction}
Classical shadows has recently emerged as a powerful framework for learning the expectation values of a large number of operators from few copies of a state $\rho$ of interest~\cite{huang2020predicting,aaronson2018shadow,cotler2020quantum,sugiyama2013precision,elben2020mixed,rath2021quantum,zhao2021fermionic,low2022classical,van2022hardware,wan2023matchgate,ippoliti2024classical,zhao2024group,king2024triply,jerbi2023shadows,koh2022classical,chen2021robust,hearth2024efficient,bertoni2024shallow,chan2022algorithmic,low2022classical,sauvage2024classical,sack2022avoiding,wan2022matchgate,Koh_2022,Helsen_2023,kunjummen2023shadow,Wu_2024,Zhao_2024,bermejo2024quantum,angrisani2024classically,west2024random,grier2024sample,brandao2020fast,bertoni2022shallow,somma2024shadow}. As technological progress allows us to create quantum systems that are far too large for full state tomography,  such randomized measurement techniques are increasingly becoming the norm for unknown state characterization. Instead of attempting to fully characterize a state, the remarkable promise of classical shadows is that one can  perform an extremely small number of measurements of $\rho$, and later use them to produce estimates $\hat{o}_i\approx\Tr[\rho O_i]$ for some collection of observables  
 $\{O_i\}_{i=1}^M$ that was not fixed a priori. Such estimates can be obtained to within additive error $\varepsilon$ (with high probability) by measuring only
\begin{equation}\label{eq:sc}
S=\mco\left(\frac{\log M}{\varepsilon^2}\max_i {\rm Var}_{\hspace{0.5mm}\mcu}[\hat{o}_i ]\right)
\end{equation}
copies of $\rho$~\cite{huang2020predicting}. 
Importantly, the variances of the shadow estimators -- and thus the sample complexity -- depend on the distribution $\mcu$ used in the randomized measurements. Hence, different  choices of  $\mcu$  lead to different families of observables with small variances which can  be sample-efficiently estimated. 

The search for distributions of unitaries that display favorable variance scaling for operators of general interest is thus an active area of research, with bespoke choices now known (for example) for operators which are local~\cite{huang2020predicting}, constitute a Majorana monomial of low degree~\cite{zhao2021fermionic,wan2022matchgate}, or are permutation symmetric~\cite{sauvage2024classical}. 
Conversely, one can pick a distribution of unitaries and determine the operators leading to an efficient shadow protocol. This has lead to the characterization of a growing but still relatively short list of distributions, including random local and global Clifford gates~\cite{huang2020predicting}, fermionic Gaussian unitaries~\cite{zhao2021fermionic,wan2022matchgate,denzler2023learningfermioniccorrelationsevolving} and symplectic unitaries~\cite{west2024random}.  

In this work we contribute to the ever-growing literature of randomized measurement tomography~\cite{Elben_2022}. Specifically,  we analyze classical shadow schemes based on sampling unitaries from the Haar measure over the local and global orthogonal groups.   As we shall see (and analogously to the unitary~\cite{zhu2017multiqubit} and fermionic Gaussian unitary~\cite{wan2022matchgate} cases) the corresponding classical shadow protocols are identical to those obtained by respectively taking local and global  \textit{real Cliffords}, by virtue of their forming an orthogonal 3-design~\cite{hashagen2018real}. 
In the case of global real Cliffords and a measurement basis consisting of real vectors, we find that the class of operators that can be efficiently learnt are the symmetric (equivalently, real-valued) observables with 2-norm polynomial in the system size. Moreover, the variance of the resulting estimators is typically (in a to-be-discussed sense)  reduced by a factor of two compared to the case of global unitary Cliffords~\cite{huang2020predicting} (which corresponds to the best suited scheme to learning observables with small 2-norm). In the case of local real Cliffords and a real measurement basis we will see that the efficiently learnable class of observables are the local operators containing in their Pauli decompositions only strings consisting of $I$, $X$  and $Z$. For such an operator of support $k$, we find an improvement by an exponential factor in the variance, from $4^k$ to $3^k$, when using local real Cliffords over local unitary Cliffords. When we restrict further to a single such Pauli string the respective variance bounds become $3^k$ and $2^k$, another exponential factor improvement.  As  schematically depicted in Fig.~\ref{fig:1}, these variance improvements come at the cost of restricting to a strict subset of all observables. However, when  this set contains the observables of interest, orthogonal shadows yield a clear improvement. 

More generally, when measuring in a complex basis (i.e., a basis containing at least one vector $\ket{w}$ which is linearly independent of $\ket{w^*}$), we find that the induced shadow protocol is a linear combination of the previously described real protocols and that of unitary shadows. Remarkably, we further find a simple measurement basis for which, in the limit of large Hilbert space dimension, orthogonal shadows exactly reproduces the unitary shadows protocol of Ref.~\cite{huang2020predicting}. Indeed, we show that by tuning the \textit{reality} of the basis, one can interpolate between real and unitary shadows, allowing both for the targeting of a continuum of distinct visible spaces and bias-variance trade-offs~\cite{van2022hardware}.

\section{Background}

For completeness, we begin by briefly describing the core concepts of classical shadows. For a more thorough introduction to this topic, we refer the reader to Refs.~\cite{huang2020predicting,elben2022randomized}. Let $\rho$ be an unknown $n$-qubit quantum state, i.e., a trace-one positive-semidefinite linear operator on the $d$-dimensional vector space $\mch$, where $d=2^n$. We will use $\mcl(\mch)$ to denote the space of linear operators on $\mch$. A classical shadows protocol is defined by a choice of distribution $\mcu$ (i.e., a set of unitaries equipped with a probability measure) and a measurement basis $\mcw=\{\ket{w}\}_w$. Copies of the unknown state $\rho$ are evolved by a  randomly sampled $U\sim\mcu$ and then measured in the basis $\mcw$, producing pairs $(U,\ket{w})$. Upon obtaining such a pair, the state $U^\dagger\Pi_wU$, where $\Pi_w \defeq \ketbra{w}$, is stored. This  procedure can be seen to implement the quantum channel
\begin{align}
\mcm_{\mcu,\mcw}(\rho)&=  \sum_{w\in\mcw} \int_{U\sim\hspace{0.5mm}\mcu}  \Tr\left[\rho U^\dagger\Pi_wU\right] U^\dagger\Pi_w U \label{eq:mc1}\\
&=\sum_{w\in\mcw}  \Tr_1\left[\left(\rho \otimes \id\right) \int_{U\sim\hspace{0.5mm}\mcu} U^{\dagger\otimes 2}\Pi_w^{\otimes 2}U^{\otimes 2}\right]\label{eq:mc2}\,.
\end{align}
Here, $\int_{U\sim\hspace{0.5mm}\mcu}$ denotes the average over the distribution $\mathcal{U}$. 
For each pair $(U,{w})$, one defines the \textit{classical shadows} $\hat\rho$ of $\rho$ as
\begin{equation}
\hat{\rho} = \mcm_{\mcu,\mcw}^{-1}\left(U^\dagger\Pi_wU\right),
\end{equation}
where  $\mcm_{\mcu,\mcw}^{-1}$  denotes the pseudo-inverse of $\mcm_{\mcu,\mcw}$. The visible space of the shadow protocol, i.e., the image~\cite{van2022hardware} of $\mcm_{\mcu,\mcw}$, is defined as
\begin{equation}\label{eq:vis}
\mathsf{VisibleSpace}\left(\mcu, \mcw\right) = {\rm span }\left\{ U^\dagger \Pi_{w} U \right\}_{U\sim\hspace{0.5mm}\mcu, w \in\mcw}\,,
\end{equation}
and in general is a strict subspace of $\mcl(\mch)$.

\begin{figure}[t!]
\includegraphics[width=\linewidth]{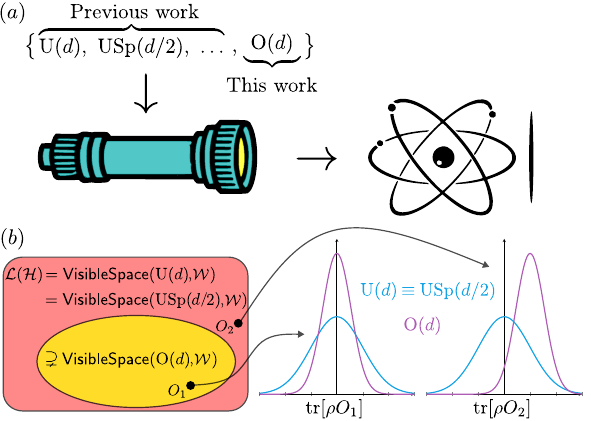}
\caption{(a) Schematic depiction of the classical shadows procedure. An unknown quantum state is evolved with unitaries sampled from a distribution $\mathcal{U}$, and measured in a basis $\mathcal{W}$, thus ``illuminating'' different aspects of the state. (b) The resulting \textit{classical shadows} can be used to estimate expectation values with respect to the state. The $\mathcal{U}$-dependent set of observables for which these estimators are unbiased form the \textit{visible space} of the protocol. Taking $\mathcal{U}$ to be Haar random over the unitary ${\rm U}(d)$ or symplectic ${\rm USp}(d/2)$~\cite{west2024random} group, and $\mathcal{W}$ the computational basis, leads to a tomographically complete protocol (i.e., with a maximal visible space). For an operator $O_1$ ($O_2$) which is in (not in) the orthogonal visible space, the estimator is unbiased (biased). However, orthogonal shadows lead to smaller variances for $O_1$, and therefore an improved sample complexity.  }
\label{fig:1}
\end{figure}

Given an observable $O$, one defines its estimator  in terms of the classical shadows as
\begin{equation}
\hat{o}=\Tr[O\hat{\rho}]\,,
\end{equation}
which can be shown to be \textit{unbiased} if  $O\in\mathsf{VisibleSpace}\left(\mcu, \mcw\right)$~\cite{van2022hardware}. Finally, we wish to know the variance of the estimators, which will allow us to determine the number of copies of $\rho$ needed to obtain reliable estimates (see Eq.~\eqref{eq:sc}). As such, one needs to evaluate the quantity 
\begin{equation}
{\rm Var}[\hat{o}_i ]=\mbe[\hat{o}_i^2]-\mbe[\hat{o}_i]^2\,,
\end{equation}
where
%\small
\begin{align}
\mbe[\hat{o}^2] &= \sum_w\int_{U\sim\hspace{0.5mm}\mcu}   \Tr\left[\rho U^\dagger\Pi_wU\right]\Tr\left[O \mcm_{\mcu,\mcw}^{-1}\left(U^\dagger\Pi_wU\right)\right] ^2 \nonumber \\
&=\sum_w  \Tr\bigg[\left(\rho\otimes \mcm_{\mcu,\mcw}^{-1}(O)\otimes  \mcm_{\mcu,\mcw}^{-1}(O)\right) \times\nonumber \\
&\hspace{20mm}\int_{U\sim\hspace{0.5mm}\mcu}  U^{\dagger\otimes 3}\Pi_w^{\otimes 3}U^{\otimes 3} \bigg]\,,\label{eq:ea2}
\end{align}
%\normalsize
and
\small
\begin{align}
\mbe[\hat{o}]^2 &=\left( \sum_w\int_{U\sim\hspace{0.5mm}\mcu}   \Tr\left[\rho U^\dagger\Pi_wU\right]\Tr\left[O\mcm_{\mcu,\mcw}^{-1}\left(U^\dagger\Pi_wU\right)\right]\right) ^2 \nonumber \\
&= \left(\sum_w  \Tr\left[\left(\rho\otimes  \mcm_{\mcu,\mcw}^{-1}(O)\right)\int_{U\sim\hspace{0.5mm}\mcu}  U^{\dagger\otimes 2}\Pi_w^{\otimes 2}U^{\otimes 2} \right] \right) ^2\!\!. \label{eq:e2a}
\end{align}
\normalsize
These expressions are derived and discussed in more detail in the Appendix.

The key takeaway is that the properties of a given shadow protocol are fully determined by Eqs.~\eqref{eq:mc2},~\eqref{eq:ea2} and~\eqref{eq:e2a}. Indeed, one can compute the variance, and thus the sample complexity of a given scheme by knowing the inverse of the channel $\mcm_{\mcu,\mcw}$, and by calculating the values of the integrals
\begin{equation}\label{eq:moments}
\mct^{(k)}_\mcu (\Pi_w^{\otimes k}) = \int_{U\sim\hspace{0.5mm}\mcu} U^{\otimes k} \Pi_w^{\otimes k} U^{\dagger \otimes k}\,,
\end{equation}
for $k=2,3$. If $\mathcal{U}$ is the uniform (Haar) measure over a group, one can evaluate Eq.~\eqref{eq:moments} via the \textit{Weingarten calculus}~\cite{weingarten1978asymptotic,mele2023introduction}, the  relevant details of which are discussed in the Appendix. An important corollary of the fact that only these second and third order moments of the distributions are relevant is that any  $\mcu'$ which is a state 3-design with respect to $\mcu$ will produce identical classical shadows. This is exploited in Ref.~\cite{huang2020predicting} to use the (easier to prepare) Cliffords in lieu of Haar random unitaries, and in Ref.~\cite{wan2022matchgate} to use ``matchgate Cliffords'' instead of arbitrary matchgate circuits.

\section{Results}

Equipped with the general framework of classical shadows presented in the previous section, we now proceed to present our main results. In particular, we will focus on the cases when $\mathcal{U}$ corresponds to the Haar measure over the global orthogonal group ${\rm O}(d)$, and over the product of local orthogonal groups   ${\rm O}(2)^{\otimes n}$. For all practical purposes, we will assume that these shadow protocols are respectively implemented by sampling\footnote{This can be done efficiently, see e.g. Ref.~\cite{hashagen2018real}.} real Clifford unitaries from the uniform distribution over the discrete groups ${\rm Cl}_n\cap {\rm O}(d)$ or $({\rm Cl}_1\cap {\rm O}(2))^{\otimes n}$, where here ${\rm Cl}_n$ denotes the $n$-qubit Clifford group  (i.e. the normalizer of the Pauli group~\cite{calderbank1997quantum}). Indeed, since real Cliffords form an orthogonal 3-design~\cite{nebe2001invariants,nebe2006self,hashagen2018real}, these will lead to easier to implement protocols, while preserving all the relevant statistical properties. 
Then, throughout most of this work we will focus on the case where the measurement basis is real, i.e.\footnote{We will   always take  transposes  with respect to the computational basis.}, $\Pi_w=\Pi_w^\mathsf{T},\, \forall w\in\mathcal{W}$. We shall term any such choice, combined with global or local (Clifford) unitaries,  a \textit{real classical shadows protocol}. The more general case of a complex-valued measurement basis will be discussed at the end of this section.

We begin with a characterization of the measurement channel resulting from taking a real basis and the distribution of global orthogonal operators.
\begin{restatable}{prop}{gmc}\label{prop:gmc}
The measurement channel resulting from real classical shadows with global orthogonal evolution and any real measurement basis $\mcw$ is given by
\small
\begin{equation}\label{eq:gmc}
  \mcm_{{\rm O}(d),\mcw}(A)  = \frac{\Tr[A]\mathds{1}+A+ A^\mathsf{T} }{d+2} = \mcd_{d/(d+2)}(A_{\rm sym})\,,
  \end{equation}
  \normalsize
where 
\begin{equation}
A_{\rm sym}=\frac{A+A^\mathsf{T}}{2}\,,   
\end{equation}
is the symmetric component of $A$ and
\begin{equation}\label{eq:depol}
  \mcd_p(A) = \frac{p\Tr[A]}{d}\id + (1-p)A  
\end{equation}
is the depolarizing channel of strength $p$.  
\end{restatable}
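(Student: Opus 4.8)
The plan is to evaluate the channel directly from its representation in Eq.~\eqref{eq:mc2} in terms of the second moment operator $\mct^{(2)}_{{\rm O}(d)}(\Pi_w^{\otimes 2})$ of Eq.~\eqref{eq:moments}. First I would observe that, since the basis is real, each $\Pi_w=\ketbra{w}$ is the projector onto a real one–dimensional subspace, so that as $U$ runs over ${\rm O}(d)$ with the Haar measure, $U^\dagger\ket{w}$ is distributed uniformly on the real unit sphere $S^{d-1}$. Consequently the twirl $\int_{U\sim{\rm O}(d)}U^{\dagger\otimes 2}\Pi_w^{\otimes 2}U^{\otimes 2}$ is the \emph{same} for every $w$, namely $\int_{S^{d-1}}(\ketbra{v})^{\otimes 2}\,dv$, and the sum over the $d$ elements of $\mcw$ contributes only an overall factor of $d$. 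This is also where the claimed independence from the particular choice of real basis comes from.

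The central computation is this sphere average. I would obtain it from the elementary moment identity $\mbe_{v\sim S^{d-1}}[v_i v_j v_k v_l]=(\delta_{ij}\delta_{kl}+\delta_{ik}\delta_{jl}+\delta_{il}\delta_{jk})/(d(d+2))$ (equivalently, from the fact that the commutant of ${\rm O}(d)^{\otimes 2}$, the order–two Brauer algebra, is spanned by $\id\otimes\id$, the swap $\mbs$, and the contraction $\om$ with $\ket{\Omega}=\sum_i\ket{ii}$, together with Weingarten calculus). Matching the three Kronecker–delta terms to operators on $\mch\otimes\mch$ gives
\begin{equation}
\int_{S^{d-1}}(\ketbra{v})^{\otimes 2}\,dv=\frac{\id\otimes\id+\mbs+\om}{d(d+2)}\,.
\end{equation}

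Substituting this into Eq.~\eqref{eq:mc2}, the result reduces to the three partial traces $\Tr_1[(A\otimes\id)(\id\otimes\id)]$, $\Tr_1[(A\otimes\id)\mbs]$ and $\Tr_1[(A\otimes\id)\om]$, which are standard and equal $\Tr[A]\id$, $A$ and $A^\mathsf{T}$ respectively. Collecting terms and using $d/(d(d+2))=1/(d+2)$ yields $\mcm_{{\rm O}(d),\mcw}(A)=(\Tr[A]\id+A+A^\mathsf{T})/(d+2)$, the first equality in Eq.~\eqref{eq:gmc}. The second equality follows by writing $A+A^\mathsf{T}=2A_{\rm sym}$, noting $\Tr[A_{\rm sym}]=\Tr[A]$, and comparing with the definition~\eqref{eq:depol} of $\mcd_p$: the coefficients match precisely for $p=d/(d+2)$.

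I expect the only genuine subtlety to be the second–moment computation: unlike the unitary case, the commutant of ${\rm O}(d)^{\otimes 2}$ contains the extra contraction element $\om$, and it is exactly this term that produces the $A^\mathsf{T}$ (hence the symmetrization) in the final answer; one must also keep track of the transpose convention (always in the computational basis) and of the harmless replacement $U\leftrightarrow U^\dagger$ under the Haar measure when passing between Eqs.~\eqref{eq:mc1}–\eqref{eq:mc2} and Eq.~\eqref{eq:moments}.
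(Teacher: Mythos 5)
Your proposal is correct: the twirl-then-partial-trace structure, the identification of the second-order commutant elements $\id$, $\mbs$, $\om$, and the final bookkeeping all match what the paper does, and the partial traces $\Tr_1[(A\otimes\id)\,\id]=\Tr[A]\id$, $\Tr_1[(A\otimes\id)\mbs]=A$, $\Tr_1[(A\otimes\id)\om]=A^\mathsf{T}$ are exactly the mechanism producing Eq.~\eqref{eq:gmc}. Where you differ is in how the coefficients of the twirl are obtained and in the scope of the computation. The paper proves a single Lemma valid for an \emph{arbitrary} rank-one projector, parametrized by $\alpha_w=\abs{\braket{w|w^*}}^2$: it expands the twirl in the commutant basis $\{\id,\mbs,\om\}$ and fixes the three coefficients by pairing against each basis element and solving the resulting $3\times 3$ linear system; Proposition~\ref{prop:gmc} is then the specialization $\alpha_w=1$ of the general complex-basis channel (Proposition~\ref{prop:cmc}). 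You instead exploit that, for a real $\ket{w}$, the vector $U^\dagger\ket{w}$ is uniform on the real unit sphere, so the twirl equals the sphere average of $(\ketbra{v})^{\otimes 2}$, whose coefficients drop out immediately from the standard fourth-moment identity $\mbe[v_iv_jv_kv_l]=(\delta_{ij}\delta_{kl}+\delta_{ik}\delta_{jl}+\delta_{il}\delta_{jk})/(d(d+2))$ — no linear system needed, and the basis-independence and the overall factor of $d$ from summing over $\mcw$ become transparent. The trade-off is that your uniformity argument is specific to real basis vectors, so it proves Proposition~\ref{prop:gmc} cleanly but does not extend as written to the complex-basis case that the paper also needs; the paper's trace-pairing route is slightly heavier here but yields the general $\alpha$-dependent channel (and the analogous third-moment lemma) with the same machinery. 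Your closing remarks on the $U\leftrightarrow U^\dagger$ replacement under the Haar measure and on the transpose convention are the right subtleties to flag.
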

We refer the reader to the Appendix for a proof of this proposition, as well as that of all our other results. 

Notably, we can compare the result in Proposition~\ref{prop:gmc} to the channel obtained by taking $\mathcal{U}$ to be the Haar measure over ${\rm U}(d)$, which acts  as~\cite{huang2020predicting}
\begin{equation}\label{eq:gumc}
  \mcm_{{\rm U}(d),\mcw}(A)  =  \mcd_{d/(d+1)}(A)\,.
\end{equation}
Apart from a small change in the strength of the depolarization, the difference between Eqs.~\eqref{eq:gmc} and~\eqref{eq:gumc} is that in the orthogonal case only the symmetric component of the input operator is acted upon\footnote{More technically, one can view the channel as the concatenation $\mcm_{{\rm O}(d),\mcw} = \mcd_{d/(d+2)}\circ\mct_{\mbz_2}$, where $\mct_{\mbz_2}$ denotes the twirl over a representation $S$ of $\mbz_2$ on $\mcl(\mathcal{H})$ that maps $A\mapsto A^\mathsf{T}$. A projector onto its image, this twirl annihilates the antisymmetric components of $A$.}.
Hence,  Proposition~\ref{prop:gmc} immediately implies that orthogonal shadows cannot yield unbiased estimates of expectation values of operators $A$ which do not satisfy $A=A^\mathsf{T}$, as the asymmetric component $(A-A^\mathsf{T})/2$ of $A$ is  annihilated by the channel, and thus lives outside its visible space (Eq.~\eqref{eq:vis}). That is, 
\begin{align}\label{eq:vis-o}
\mathsf{VisibleSpace}\left({\rm O}(d), \mcw\right) &= \{A\in\mathcal{L}(\mathcal{H})\,|\, A=A^\mathsf{T}\}\nonumber\\
&=\mathbb{C}\mathfrak{o}(d)\,,
\end{align}
where $\mathbb{C}\mathfrak{o}(d)$ denotes the complexification of the orthogonal algebra $\mathfrak{o}(d)$.  As such, we find 
\begin{equation*}
\frac{{\rm dim }(\mathsf{VisibleSpace}\left({\rm O}(d), \mcw\right)) }{{\rm dim }(\mcl)}=\frac{d(d-1)/2}{d^2}\to 1/2\,,
\end{equation*} 
which shows that the dimension of the visible space is roughly half of that of the total operator space. 

For operators within the visible space of Eq.~\eqref{eq:vis-o}, we can obtain a modest advantage over global unitaries,  as quantified in the following result:
\begin{restatable}{prop}{gv}\label{prop:gvar}
The variance of the estimators of real classical shadows with global orthogonal  evolution is given by
\small
\begin{align*}
  {\rm Var}(\hat{a}) &=\frac{d+2}{2d+8} \left(\Tr[A_{{\rm sym}; 0}^2] + 4\Tr[\rho A_{{\rm sym}; 0}^2]\right)- \Tr[A_{{\rm sym}; 0} \rho]^2 \,,
  \end{align*} 
  \normalsize
  where 
  \begin{equation*}
      A_{{\rm sym}; 0} = \frac{A+A^\mathsf{T}}{2} - \frac{\Tr[A]}{d}\id
  \end{equation*}
  is the traceless component of the symmetric component of $A$.
\end{restatable}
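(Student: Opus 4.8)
The plan is to combine the explicit measurement channel of Proposition~\ref{prop:gmc} with the third moment of the global orthogonal group, in three steps: invert the channel; evaluate $\mct^{(3)}_{{\rm O}(d)}$; and carry out the contractions defining $\mathbb{E}[\hat{a}^2]$ and $\mathbb{E}[\hat{a}]^2$. First I would observe that $\mcm_{{\rm O}(d),\mcw}$ is self-adjoint for the Hilbert--Schmidt inner product (being the depolarizing channel composed with the transpose twirl), fixes $\id$, and annihilates every antisymmetric operator; hence its pseudo-inverse annihilates antisymmetric operators, fixes $\id$, and sends $A\mapsto\tfrac{\Tr[A]}{d}\id+\tfrac{d+2}{2}A_{{\rm sym};0}$, which one checks directly against Eq.~\eqref{eq:gmc}. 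Decomposing $A=A_{{\rm sym};0}+\tfrac{\Tr[A]}{d}\id+\tfrac12(A-A^\mathsf{T})$ and using $\Tr[\hat{\rho}]=1$ together with self-adjointness of the pseudo-inverse, the estimator $\hat{a}=\Tr[A\hat{\rho}]$ equals $\Tr[A_{{\rm sym};0}\hat{\rho}]+\Tr[A]/d$, so ${\rm Var}(\hat{a})$ depends only on $A_{{\rm sym};0}$. I would therefore assume henceforth that $A$ is traceless and symmetric (so $A=A_{{\rm sym};0}$), whence $\mcm_{{\rm O}(d),\mcw}^{-1}(A)=\tfrac{d+2}{2}A$, and, since $A$ lies in the visible space~\eqref{eq:vis-o}, $\mathbb{E}[\hat{a}]=\Tr[A\rho]$.

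Next I would use Eq.~\eqref{eq:ea2} and invariance of the Haar measure to write $\mathbb{E}[\hat{a}^2]=\tfrac{(d+2)^2}{4}\sum_{w\in\mcw}\Tr\big[(\rho\otimes A\otimes A)\,\mct^{(3)}_{{\rm O}(d)}(\Pi_w^{\otimes 3})\big]$ and then compute the moment. For a real unit vector $\ket{w}$, the orbit $\{U\ket{w}\mid U\in{\rm O}(d)\}$ is the real unit sphere $S^{d-1}$ with its uniform measure, so $\mct^{(3)}_{{\rm O}(d)}(\Pi_w^{\otimes 3})$ is independent of $w$ and equals $\mathbb{E}_{v\sim S^{d-1}}\big[\ketbra{v^{\otimes 3}}\big]$, whose matrix elements are the sixth moments of the uniform distribution on the sphere: $\tfrac{1}{d(d+2)(d+4)}$ times a sum of products of Kronecker deltas over the $15$ perfect matchings of the six tensor indices. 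Equivalently (this is what the orthogonal Weingarten calculus produces), $\mct^{(3)}_{{\rm O}(d)}(\Pi_w^{\otimes 3})=\tfrac{1}{d(d+2)(d+4)}\sum_D D$, where $D$ runs over the $15$ Brauer diagrams on three strands; summing over the $d$ basis vectors gives $\sum_{w}\mct^{(3)}_{{\rm O}(d)}(\Pi_w^{\otimes 3})=\tfrac{1}{(d+2)(d+4)}\sum_D D$.

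It then remains to evaluate $\sum_D\Tr[(\rho\otimes A\otimes A)D]$, which I would organize by splitting the diagrams into the $6$ permutations in $S_3$ and the $9$ diagrams with one cup--cap pair. Each trace contracts $\rho$ and the two copies of $A$ along the corresponding matching, and using $\Tr[A]=0$, $A^\mathsf{T}=A$, the symmetry of $AA^\mathsf{T}$, and cyclicity, every term collapses to $0$, $\Tr[A^2]$, or $\Tr[\rho A^2]$ --- any diagram leaving an $A$-leg uncontracted contributes a factor $\Tr[A]=0$, and the antisymmetric part of $\rho$ never appears since $\rho$ only ever meets symmetric operators. The bookkeeping should yield $\Tr[A^2]+2\Tr[\rho A^2]$ from the permutations and $\Tr[A^2]+6\Tr[\rho A^2]$ from the cup--cap diagrams, hence $\sum_D\Tr[(\rho\otimes A\otimes A)D]=2\big(\Tr[A^2]+4\Tr[\rho A^2]\big)$ and $\mathbb{E}[\hat{a}^2]=\tfrac{d+2}{2d+8}\big(\Tr[A^2]+4\Tr[\rho A^2]\big)$. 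Subtracting $\mathbb{E}[\hat{a}]^2=\Tr[A\rho]^2$ and restoring $A=A_{{\rm sym};0}$ gives the claimed variance.

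I expect the main obstacle to be the diagrammatic bookkeeping of the last step: correctly enumerating the $15$ Brauer-diagram contractions of $\rho\otimes A\otimes A$ and tracking the integer coefficients of $\Tr[A^2]$ and $\Tr[\rho A^2]$. A useful consistency check is that the same recipe applied with $k=2$ must reproduce the channel of Proposition~\ref{prop:gmc} and, likewise, give $\mathbb{E}[\hat{a}]^2=\Tr[A\rho]^2$; this pins down both the normalization $\tfrac{1}{d(d+2)(d+4)}$ and the fact that every Brauer diagram enters $\mct^{(3)}_{{\rm O}(d)}(\Pi_w^{\otimes 3})$ with the same weight --- the structural feature that makes the calculation tractable.
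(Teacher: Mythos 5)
Your proposal is correct and follows essentially the same route as the paper: reduce the estimator to the traceless symmetric part via the (self-adjoint) pseudo-inverse, reduce $\mathbb{E}[\hat a^2]$ to $\frac{(d+2)^2}{4}\sum_w\Tr[(\rho\otimes A_{{\rm sym};0}\otimes A_{{\rm sym};0})\,\mct^{(3)}_{{\rm O}(d)}(\Pi_w^{\otimes 3})]$, and evaluate the $15$ Brauer contractions, whose count ($2$ copies of $\Tr[A_{{\rm sym};0}^2]$, $8$ of $\Tr[\rho A_{{\rm sym};0}^2]$, $5$ zeros) matches the paper's. The only deviation is that you obtain the third moment from the sixth moments of a uniform real unit vector (equal weight $\tfrac{1}{d(d+2)(d+4)}$ on all $15$ diagrams), whereas the paper's Lemma~\ref{lem:gw3} solves a linear system in the commutant for general reality $\alpha_w$; the two agree at $\alpha_w=1$, so this is a valid (slightly less general) shortcut rather than a different argument.
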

To see the potential for an  advantage over global unitaries we note that by Holder's inequality $\Tr[\rho A_{{\rm sym}; 0}^2]\leq \|\rho\|_1\|A_{{\rm sym}; 0}^2\|_\infty=\|A_{{\rm sym}; 0}\|_\infty^2$ will typically\footnote{This is explored numerically in Fig.~\ref{fig:numerics}, and discussed further in Appendix~\ref{sec:appc}.} be much smaller than $\Tr[A_{{\rm sym}; 0}^2] =\|A_{{\rm sym}; 0}\|_2^2 $. Hence,  in the limit of large $d$, Eq.~\eqref{eq:gmc} becomes 
\begin{equation}
{\rm Var}(\hat{a})\sim \frac{1}{2}\|A_{{\rm sym}; 0}\|_2^2\,,
\end{equation}
which constitutes an improvement by a factor of two over the result of global unitaries~\cite{huang2020predicting}. This result is consistent with the characterization (for symmetric operators) of $\mcm_{{\rm O}(d),\mcw}$ and $\mcm_{{\rm U}(d),\mcw}$ as being depolarizing channels, with the depolarization strength of $\mcm_{{\rm U}(d),\mcw}$ being higher. Intuitively, faced with a higher probability of the maximally mixed state being outputted, more shadows will be needed for global unitaries.  
From this result, we conclude that for the task of estimating expectation values of real observables with small $2$-norm, real classical shadows outperform  by a constant factor the standard unitary classical shadows scheme. An example of a real observable with small $2$-norm is given by real states: computing their expectation values (overlap) with respect to an unknown target state might be useful for verification and benchmarking purposes.\\

Mirroring the developments in Ref.~\cite{huang2020predicting} we next turn to the case of \textit{local} orthogonal operators, i.e., random single  orthogonal unitaries applied to each qubit independently. Here we find that the following proposition holds.

\begin{figure}
    \centering
    \includegraphics[width=\linewidth]{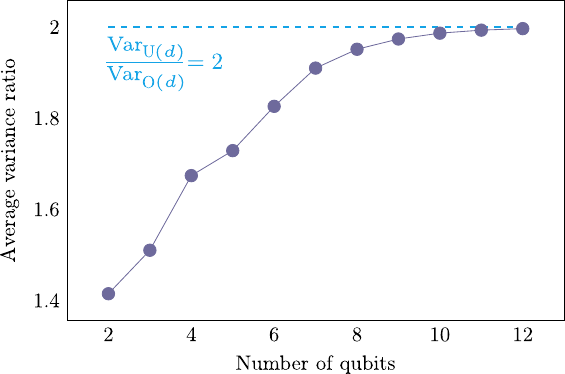}
    \caption{The ratio of the estimator variances in the global real and global unitary Clifford protocols, for a random initial state and random (symmetric) observables, averaged over 500 instances. The observable distribution is given by sampling matrices with real and imaginary components entry-wise uniformly sampled in $[-1, 1]$, then adding the adjoint and normalizing by the Schatten 2-norm. We note that different distributions may lead to  different behaviour of the ratio, but that nonetheless we will always have ${\rm Var}_{{\rm O}(d)}\leq {\rm Var}_{{\rm U}(d)}$ (see Appendix~\ref{sec:appc}). }
    \label{fig:numerics}
\end{figure}

\begin{restatable}{prop}{lmc}\label{prop:lmc}
The measurement channel resulting from real classical shadows with local orthogonal evolution is given by
\begin{equation}
  \mcm_{{\rm O}(2)^{\otimes n},\mcc}\left(A\right)=\mcd_{1/2}^{\otimes n}\big(A_{\rm L.S.}\big)\,.
\end{equation}
Here $\mcc$ is the computational basis  and $A_{\rm L.S.}$ is the projection of $A$ onto the subspace of ``locally symmetric''  operators. That is,   if
  \begin{equation}
  A= \sum_{i_1,\ldots,i_n\in\{I,X,Y,Z\}}a_{i_1,\ldots,i_n}A_{i_1}\otimes\ldots\otimes A_{i_n}\,,
  \end{equation}
  then
    \begin{equation}
    A_{\rm L.S.} = \sum_{i_1,\ldots,i_n\in\{I,X,Z\}}a_{i_1,\ldots,i_n}A_{i_1}\otimes\ldots\otimes A_{i_n}.
  \end{equation}
The classical shadows are of the form 
  \begin{equation}
    \hat{\rho} = \bigotimes_{j=1}^n \left(2U_j^\dagger |\hat{b}_j\rangle\langle\hat{b}_j| U_j - \frac{1}{2}\id\right).
  \end{equation}
\end{restatable}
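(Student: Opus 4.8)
The plan is to reduce the statement to a single-qubit calculation using the product structure of the protocol, to pin down the resulting single-qubit measurement channel by a covariance-plus-Schur argument, and then to invert it on the visible space. \emph{Factorization.} A local orthogonal evolution has the form $U=\bigotimes_{j=1}^n U_j$ with the $U_j$ independent and Haar on ${\rm O}(2)$, the computational-basis projectors factorize as $\Pi_w=\bigotimes_{j=1}^n\Pi_{w_j}$ for $w=(w_1,\dots,w_n)\in\{0,1\}^n$, and the trace appearing in Eq.~\eqref{eq:mc1} is multiplicative across tensor factors. Hence, on product operators, $\mcm_{{\rm O}(2)^{\ot n},\mcc}(A^{(1)}\ot\cdots\ot A^{(n)})=\bigotimes_{j=1}^n\mcm_{{\rm O}(2),\mcc_1}(A^{(j)})$ where $\mcc_1=\{\ket 0,\ket 1\}$, and the general case follows by linearity. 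So everything comes down to identifying the single-qubit channel $\mcm_{{\rm O}(2),\mcc_1}$ on $\mcl(\mbc^2)=\mathrm{span}_{\mbc}\{I,X,Y,Z\}$.

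\emph{The single-qubit channel.} A Haar change of variables $U\mapsto UV$ in Eq.~\eqref{eq:mc1} shows that $\mcm_{{\rm O}(2),\mcc_1}$ is ${\rm O}(2)$-covariant, $\mcm_{{\rm O}(2),\mcc_1}(V^{\mathsf T}AV)=V^{\mathsf T}\mcm_{{\rm O}(2),\mcc_1}(A)V$ for all $V\in{\rm O}(2)$ (the sum over outcomes is left untouched, so the basis itself need not be ${\rm O}(2)$-invariant). Under $A\mapsto V^{\mathsf T}AV$ the operator space decomposes as $\mbc I\oplus\mbc Y\oplus\mathrm{span}_{\mbc}\{X,Z\}$ into pairwise-inequivalent irreducibles (trivial; the sign representation, since $Y\mapsto\det(V)\,Y$; and the standard $2$-dimensional representation, which stays irreducible over $\mbc$), so by Schur's lemma $\mcm_{{\rm O}(2),\mcc_1}$ is block-diagonal and acts as a scalar on each block. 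I would fix the three scalars by: $\mcm_{{\rm O}(2),\mcc_1}(I)=\sum_b\int_U U^{\dg}\Pi_bU=\int_U U^{\dg}\id\,U=\id$ (scalar $1$ on $\mbc I$); the remark that for real $U$ and real $\Pi_b$ the operator $U^{\dg}\Pi_bU$ is real symmetric, so $\Tr[Y\,U^{\dg}\Pi_bU]=0$ and hence $\mcm_{{\rm O}(2),\mcc_1}(Y)=0$ (scalar $0$ on $\mbc Y$); and, writing $U^{\dg}ZU=\cos(2\theta_U)Z\pm\sin(2\theta_U)X$ with $\theta_U$ uniform, the short computation $\Tr[Z\,\mcm_{{\rm O}(2),\mcc_1}(Z)]=\sum_b\int_U\langle b|UZU^{\dg}|b\rangle^2=2\,\mbe[\cos^2 2\theta_U]=1$, i.e.\ scalar $1/2$ on $\mathrm{span}\{X,Z\}$. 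This gives $\mcm_{{\rm O}(2),\mcc_1}=\mcd_{1/2}\circ(\text{projection onto }\mathrm{span}\{I,X,Z\})$, which a direct check against Eq.~\eqref{eq:depol} identifies with $A\mapsto\mcd_{1/2}(A_{\rm L.S.})$ for $d=2$. Tensoring over the qubits, and noting that $A_{\rm L.S.}$ is precisely the image of the qubitwise projection that deletes $Y$-components, yields $\mcm_{{\rm O}(2)^{\ot n},\mcc}(A)=\mcd_{1/2}^{\ot n}(A_{\rm L.S.})$; for $n=1$ this is the $d=2$ case of Proposition~\ref{prop:gmc}, consistent with the fact that on one qubit the symmetric and locally symmetric operators coincide.

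\emph{The shadows.} Since the channel factorizes, so does its pseudo-inverse, the visible space being the tensor product of the single-qubit visible spaces $\mathrm{span}\{I,X,Z\}$; on that subspace the inverse of $\mcd_{1/2}$ is $B\mapsto 2B-\tfrac12\Tr[B]\,\id$ (identity on the trace part, multiplication by $2$ on the traceless part). For any real Clifford $U_j$ the operator $U_j^{\dg}\Pi_{b_j}U_j$ is real symmetric, hence lies in $\mathrm{span}\{I,X,Z\}$, and has unit trace, so $\mcm_{{\rm O}(2),\mcc_1}^{-1}(U_j^{\dg}\Pi_{b_j}U_j)=2U_j^{\dg}\Pi_{b_j}U_j-\tfrac12\id$; tensoring over $j$ gives the claimed $\hat\rho=\bigotimes_{j=1}^n\big(2U_j^{\dg}|\hat{b}_j\rangle\langle\hat{b}_j|U_j-\tfrac12\id\big)$.

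The only genuine calculation is the evaluation $\mbe[\cos^2 2\theta_U]=1/2$; everything else is tensor-product bookkeeping together with the observation that a real $U^{\dg}\Pi_bU$ has no $Y$-component. The points that need a little care — and the closest thing to an obstacle — are checking that the qubitwise pseudo-inverse really is the pseudo-inverse of the full channel (true because the full visible space is the tensor product of the single-qubit ones), and the passage between Haar ${\rm O}(2)$ and the discrete real single-qubit Clifford group, which is justified by the orthogonal $3$-design property~\cite{hashagen2018real} (only second moments enter here).
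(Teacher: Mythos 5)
Your proof is correct, and its skeleton --- factorize the channel over the qubits, identify the single-qubit channel, then invert qubit-wise to obtain the shadows --- is the same as the paper's. The genuine difference is how the single-qubit channel is obtained: the paper simply specializes Proposition~\ref{prop:gmc} to $d=2$ (so it ultimately rests on the $k=2$ commutant computation of Lemma~\ref{lem:gw2}), whereas you re-derive it from scratch via symmetry: Haar invariance gives ${\rm O}(2)$-covariance of the channel, the decomposition $\mbc\,\id\oplus\mbc\, Y\oplus{\rm span}_{\mbc}\{X,Z\}$ into pairwise inequivalent irreducibles together with Schur's lemma forces the channel to act as a scalar on each block, and the three scalars are fixed by $\mcm(\id)=\id$, by the reality argument annihilating $Y$, and by the $\mathbb{E}[\cos^2 2\theta]=1/2$ computation on the $\{X,Z\}$ block. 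This buys a self-contained and arguably more conceptual single-qubit argument (it makes transparent why $Y$ is killed and where the factor $1/2$ comes from) at the cost of a short representation-theoretic detour; the paper's route is shorter once Proposition~\ref{prop:gmc} is available and treats all $d$ uniformly. Your remaining steps --- the qubit-wise pseudo-inverse being the pseudo-inverse of the full channel because the visible space is the tensor product of the single-qubit spaces ${\rm span}\{\id,X,Z\}$, the observation that $U_j^\dagger \Pi_{b_j} U_j$ is real symmetric with unit trace, and the remark that only low moments are needed so real Cliffords suffice --- coincide with the paper's Eq.~\eqref{eq:localoinv} and the surrounding discussion. One cosmetic point: the two-dimensional block carries the representation in which a rotation by $\theta$ acts as a rotation by $2\theta$, not literally the defining representation of ${\rm O}(2)$; this does not affect the irreducibility or the inequivalences your Schur argument needs.
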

It is again interesting to compare the results of Proposition~\ref{prop:lmc} to the channel obtained by acting with the local unitaries, which takes the form~\cite{huang2020predicting}
\begin{equation}
    \mcm_{{\rm U}(2)^{\otimes n},\mcc}(A) = \mcd_{2/3}^{\otimes n}(A).
\end{equation}
The contrast between the two local channels is qualitatively analogous to that in the global case, with the orthogonal channel acting as a weaker depolarization channel on a smaller visible space. The effect here is much stronger, however, with the relative difference in the depolarization strength being much higher. This is matched by a corresponding decrease in the ratio of the dimension of the visible space of the local orthogonal channels to that of the full operator space. Indeed, one finds that 
\begin{equation*}
\frac{{\rm dim }(\mathsf{VisibleSpace}\left({\rm O}(2)^{\otimes n}, \mcc\right) )}{{\rm dim }(\mcl)}=(3/4)^n\to 0\,,
\end{equation*}
where
\begin{align}
    \mathsf{VisibleSpace}\left({\rm O}(2)^{\otimes n}, \mcc\right)&={\rm span}_{\mathbb{C}}\{\id,X,Z\}^{\otimes n}\nonumber\\
    &=\bigotimes_{j=1}^n \mathbb{C}\mathfrak{o}(2)\,.
\end{align}
Above, each $\mathbb{C}\mathfrak{o}(2)$ denotes the complexification of the local orthogonal algebra on the $j$-th qubit. 

If -- despite its smallness -- the observables in which one is interested  live within $\mathsf{VisibleSpace}\left({\rm O}(2)^{\otimes n}, \mcc\right)$, dramatic improvements in sample-complexity are possible. For example we have:
\begin{restatable}{crllr}{lp}\label{prop:lp}
Under the real local orthogonal channel, the variance of a weight-$k$ Pauli string containing only $I,\ X$ and $Z$ is upper bounded by $2^k$. 
\end{restatable}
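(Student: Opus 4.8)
The plan is to compute the variance directly from Eq.~\eqref{eq:ea2} and Eq.~\eqref{eq:e2a}, specialized to the local orthogonal setting of Proposition~\ref{prop:lmc}, and then bound the result. Let $P = P_{i_1}\otimes\cdots\otimes P_{i_n}$ be the weight-$k$ Pauli string, where each $P_{i_j}\in\{\id,X,Z\}$ and exactly $k$ of the $i_j$ are non-identity. Since $P$ lies in $\mathsf{VisibleSpace}({\rm O}(2)^{\otimes n},\mcc)=\bigotimes_j\mathbb{C}\mathfrak{o}(2)$, Proposition~\ref{prop:lmc} gives $\mcm^{-1}_{{\rm O}(2)^{\otimes n},\mcc}(P) = (\mcd_{1/2}^{-1})^{\otimes n}(P)$, and $\mcd_{1/2}^{-1}$ acts on a single-qubit traceless Pauli by multiplication by $2$ (from $\mcd_p(A)=(1-p)A$ on traceless $A$ with $p=1/2$), so $\mcm^{-1}(P) = 2^k\, P$ on the non-identity tensor factors. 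The key simplification is that everything factorizes over qubits: the third moment $\int_{U\sim{\rm O}(2)^{\otimes n}}U^{\otimes 3}\Pi_w^{\otimes 3}U^{\dagger\otimes 3}$ is a tensor product of single-qubit integrals, so $\mbe[\hat{o}^2]$ is a product over the $n$ qubits of single-qubit factors, and likewise for $\mbe[\hat{o}]^2$.

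First I would evaluate the single-qubit building blocks. For each qubit $j$, define the single-qubit channel contribution by contracting $\rho_j$ (the marginal, though actually the relevant object is the reduced operator obtained after tracing against the other factors) against the appropriate single-qubit third-moment tensor with the local shadow $2U_j^\dagger|\hat b_j\rangle\langle\hat b_j|U_j - \tfrac12\id$ inserted in two of the three legs. For a qubit on which $P$ acts as identity, the factor reduces to the normalization $\Tr[\rho_j]=1$-type term; for a qubit on which $P$ acts as $X$ or $Z$, one gets a factor involving the single-qubit orthogonal 3-design average of $\langle\hat b|U P U^\dagger|\hat b\rangle^2$-type quantities. The cleanest route is to note that for a single qubit ${\rm O}(2)$ shadows reproduce the known single-qubit formulas with the traceless-Pauli factor $2$ (rather than $3$), and to show the per-qubit second-moment-of-estimator contribution on a non-identity site is at most $2$, while on identity sites it is exactly $1$; multiplying gives $\mbe[\hat o^2]\le 2^k$. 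Then since $\mbe[\hat o]^2 = \Tr[\rho P]^2 \ge 0$ (using that $P$ is in the visible space so the estimator is unbiased), we get ${\rm Var}(\hat o) = \mbe[\hat o^2]-\Tr[\rho P]^2 \le \mbe[\hat o^2]\le 2^k$.

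For the per-qubit bound, I would use that the single-qubit orthogonal group ${\rm O}(2)$ acts on the Bloch sphere by rotations and reflections in the $XZ$-plane together with $Y\mapsto -Y$, and the computational-basis measurement picks out the $Z$-axis; an explicit integral (or invoking that ${\rm Cl}_1\cap{\rm O}(2)$ is an orthogonal 3-design, so the average equals the Haar ${\rm O}(2)$ average) shows that the relevant single-site quantity equals exactly $2$ for $P_{i_j}\in\{X,Z\}$, matching the depolarizing-inverse factor. The main obstacle I anticipate is purely bookkeeping: carefully disentangling which single-qubit operator sits in which of the three tensor legs of the third moment in Eq.~\eqref{eq:ea2} — one leg carries $\rho$, two carry $\mcm^{-1}(P)$ — and confirming that after the local average the cross-terms between the $2U^\dagger|\hat b\rangle\langle\hat b|U$ and $-\tfrac12\id$ pieces of the shadow combine to give exactly the claimed single-site value rather than something larger, so that the product over the $k$ non-trivial sites is genuinely $\le 2^k$. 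Everything else is a routine factorized computation.
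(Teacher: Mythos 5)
Your proposal is correct and follows essentially the same route as the paper's proof: invert the channel to get $\mcm^{-1}(P)=2^kP$, exploit the tensor-product factorization of the single-qubit ${\rm O}(2)$ third moments (identity sites giving $1$, non-identity sites giving $4\times\tfrac12=2$), and drop the nonnegative $\mbe[\hat o]^2$ term to conclude ${\rm Var}[\hat o]\le 2^k$. The only difference is that the paper carries out the per-site evaluation explicitly via the $d=2$ case of its third-moment lemma over the Brauer-algebra commutant, which is exactly the ``explicit integral'' you defer.
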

\noindent
Here \textit{weight} denotes the number of qubits which are acted upon non-trivially.
In fact, one can see that the bound of $2^k$  is  \emph{optimal}. Specifically, assuming $k \le O(\log(n))$, any (possibly adaptive) single-copy protocol that estimates $\Tr(P \rho)$ to precision $\varepsilon$  with constant success probability for all $k$-local, $n$-qubit Pauli strings containing only $I$, $X$, and $Z$ requires at least $\Omega\left(\frac{2^k}{\varepsilon^2}\right)$ copies of $\rho$. This result follows by adapting the proof strategy from Theorem 26 in Ref.~\cite{king2024triply}, where in their ternary tree construction, we restrict to Pauli observables containing only $X$ and $Z$. \\

In a similar spirit, one can also show: 
\begin{restatable}{crllr}{lo}\label{prop:lo}
Under the real local orthogonal channel, the variance of a $k$-local operator which decomposes into Pauli strings containing only $I,\ X$ and $Z$  is upper bounded by  $3^k$. 
\end{restatable}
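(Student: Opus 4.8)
The plan is to leverage Corollary~\ref{prop:lp} together with the tensor-product structure of the variance under the local orthogonal channel. First I would write the $k$-local operator as $O = \sum_{P} c_P P$, where the sum runs over Pauli strings $P$ containing only $I$, $X$, and $Z$, all supported within a fixed set of $k$ qubits. By Proposition~\ref{prop:lmc}, such an $O$ lies entirely within $\mathsf{VisibleSpace}\left({\rm O}(2)^{\otimes n}, \mcc\right)$, so all the estimators are unbiased, and the relevant quantity is $\mbe[\hat{o}^2]$, which (dropping the non-negative $-\mbe[\hat o]^2$ term) we bound from above. The key structural fact I would establish is that for the local orthogonal channel the second-moment quantity $\mbe[\hat{o}^2]$ factorizes as a product over the $n$ qubits, since $\mcm_{{\rm O}(2)^{\otimes n},\mcc}^{-1} = (\mcd_{1/2}^{-1})^{\otimes n}$ and the third moment $\mct^{(3)}_{{\rm O}(2)^{\otimes n}}$ is a tensor power of the single-qubit third moment. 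On qubits outside the support, the contribution is trivial (a factor of $1$), so effectively we work on the $k$-qubit support.

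Next I would bound the variance of $O$ in terms of the variances and cross-correlations of the individual Pauli strings. Following the standard manipulation (as in Ref.~\cite{huang2020predicting}), on each qubit $j$ in the support the single-qubit second-moment operator is diagonal in the basis $\{I, X, Z\}$ with entries that are $1$ on $I$ and some constant $v \le 2$ on $X$ and $Z$ — this is exactly the per-qubit factor that, multiplied across a weight-$k$ string, yields the $2^k$ bound of Corollary~\ref{prop:lp}. For a general $k$-local $O$, the cross terms $P \ne P'$ either vanish (when $P$ and $P'$ differ on some qubit, by orthogonality of $\{I,X,Z\}$ under the single-qubit channel's second moment) or contribute only when $P = P'$ on the support, so $\mbe[\hat{o}^2] = \sum_P |c_P|^2 \cdot (\text{per-string factor} \le 2^{k})$ up to the $\rho$-dependent pieces. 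The cleanest route is then to bound $\mbe[\hat{o}^2] \le \max_P \|O_{\rm restricted\ to\ support\ of\ }P\|$-type quantities; more concretely, I would use that $\mbe[\hat o^2] \le \| \sum_P c_P P \|$-weighted combination and invoke that each qubit contributes a factor $\le 3$ when one accounts for the $I$ component as well as the $X$ and $Z$ components appearing in the decomposition (the ``$3$'' being $|\{I,X,Z\}|$), giving the claimed $3^k$.

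The main obstacle I anticipate is handling the cross terms and the $\rho$-dependence carefully: unlike the single-Pauli-string case, $\mbe[\hat o^2]$ for a sum of strings involves terms $\Tr[\rho\, \mcm^{-1}(P)\,\mcm^{-1}(P')]$ and one must argue these are controlled by $3^k \sum_P |c_P|^2$ (or by $3^k$ under an appropriate normalization of $O$, e.g.\ operator norm or the convention that the relevant combination of coefficients is bounded). I would resolve this by passing to the tensor-product bound qubit-by-qubit: on each support qubit the operator $\mcm^{-1}$ expands $X \mapsto 3X$, $Z \mapsto 3Z$, $I \mapsto I$ (the single-qubit inverse depolarizing channel $\mcd_{1/2}^{-1}$), and then a triangle-inequality / Cauchy-Schwarz argument over the at-most-$3$ distinct single-qubit operators per site accumulates a factor of $3$ per qubit in the support, for $3^k$ total, with the $\rho$ expectation values each bounded by $1$. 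The bookkeeping of which terms survive and the precise constant is the delicate part; the conceptual content is just the multiplicativity already implicit in Proposition~\ref{prop:lmc} and Corollary~\ref{prop:lp}.
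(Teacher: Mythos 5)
The central gap is your claim that the cross terms vanish unless $P=P'$ agree on the support: this is false, and handling the surviving cross terms is the entire difficulty of the corollary. The paper first proves a dedicated lemma (the analogue of Lemma~4 of Ref.~\cite{huang2020predicting}) showing that for $P_{\boldsymbol p},P_{\boldsymbol q}$ built from $\{\mbi,X,Z\}$ the relevant third-moment quantity equals $f(\boldsymbol p,\boldsymbol q)\Tr[\rho P_{\boldsymbol p}P_{\boldsymbol q}]$, where $f$ vanishes \emph{only} when the two strings disagree on a site where both are non-identity; if they differ merely on sites where one of them is $\mbi$, the term survives with weight $f(\boldsymbol p,\boldsymbol q)=2^{s}$ ($s$ the number of matching non-identity sites) and a state-dependent factor $\Tr[\rho P_{\boldsymbol p}P_{\boldsymbol q}]$ that need not be small. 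Hence $\mbe[\hat o^2]$ is \emph{not} of the form $\sum_{P}|c_P|^2\cdot(\text{per-string factor})$; one is left with $\big\|\sum_{\boldsymbol p,\boldsymbol q}a_{\boldsymbol p}a_{\boldsymbol q}f(\boldsymbol p,\boldsymbol q)P_{\boldsymbol p}P_{\boldsymbol q}\big\|_\infty$, and the paper controls this by a genuinely combinatorial step your sketch does not supply: writing $f(\boldsymbol p,\boldsymbol q)=2^{|\boldsymbol p|+|\boldsymbol q|-k}\,d(\boldsymbol p,\boldsymbol q)$ with $d(\boldsymbol p,\boldsymbol q)$ the number of strings $\boldsymbol s\in\{X,Z\}^k$ dominating both, regrouping the double sum as $2^{-k}\sum_{\boldsymbol s}\bigl(\sum_{\boldsymbol q\,\triangleright\,\boldsymbol s}2^{|\boldsymbol q|}a_{\boldsymbol q}P_{\boldsymbol q}\bigr)^2$, and then using the triangle inequality, the identity $\sum_{\boldsymbol q\,\triangleright\,\boldsymbol s}2^{|\boldsymbol q|}=\sum_{j}\binom{k}{j}2^j=3^k$, and Cauchy--Schwarz to reach ${\rm Var}[\hat a]\le 3^k\|A\|_\infty^2$. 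You invoke ``triangle inequality / Cauchy--Schwarz'' but give no mechanism that produces this regrouping, and you explicitly defer exactly this point as ``the delicate part''---which is to say the proof is missing.

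Two further slips affect your constants. The single-qubit inverse channel here is $\mcd_{1/2}^{-1}$, which maps $X\mapsto 2X$, $Z\mapsto 2Z$ (the map $X\mapsto 3X$ belongs to the \emph{unitary} case $\mcd_{2/3}^{-1}$), so no per-site factor of $3$ comes from channel inversion, and the $3$ in $3^k$ is not $|\{I,X,Z\}|$; it emerges as the per-site factor $1+2$ in the combinatorial sum above. Finally, the statement is really the bound $3^k\|A\|_\infty^2$, so the normalization you hedge about is fixed by the operator norm---but only after the cross-term analysis, not in place of it. If the cross terms vanished as you claim, you would instead land on a bound proportional to $\sum_{\boldsymbol p}|a_{\boldsymbol p}|^2$, which is not the statement being proved.
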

In both Corollary~\ref{prop:lp} and~\ref{prop:lo} we find an exponential (with respect to $k$) improvement in variance (and hence, sample-complexity) compared to what can be achieved when sampling local unitaries (namely, $3^k$ for individual Paulis and $4^k$ for general local operators). In fact, in all four of these cases we find that the variance is proportional to the dimension of the image under the action of the sampled unitaries on the subspace of operators in which the target observables live. In the case of local Paulis this result is consistent with the previous observation~\cite{king2024triply} (in the unitary case) of the sample complexity being inversely proportional to the probability of randomly sampling an operator which diagonalizes a given target Pauli string.

\begin{figure}
    \centering
    \includegraphics[width=\linewidth]{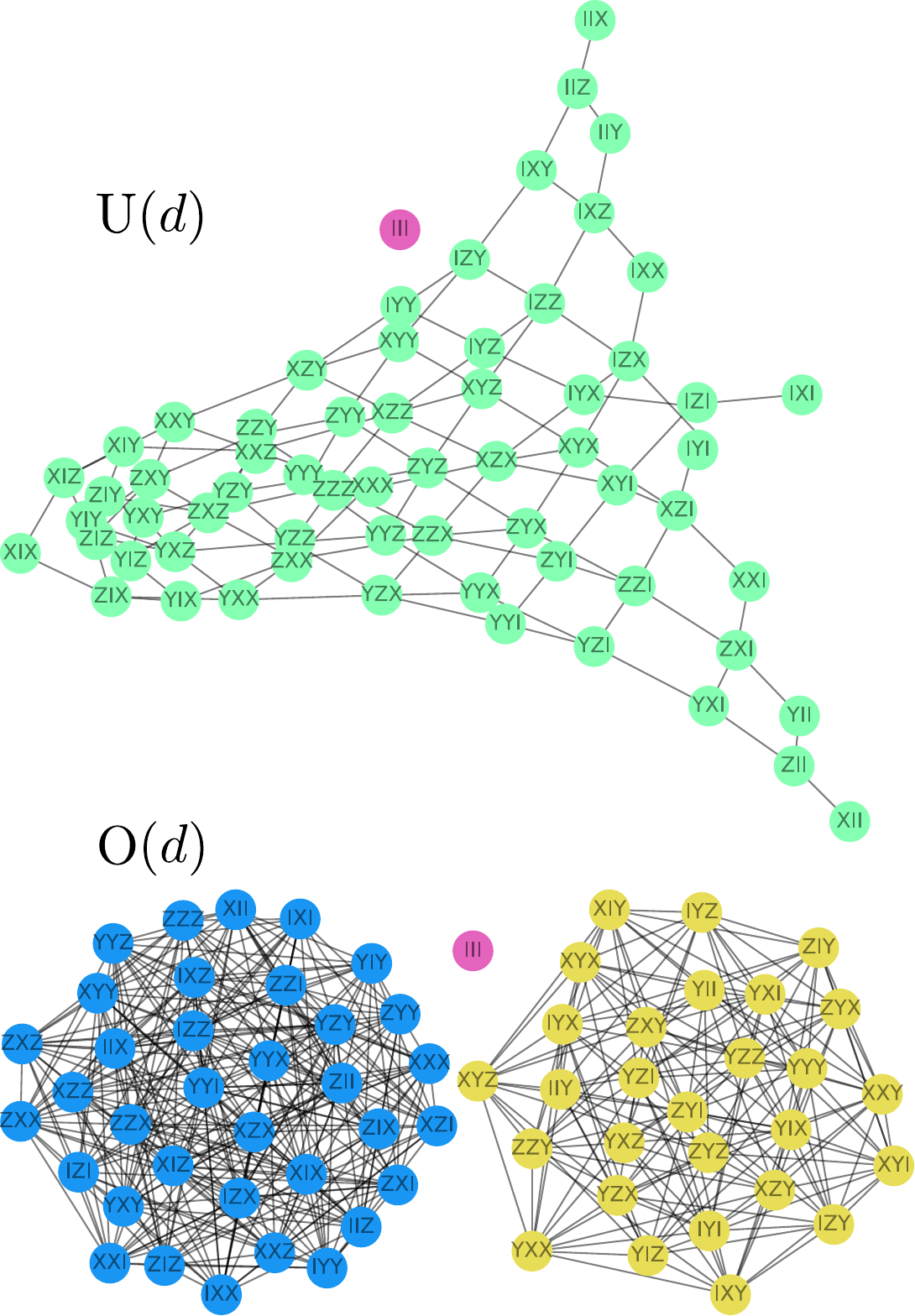}
    \caption{The ${\rm U}(d)$- and ${\rm O}(d)$-module structure of ${\rm End}\hspace{0.5mm}\mch$, as visualized by \textit{commutator graphs}~\cite{diaz2023showcasing,west2025graph}. For our purposes the important information is that each connected component of the above graphs corresponds to a different irreducible module of the adjoint action of the respective group. When $\ket{w}$ is a computational basis state, $\ketbra{w}\in{\rm span}_{\mathbb{R}} \{I,Z\}^{\otimes n}$, and $O\ketbra{w}O^{\mathsf{T}}$ evolves entirely within the blue module. On the other hand, $SH\ketbra{w}(SH)^{\dagger}\in{\rm span}_{\mathbb{R}} \{I,Y\}^{\otimes n}$, and its evolution therefore splits across the blue and gold modules, more closely approximating the unitary case (evolution across the green  module). }
    \label{fig:modules}
\end{figure}

Finally, we relax the requirement that the measurement basis be real. For each basis vector $\ket{w}\in\mcw$ we define a measure of its ``reality'', $\alpha_w=\abs{\braket{w|w^*}}^2$, along with a notion of the reality of the entire basis, $\alpha=\sum_w \alpha_w$\footnote{In the notation of the previously introduced twirl $\mct_{\mbz_2}(A)=(A+A^\mathsf{T})/2$ over $\mbz_2$, one has $\braket{\Pi_w|\mct_{\mbz_2}(\Pi_w)}_{\rm H.S.}=(1+\alpha_w)/2$.}. The previous quantity turns out to be  key for characterizing the ensuing shadow protocol as we find:
\begin{restatable}{prop}{cmc}\label{prop:cmc}
When measuring in a basis $\mcw_\alpha$ with reality $\alpha$, the global orthogonal measurement channel is given by
\begin{equation}\label{eq:cmc}
  \mcm_{{\rm O}(d),\mcw_\alpha}(A)  =  \mcd_{p_\alpha}(\widetilde{A}_\alpha)\,,
  \end{equation}
where $\mcd_{p_\alpha}$ is the
depolarizing channel of strength 
\begin{equation}
p_\alpha=\frac{d^2-\alpha}{(d-1)(d+2)}\,,
\end{equation}
and
\begin{equation}
\widetilde{A}_\alpha=\frac{(d^2-\alpha)A +(\alpha d+\alpha-2d)A^\mathsf{T} }{d(d-2+\alpha)}.
\end{equation} 
\end{restatable}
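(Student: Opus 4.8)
The plan is to follow the proof of Proposition~\ref{prop:gmc} essentially verbatim, the only change being that we retain the reality parameters $\alpha_w$ rather than specialising them to $1$. Starting from Eq.~\eqref{eq:mc2} and using $U^\dagger=U^\mathsf{T}$ for orthogonal $U$, the whole computation reduces to evaluating, for each basis vector, the second moment $P_w \defeq \int_{U\sim {\rm O}(d)} U^{\mathsf{T}\otimes 2}\,\Pi_w^{\otimes 2}\,U^{\otimes 2}$. By Haar invariance this is the image of $\Pi_w^{\otimes 2}$ under the projection onto the commutant of $\{U^{\otimes 2}\}_{U\in{\rm O}(d)}$, and that commutant — the order-two Brauer algebra — is the three-dimensional span of the identity $\mathds{1}\otimes\mathds{1}$, the swap $\mathbb{S}$, and the contraction operator $\mathbb{K}\defeq\sum_{a,b}\ket{aa}\!\bra{bb}$ (the last generator appearing precisely because ${\rm O}(d)$ preserves the symmetric bilinear form, as opposed to only the Hermitian form). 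Hence $P_w = c_1(w)\,\mathds{1}\otimes\mathds{1} + c_2(w)\,\mathbb{S} + c_3(w)\,\mathbb{K}$ for scalars to be determined.

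To pin down the coefficients I would use the standard linear-algebra step: since $P_w$ lies in the commutant and the twirl is self-adjoint with respect to the Hilbert--Schmidt inner product, $(c_1,c_2,c_3)$ solves $G\,(c_1,c_2,c_3)^{\mathsf{T}} = \big(\Tr[\Pi_w^{\otimes 2}],\ \Tr[\mathbb{S}\Pi_w^{\otimes 2}],\ \Tr[\mathbb{K}\Pi_w^{\otimes 2}]\big)^{\mathsf{T}}$, where $G$ is the Gram matrix of $\{\mathds{1}\otimes\mathds{1},\mathbb{S},\mathbb{K}\}$. A short computation gives the right-hand side as $(1,\,1,\,\alpha_w)^{\mathsf{T}}$, the decisive point being $\Tr[\mathbb{K}\,\Pi_w^{\otimes 2}] = |\langle w|w^*\rangle|^2 = \alpha_w$ — this is the single place where the reality of the measurement basis enters. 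The Gram matrix equals $d(d-1)\mathds{1}_3 + d\,J$ with $J$ the all-ones matrix, so it is inverted in closed form, yielding $c_1(w)=c_2(w)=\frac{d-\alpha_w}{d(d-1)(d+2)}$ and $c_3(w)=\frac{(d+1)\alpha_w-2}{d(d-1)(d+2)}$.

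Substituting $P_w$ back into Eq.~\eqref{eq:mc2}, I would then apply the three partial-trace identities $\Tr_1[(A\otimes\mathds{1})(\mathds{1}\otimes\mathds{1})]=\Tr[A]\mathds{1}$, $\Tr_1[(A\otimes\mathds{1})\mathbb{S}]=A$, and $\Tr_1[(A\otimes\mathds{1})\mathbb{K}]=A^{\mathsf{T}}$, and sum over the $d$ basis vectors using $\sum_w 1=d$ and $\sum_w\alpha_w=\alpha$. This produces
\begin{equation*}
\mcm_{{\rm O}(d),\mcw_\alpha}(A)=\frac{(d^2-\alpha)\big(\Tr[A]\mathds{1}+A\big)+\big((d+1)\alpha-2d\big)A^{\mathsf{T}}}{d(d-1)(d+2)}\,.
\end{equation*}
It then remains to recognise this as $\mcd_{p_\alpha}(\widetilde A_\alpha)$: one checks $1-p_\alpha=\frac{d-2+\alpha}{(d-1)(d+2)}$, that $(1-p_\alpha)\widetilde A_\alpha$ reproduces exactly the $A$ and $A^{\mathsf{T}}$ terms above, and that $\Tr[\widetilde A_\alpha]=\Tr[A]$ so that the residual piece matches $\frac{p_\alpha\Tr[A]}{d}\mathds{1}$. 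Setting $\alpha=d$ recovers Proposition~\ref{prop:gmc}, which serves as a consistency check.

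The main obstacle I anticipate is bookkeeping rather than anything conceptual: one must handle the $\mathbb{K}$ conventions consistently in two different places — in $\Tr[\mathbb{K}\Pi_w^{\otimes 2}]=\alpha_w$, which is exactly what distinguishes this calculation from both the unitary and the fully real cases, and in $\Tr_1[(A\otimes\mathds{1})\mathbb{K}]=A^{\mathsf{T}}$ — since a stray transpose or conjugation at either step corrupts the final formula. A minor caveat worth a remark is the degenerate locus $d-2+\alpha=0$ (possible only for $d=2$, $\alpha=0$), where the displayed expression for $\widetilde A_\alpha$ is formally $0/0$ while $p_\alpha=1$; there the channel is still well defined as the completely depolarising map, and the unpackaged formula for $\mcm_{{\rm O}(d),\mcw_\alpha}$ above remains valid.
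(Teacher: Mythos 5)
Your proposal is correct and follows essentially the same route as the paper: expand the second-order orthogonal twirl of $\Pi_w^{\otimes 2}$ in the commutant basis $\{\id,\mbs,\om\}$, fix the coefficients by tracing against that basis (your Gram-matrix inversion is exactly the paper's $3\times 3$ linear system in Lemma~\ref{lem:gw2}, with the same inner-product vector $(1,1,\alpha_w)$), then apply the partial-trace identities, sum over $w$, and repackage the result as $\mcd_{p_\alpha}(\widetilde A_\alpha)$. The only differences are cosmetic — you rederive the second-moment lemma inline rather than invoking it, and you add a harmless remark about the degenerate case $d=2$, $\alpha=0$ that the paper does not discuss.
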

When the basis is comprised entirely of real vectors  ($\alpha=d$) Proposition~\ref{prop:cmc} reduces to Proposition~\ref{prop:gmc}.
However, as we discuss in the Appendix, in the limit $d\to\infty$ one has
\begin{equation}
\widetilde{A}_{fd}\to \frac{A +f A^\mathsf{T} }{f+1}\,,
\end{equation}
with $f=\alpha/d ,\ 0\leq f \leq 1$. Comparing the previous equation with the result of Proposition~\ref{prop:gmc}, and recalling that unitary shadows acts as a depolarizing channel on $A$, we see that by measuring in  bases of decreasing
reality we can interpolate between the case of real and unitary shadows, 
while still evolving the unknown states via the orthogonal group. For example, as we show in the Appendix, measuring in a (predetermined, protocol-defining) random basis will yield $f\approx 2/(d+1)$, approximately reproducing unitary shadows. Alternately, measuring in the basis given by $\{S^{\otimes n}H^{\otimes n}\ket{z}\}_{z=0}^{d-1}$ with $S$ the phase gate, $H$ the Hadamard and $\{\ket{z}\}_{z=0}^{d-1}$ the computational basis, leads to $f=0$. In the limit $d\to\infty$ we can therefore  exactly reproduce unitary shadows through orthogonal evolution and a readily implementable measurement. This can be seen as an extension of the recent result that there exist states that, when twirled by the orthogonal group, produce a unitary state 3-design~\cite{schatzki2024random}.
We can gain some intuition for this result by examining the module structure of ${\rm End}\hspace{0.5mm}\mch$ with respect to both the unitary and orthogonal groups. One can show (see Fig.~\ref{fig:modules}) that under the adjoint action of ${\rm U}(d)$, ${\rm End}\hspace{0.5mm}\mch$ breaks up into a trivial irrep (spanned by the identity), and a ($d^2-1$)-dimensional irrep (the remainder of the space). On the other hand, under   the adjoint action of ${\rm O}(d)$, one finds three irreps: the identity, the symmetric traceless operators, and the anti-symmetric operators. 
In the qubit case, these latter two spaces are spanned respectively by the non-identity Pauli strings with an even (odd) number of $Y$s. For any computational basis state $\ketbra{w}\in {\rm span}_{\mathbb{R}} \{I,Z\}^{\otimes n}$, therefore, we find that orthogonal evolution takes $\ketbra{w}$ only within the even-$Y$ component (the blue irrep in Fig.~\ref{fig:modules}), far from the behaviour in the unitary case. 
The elements of the example zero-reality basis given above, contrarily, satisfy $SH\ketbra{w}(SH)^{\dagger}\in{\rm span}_{\mathbb{R}} \{I,Y\}^{\otimes n}$, and evolve roughly evenly over both the even- and odd-$Y$ modules, behaving therefore ``closer'' to the unitary case. Quantitatively, this is reflected in the fact that the latter state distribution forms a unitary state 3-design -- exactly as required for identical shadow protocols.  

One can also consider the impact of measuring in a complex basis in the case of local orthogonal shadows. Here, interestingly, the reality need not be fixed between each local basis, with individual qubits capable of being treated differently. More generally, one can imagine mixing and matching local unitary and orthogonal shadows. For example, if the observables in which one is interested in learning is a set of Pauli strings that can take any value on certain qubits but only $I, X $ and $Z$ on others (say, $k'$ of them), one could choose to evolve with local unitaries on the former, and local orthogonals on the later. This yields a reduction in sample complexity over the local unitary protocol by a factor of $(3/2)^{k'}$, while still allowing for unbiased estimates of the target observables.

\section{Discussion}
In this work we have closed a notable gap in the classical shadows literature by characterizing the performance of orthogonal  shadows, the last protocol resulting from a classical compact Lie group that was yet to be explored.  Pleasingly, beyond filling this conceptual gap, we have found that local real shadows can in fact, for the class of observables containing only $I,\ X $ and $Z$ in their Pauli decompositions, drastically outperform previously known schemes. This result has crucial implications in for example  studying properties of condensed matter and strongly-interacting systems, as here one seeks to estimate the expectation value of Hamiltonians that are generally local and real-valued~\cite{cerezo2017factorization}, or when mapping fermionic operators to qubit Hamiltonians via means of Bravyi-Kitaev transformations, which can produce real-valued operators with $\mco(\log n)$-bodyness~\cite{seeley2012bravyi}.  
In addition to the direct benefit of now being able to more efficiently learn these operators, this highlights the promise of studying further distributions, to uncover yet more classes of efficiently learnable operators. 

We emphasize that the advantages we report here refer exclusively to \textit{sample}-efficiency; an important question that remains unanswered is the extent to which real shadows can reproduce the favourable resource scaling of unitary shadows in other respects~\cite{bertoni2024shallow,zhang2024minimal,park2023resource,schuster2024random}, in particular the recently-discovered fact that they can be (approximately) implemented in logarithmic depth~\cite{schuster2024random}. Counterintuitively, it is unclear that the key fact underpinning this -- the existence of logarithmic-depth approximate unitary designs -- will be reproduced in the case of the orthogonal group. For example, it is known that no such approximate design can exist if its elements are themselves orthogonal~\cite{schuster2024random}. 

More generally, our findings support an emerging lesson of the difficulty of learning operators which live in very large spaces~\cite{ ragone2023unified,fontana2023adjoint,diaz2023showcasing,cerezo2023does,goh2023lie,bermejo2024quantum}. Most directly, in the case of local random unitary and orthogonal gates, the variances are proportional to the dimension of the space in which the operators to be measured have support. In the case of global real shadows, our asymptotic improvement of a factor of two over the case of global unitaries again reflects the relative difference between the dimensions of the corresponding target spaces. Given a class of operators that one wishes to measure but cannot be efficiently handled by any known classical shadows protocol, then, a natural suggestion is to look for a distribution whose action carries the desired operators into a space with as small a dimension as possible. For example, in the presence of symmetry, one can explicitly reduce the dimension of the visible space by focusing exclusively on symmetry respecting operators~\cite{sauvage2024classical}. In the future we expect that such tailored schemes will continue to be developed, continuing to grow the set of properties of quantum systems that may be efficiently learned.\\

\section{Acknowledgments}
The authors thank Manuel G. Algaba and Diego García-Martín for useful discussions. MW acknowledges the support of the Australian government research training program scholarship and the IBM Quantum Hub at the University of Melbourne. AAM acknowledges support by the German Federal Ministry for Education and Research (BMBF) under the project FermiQP. MW and AAM were supported by the U.S. DOE through a quantum computing program sponsored by the Los Alamos National Laboratory (LANL) Information Science \& Technology Institute. ML and MC acknowledge support by the Laboratory Directed Research and Development (LDRD) program of LANL under project numbers 20230049DR and 20230527ECR. ML was also supported by  the Center for Nonlinear Studies at LANL. MC also acknowledges initial support by from LANL ASC Beyond Moore’s Law project.

\bibliography{refs,quantum}

%merlin.mbs apsrev4-1.bst 2010-07-25 4.21a (PWD, AO, DPC) hacked
%Control: key (0)
%Control: author (0) dotless jnrlst
%Control: editor formatted (1) identically to author
%Control: production of article title (0) allowed
%Control: page (1) range
%Control: year (0) verbatim
%Control: production of eprint (0) enabled
\begin{thebibliography}{58}%
\makeatletter
\providecommand \@ifxundefined [1]{%
 \@ifx{#1\undefined}
}%
\providecommand \@ifnum [1]{%
 \ifnum #1\expandafter \@firstoftwo
 \else \expandafter \@secondoftwo
 \fi
}%
\providecommand \@ifx [1]{%
 \ifx #1\expandafter \@firstoftwo
 \else \expandafter \@secondoftwo
 \fi
}%
\providecommand \natexlab [1]{#1}%
\providecommand \enquote  [1]{``#1''}%
\providecommand \bibnamefont  [1]{#1}%
\providecommand \bibfnamefont [1]{#1}%
\providecommand \citenamefont [1]{#1}%
\providecommand \href@noop [0]{\@secondoftwo}%
\providecommand \href [0]{\begingroup \@sanitize@url \@href}%
\providecommand \@href[1]{\@@startlink{#1}\@@href}%
\providecommand \@@href[1]{\endgroup#1\@@endlink}%
\providecommand \@sanitize@url [0]{\catcode `\\12\catcode `\$12\catcode
  `\&12\catcode `\#12\catcode `\^12\catcode `\_12\catcode `\%12\relax}%
\providecommand \@@startlink[1]{}%
\providecommand \@@endlink[0]{}%
\providecommand \url  [0]{\begingroup\@sanitize@url \@url }%
\providecommand \@url [1]{\endgroup\@href {#1}{\urlprefix }}%
\providecommand \urlprefix  [0]{URL }%
\providecommand \Eprint [0]{\href }%
\providecommand \doibase [0]{http://dx.doi.org/}%
\providecommand \selectlanguage [0]{\@gobble}%
\providecommand \bibinfo  [0]{\@secondoftwo}%
\providecommand \bibfield  [0]{\@secondoftwo}%
\providecommand \translation [1]{[#1]}%
\providecommand \BibitemOpen [0]{}%
\providecommand \bibitemStop [0]{}%
\providecommand \bibitemNoStop [0]{.\EOS\space}%
\providecommand \EOS [0]{\spacefactor3000\relax}%
\providecommand \BibitemShut  [1]{\csname bibitem#1\endcsname}%
\let\auto@bib@innerbib\@empty
%</preamble>
\bibitem [{\citenamefont {Huang}\ \emph {et~al.}(2020)\citenamefont {Huang},
  \citenamefont {Kueng},\ and\ \citenamefont {Preskill}}]{huang2020predicting}%
  \BibitemOpen
  \bibfield  {author} {\bibinfo {author} {\bibfnamefont {Hsin-Yuan}\
  \bibnamefont {Huang}}, \bibinfo {author} {\bibfnamefont {Richard}\
  \bibnamefont {Kueng}}, \ and\ \bibinfo {author} {\bibfnamefont {John}\
  \bibnamefont {Preskill}},\ }\bibfield  {title} {\enquote {\bibinfo {title}
  {Predicting many properties of a quantum system from very few
  measurements},}\ }\href {\doibase 10.1038/s41567-020-0932-7} {\bibfield
  {journal} {\bibinfo  {journal} {Nature Physics}\ }\textbf {\bibinfo {volume}
  {16}},\ \bibinfo {pages} {1050--1057} (\bibinfo {year} {2020})}\BibitemShut
  {NoStop}%
\bibitem [{\citenamefont {Aaronson}(2018)}]{aaronson2018shadow}%
  \BibitemOpen
  \bibfield  {author} {\bibinfo {author} {\bibfnamefont {Scott}\ \bibnamefont
  {Aaronson}},\ }\bibfield  {title} {\enquote {\bibinfo {title} {Shadow
  tomography of quantum states},}\ }in\ \href {\doibase
  10.48550/arXiv.1711.01053} {\emph {\bibinfo {booktitle} {Proceedings of the
  50th annual ACM SIGACT symposium on theory of computing}}}\ (\bibinfo {year}
  {2018})\ pp.\ \bibinfo {pages} {325--338}\BibitemShut {NoStop}%
\bibitem [{\citenamefont {Cotler}\ and\ \citenamefont
  {Wilczek}(2020)}]{cotler2020quantum}%
  \BibitemOpen
  \bibfield  {author} {\bibinfo {author} {\bibfnamefont {Jordan}\ \bibnamefont
  {Cotler}}\ and\ \bibinfo {author} {\bibfnamefont {Frank}\ \bibnamefont
  {Wilczek}},\ }\bibfield  {title} {\enquote {\bibinfo {title} {Quantum
  overlapping tomography},}\ }\href {\doibase 10.1103/PhysRevLett.124.10040}
  {\bibfield  {journal} {\bibinfo  {journal} {Physical review letters}\
  }\textbf {\bibinfo {volume} {124}},\ \bibinfo {pages} {100401} (\bibinfo
  {year} {2020})}\BibitemShut {NoStop}%
\bibitem [{\citenamefont {Sugiyama}\ \emph {et~al.}(2013)\citenamefont
  {Sugiyama}, \citenamefont {Turner},\ and\ \citenamefont
  {Murao}}]{sugiyama2013precision}%
  \BibitemOpen
  \bibfield  {author} {\bibinfo {author} {\bibfnamefont {Takanori}\
  \bibnamefont {Sugiyama}}, \bibinfo {author} {\bibfnamefont {Peter~S}\
  \bibnamefont {Turner}}, \ and\ \bibinfo {author} {\bibfnamefont {Mio}\
  \bibnamefont {Murao}},\ }\bibfield  {title} {\enquote {\bibinfo {title}
  {Precision-guaranteed quantum tomography},}\ }\href {\doibase
  10.1103/PhysRevLett.111.160406} {\bibfield  {journal} {\bibinfo  {journal}
  {Physical Review Letters}\ }\textbf {\bibinfo {volume} {111}},\ \bibinfo
  {pages} {160406} (\bibinfo {year} {2013})}\BibitemShut {NoStop}%
\bibitem [{\citenamefont {Elben}\ \emph {et~al.}(2020)\citenamefont {Elben},
  \citenamefont {Kueng}, \citenamefont {Huang}, \citenamefont {van Bijnen},
  \citenamefont {Kokail}, \citenamefont {Dalmonte}, \citenamefont {Calabrese},
  \citenamefont {Kraus}, \citenamefont {Preskill}, \citenamefont {Zoller} \emph
  {et~al.}}]{elben2020mixed}%
  \BibitemOpen
  \bibfield  {author} {\bibinfo {author} {\bibfnamefont {Andreas}\ \bibnamefont
  {Elben}}, \bibinfo {author} {\bibfnamefont {Richard}\ \bibnamefont {Kueng}},
  \bibinfo {author} {\bibfnamefont {Hsin-Yuan}\ \bibnamefont {Huang}}, \bibinfo
  {author} {\bibfnamefont {Rick}\ \bibnamefont {van Bijnen}}, \bibinfo {author}
  {\bibfnamefont {Christian}\ \bibnamefont {Kokail}}, \bibinfo {author}
  {\bibfnamefont {Marcello}\ \bibnamefont {Dalmonte}}, \bibinfo {author}
  {\bibfnamefont {Pasquale}\ \bibnamefont {Calabrese}}, \bibinfo {author}
  {\bibfnamefont {Barbara}\ \bibnamefont {Kraus}}, \bibinfo {author}
  {\bibfnamefont {John}\ \bibnamefont {Preskill}}, \bibinfo {author}
  {\bibfnamefont {Peter}\ \bibnamefont {Zoller}},  \emph {et~al.},\ }\bibfield
  {title} {\enquote {\bibinfo {title} {Mixed-state entanglement from local
  randomized measurements},}\ }\href {\doibase 10.1103/PhysRevLett.125.200501}
  {\bibfield  {journal} {\bibinfo  {journal} {Physical Review Letters}\
  }\textbf {\bibinfo {volume} {125}},\ \bibinfo {pages} {200501} (\bibinfo
  {year} {2020})}\BibitemShut {NoStop}%
\bibitem [{\citenamefont {Rath}\ \emph {et~al.}(2021)\citenamefont {Rath},
  \citenamefont {Branciard}, \citenamefont {Minguzzi},\ and\ \citenamefont
  {Vermersch}}]{rath2021quantum}%
  \BibitemOpen
  \bibfield  {author} {\bibinfo {author} {\bibfnamefont {Aniket}\ \bibnamefont
  {Rath}}, \bibinfo {author} {\bibfnamefont {Cyril}\ \bibnamefont {Branciard}},
  \bibinfo {author} {\bibfnamefont {Anna}\ \bibnamefont {Minguzzi}}, \ and\
  \bibinfo {author} {\bibfnamefont {Beno{\^\i}t}\ \bibnamefont {Vermersch}},\
  }\bibfield  {title} {\enquote {\bibinfo {title} {Quantum fisher information
  from randomized measurements},}\ }\href {\doibase
  10.1103/PhysRevLett.127.260501} {\bibfield  {journal} {\bibinfo  {journal}
  {Physical Review Letters}\ }\textbf {\bibinfo {volume} {127}},\ \bibinfo
  {pages} {260501} (\bibinfo {year} {2021})}\BibitemShut {NoStop}%
\bibitem [{\citenamefont {Zhao}\ \emph {et~al.}(2021)\citenamefont {Zhao},
  \citenamefont {Rubin},\ and\ \citenamefont {Miyake}}]{zhao2021fermionic}%
  \BibitemOpen
  \bibfield  {author} {\bibinfo {author} {\bibfnamefont {Andrew}\ \bibnamefont
  {Zhao}}, \bibinfo {author} {\bibfnamefont {Nicholas~C}\ \bibnamefont
  {Rubin}}, \ and\ \bibinfo {author} {\bibfnamefont {Akimasa}\ \bibnamefont
  {Miyake}},\ }\bibfield  {title} {\enquote {\bibinfo {title} {Fermionic
  partial tomography via classical shadows},}\ }\href {\doibase
  10.1103/PhysRevLett.127.110504} {\bibfield  {journal} {\bibinfo  {journal}
  {Physical Review Letters}\ }\textbf {\bibinfo {volume} {127}},\ \bibinfo
  {pages} {110504} (\bibinfo {year} {2021})}\BibitemShut {NoStop}%
\bibitem [{\citenamefont {Low}(2022)}]{low2022classical}%
  \BibitemOpen
  \bibfield  {author} {\bibinfo {author} {\bibfnamefont {Guang~Hao}\
  \bibnamefont {Low}},\ }\bibfield  {title} {\enquote {\bibinfo {title}
  {Classical shadows of fermions with particle number symmetry},}\ }\href
  {\doibase 10.48550/arXiv.2208.08964} {\bibfield  {journal} {\bibinfo
  {journal} {arXiv preprint arXiv:2208.08964}\ } (\bibinfo {year} {2022}),\
  10.48550/arXiv.2208.08964}\BibitemShut {NoStop}%
\bibitem [{\citenamefont {Van~Kirk}\ \emph {et~al.}(2022)\citenamefont
  {Van~Kirk}, \citenamefont {Cotler}, \citenamefont {Huang},\ and\
  \citenamefont {Lukin}}]{van2022hardware}%
  \BibitemOpen
  \bibfield  {author} {\bibinfo {author} {\bibfnamefont {Katherine}\
  \bibnamefont {Van~Kirk}}, \bibinfo {author} {\bibfnamefont {Jordan}\
  \bibnamefont {Cotler}}, \bibinfo {author} {\bibfnamefont {Hsin-Yuan}\
  \bibnamefont {Huang}}, \ and\ \bibinfo {author} {\bibfnamefont {Mikhail~D}\
  \bibnamefont {Lukin}},\ }\bibfield  {title} {\enquote {\bibinfo {title}
  {Hardware-efficient learning of quantum many-body states},}\ }\href {\doibase
  10.48550/arXiv.2212.06084} {\bibfield  {journal} {\bibinfo  {journal} {arXiv
  preprint arXiv:2212.06084}\ } (\bibinfo {year} {2022}),\
  10.48550/arXiv.2212.06084}\BibitemShut {NoStop}%
\bibitem [{\citenamefont {Wan}\ \emph {et~al.}(2023{\natexlab{a}})\citenamefont
  {Wan}, \citenamefont {Huggins}, \citenamefont {Lee},\ and\ \citenamefont
  {Babbush}}]{wan2023matchgate}%
  \BibitemOpen
  \bibfield  {author} {\bibinfo {author} {\bibfnamefont {Kianna}\ \bibnamefont
  {Wan}}, \bibinfo {author} {\bibfnamefont {William~J}\ \bibnamefont
  {Huggins}}, \bibinfo {author} {\bibfnamefont {Joonho}\ \bibnamefont {Lee}}, \
  and\ \bibinfo {author} {\bibfnamefont {Ryan}\ \bibnamefont {Babbush}},\
  }\bibfield  {title} {\enquote {\bibinfo {title} {Matchgate shadows for
  fermionic quantum simulation},}\ }\href {\doibase 10.1007/s00220-023-04844-0}
  {\bibfield  {journal} {\bibinfo  {journal} {Communications in Mathematical
  Physics}\ }\textbf {\bibinfo {volume} {404}},\ \bibinfo {pages} {629--700}
  (\bibinfo {year} {2023}{\natexlab{a}})}\BibitemShut {NoStop}%
\bibitem [{\citenamefont {Ippoliti}(2024)}]{ippoliti2024classical}%
  \BibitemOpen
  \bibfield  {author} {\bibinfo {author} {\bibfnamefont {Matteo}\ \bibnamefont
  {Ippoliti}},\ }\bibfield  {title} {\enquote {\bibinfo {title} {Classical
  shadows based on locally-entangled measurements},}\ }\href {\doibase
  10.22331/q-2024-03-21-1293} {\bibfield  {journal} {\bibinfo  {journal}
  {Quantum}\ }\textbf {\bibinfo {volume} {8}},\ \bibinfo {pages} {1293}
  (\bibinfo {year} {2024})}\BibitemShut {NoStop}%
\bibitem [{\citenamefont {Zhao}\ and\ \citenamefont
  {Miyake}(2024{\natexlab{a}})}]{zhao2024group}%
  \BibitemOpen
  \bibfield  {author} {\bibinfo {author} {\bibfnamefont {Andrew}\ \bibnamefont
  {Zhao}}\ and\ \bibinfo {author} {\bibfnamefont {Akimasa}\ \bibnamefont
  {Miyake}},\ }\bibfield  {title} {\enquote {\bibinfo {title} {Group-theoretic
  error mitigation enabled by classical shadows and symmetries},}\ }\href
  {\doibase 10.1038/s41534-024-00854-5} {\bibfield  {journal} {\bibinfo
  {journal} {npj Quantum Information}\ }\textbf {\bibinfo {volume} {10}},\
  \bibinfo {pages} {57} (\bibinfo {year} {2024}{\natexlab{a}})}\BibitemShut
  {NoStop}%
\bibitem [{\citenamefont {King}\ \emph {et~al.}(2024)\citenamefont {King},
  \citenamefont {Gosset}, \citenamefont {Kothari},\ and\ \citenamefont
  {Babbush}}]{king2024triply}%
  \BibitemOpen
  \bibfield  {author} {\bibinfo {author} {\bibfnamefont {Robbie}\ \bibnamefont
  {King}}, \bibinfo {author} {\bibfnamefont {David}\ \bibnamefont {Gosset}},
  \bibinfo {author} {\bibfnamefont {Robin}\ \bibnamefont {Kothari}}, \ and\
  \bibinfo {author} {\bibfnamefont {Ryan}\ \bibnamefont {Babbush}},\ }\bibfield
   {title} {\enquote {\bibinfo {title} {Triply efficient shadow tomography},}\
  }\href {\doibase 10.48550/arXiv.2404.19211} {\bibfield  {journal} {\bibinfo
  {journal} {arXiv preprint arXiv:2404.19211}\ } (\bibinfo {year} {2024}),\
  10.48550/arXiv.2404.19211}\BibitemShut {NoStop}%
\bibitem [{\citenamefont {Jerbi}\ \emph {et~al.}(2023)\citenamefont {Jerbi},
  \citenamefont {Gyurik}, \citenamefont {Marshall}, \citenamefont {Molteni},\
  and\ \citenamefont {Dunjko}}]{jerbi2023shadows}%
  \BibitemOpen
  \bibfield  {author} {\bibinfo {author} {\bibfnamefont {Sofiene}\ \bibnamefont
  {Jerbi}}, \bibinfo {author} {\bibfnamefont {Casper}\ \bibnamefont {Gyurik}},
  \bibinfo {author} {\bibfnamefont {Simon~C}\ \bibnamefont {Marshall}},
  \bibinfo {author} {\bibfnamefont {Riccardo}\ \bibnamefont {Molteni}}, \ and\
  \bibinfo {author} {\bibfnamefont {Vedran}\ \bibnamefont {Dunjko}},\
  }\bibfield  {title} {\enquote {\bibinfo {title} {Shadows of quantum machine
  learning},}\ }\href {https://arxiv.org/abs/2306.00061} {\bibfield  {journal}
  {\bibinfo  {journal} {arXiv preprint arXiv:2306.00061}\ } (\bibinfo {year}
  {2023})}\BibitemShut {NoStop}%
\bibitem [{\citenamefont {Koh}\ and\ \citenamefont
  {Grewal}(2022{\natexlab{a}})}]{koh2022classical}%
  \BibitemOpen
  \bibfield  {author} {\bibinfo {author} {\bibfnamefont {Dax~Enshan}\
  \bibnamefont {Koh}}\ and\ \bibinfo {author} {\bibfnamefont {Sabee}\
  \bibnamefont {Grewal}},\ }\bibfield  {title} {\enquote {\bibinfo {title}
  {Classical shadows with noise},}\ }\href {\doibase 10.22331/q-2022-08-16-776}
  {\bibfield  {journal} {\bibinfo  {journal} {Quantum}\ }\textbf {\bibinfo
  {volume} {6}},\ \bibinfo {pages} {776} (\bibinfo {year}
  {2022}{\natexlab{a}})}\BibitemShut {NoStop}%
\bibitem [{\citenamefont {Chen}\ \emph {et~al.}(2021)\citenamefont {Chen},
  \citenamefont {Yu}, \citenamefont {Zeng},\ and\ \citenamefont
  {Flammia}}]{chen2021robust}%
  \BibitemOpen
  \bibfield  {author} {\bibinfo {author} {\bibfnamefont {Senrui}\ \bibnamefont
  {Chen}}, \bibinfo {author} {\bibfnamefont {Wenjun}\ \bibnamefont {Yu}},
  \bibinfo {author} {\bibfnamefont {Pei}\ \bibnamefont {Zeng}}, \ and\ \bibinfo
  {author} {\bibfnamefont {Steven~T}\ \bibnamefont {Flammia}},\ }\bibfield
  {title} {\enquote {\bibinfo {title} {Robust shadow estimation},}\ }\href
  {\doibase 10.1103/PRXQuantum.2.030348} {\bibfield  {journal} {\bibinfo
  {journal} {PRX Quantum}\ }\textbf {\bibinfo {volume} {2}},\ \bibinfo {pages}
  {030348} (\bibinfo {year} {2021})}\BibitemShut {NoStop}%
\bibitem [{\citenamefont {Hearth}\ \emph {et~al.}(2024)\citenamefont {Hearth},
  \citenamefont {Flynn}, \citenamefont {Chandran},\ and\ \citenamefont
  {Laumann}}]{hearth2024efficient}%
  \BibitemOpen
  \bibfield  {author} {\bibinfo {author} {\bibfnamefont {Sumner~N}\
  \bibnamefont {Hearth}}, \bibinfo {author} {\bibfnamefont {Michael~O}\
  \bibnamefont {Flynn}}, \bibinfo {author} {\bibfnamefont {Anushya}\
  \bibnamefont {Chandran}}, \ and\ \bibinfo {author} {\bibfnamefont {Chris~R}\
  \bibnamefont {Laumann}},\ }\bibfield  {title} {\enquote {\bibinfo {title}
  {Efficient local classical shadow tomography with number conservation},}\
  }\href {\doibase 10.1103/PhysRevLett.133.060802} {\bibfield  {journal}
  {\bibinfo  {journal} {Physical Review Letters}\ }\textbf {\bibinfo {volume}
  {133}},\ \bibinfo {pages} {060802} (\bibinfo {year} {2024})}\BibitemShut
  {NoStop}%
\bibitem [{\citenamefont {Bertoni}\ \emph {et~al.}(2024)\citenamefont
  {Bertoni}, \citenamefont {Haferkamp}, \citenamefont {Hinsche}, \citenamefont
  {Ioannou}, \citenamefont {Eisert},\ and\ \citenamefont
  {Pashayan}}]{bertoni2024shallow}%
  \BibitemOpen
  \bibfield  {author} {\bibinfo {author} {\bibfnamefont {Christian}\
  \bibnamefont {Bertoni}}, \bibinfo {author} {\bibfnamefont {Jonas}\
  \bibnamefont {Haferkamp}}, \bibinfo {author} {\bibfnamefont {Marcel}\
  \bibnamefont {Hinsche}}, \bibinfo {author} {\bibfnamefont {Marios}\
  \bibnamefont {Ioannou}}, \bibinfo {author} {\bibfnamefont {Jens}\
  \bibnamefont {Eisert}}, \ and\ \bibinfo {author} {\bibfnamefont {Hakop}\
  \bibnamefont {Pashayan}},\ }\bibfield  {title} {\enquote {\bibinfo {title}
  {Shallow shadows: Expectation estimation using low-depth random clifford
  circuits},}\ }\href {\doibase 10.1103/PhysRevLett.133.020602} {\bibfield
  {journal} {\bibinfo  {journal} {Physical Review Letters}\ }\textbf {\bibinfo
  {volume} {133}},\ \bibinfo {pages} {020602} (\bibinfo {year}
  {2024})}\BibitemShut {NoStop}%
\bibitem [{\citenamefont {Chan}\ \emph {et~al.}(2022)\citenamefont {Chan},
  \citenamefont {Meister}, \citenamefont {Goh},\ and\ \citenamefont
  {Koczor}}]{chan2022algorithmic}%
  \BibitemOpen
  \bibfield  {author} {\bibinfo {author} {\bibfnamefont {Hans Hon~Sang}\
  \bibnamefont {Chan}}, \bibinfo {author} {\bibfnamefont {Richard}\
  \bibnamefont {Meister}}, \bibinfo {author} {\bibfnamefont {Matthew~L}\
  \bibnamefont {Goh}}, \ and\ \bibinfo {author} {\bibfnamefont {B{\'a}lint}\
  \bibnamefont {Koczor}},\ }\bibfield  {title} {\enquote {\bibinfo {title}
  {Algorithmic shadow spectroscopy},}\ }\href
  {https://arxiv.org/abs/2212.11036} {\bibfield  {journal} {\bibinfo  {journal}
  {arXiv preprint arXiv:2212.11036}\ } (\bibinfo {year} {2022})}\BibitemShut
  {NoStop}%
\bibitem [{\citenamefont {Sauvage}\ and\ \citenamefont
  {Larocca}(2024)}]{sauvage2024classical}%
  \BibitemOpen
  \bibfield  {author} {\bibinfo {author} {\bibfnamefont {Frederic}\
  \bibnamefont {Sauvage}}\ and\ \bibinfo {author} {\bibfnamefont {Martin}\
  \bibnamefont {Larocca}},\ }\bibfield  {title} {\enquote {\bibinfo {title}
  {Classical shadows with symmetries},}\ }\href
  {https://arxiv.org/abs/2408.05279} {\bibfield  {journal} {\bibinfo  {journal}
  {arXiv preprint arXiv:2408.05279}\ } (\bibinfo {year} {2024})}\BibitemShut
  {NoStop}%
\bibitem [{\citenamefont {Sack}\ \emph {et~al.}(2022)\citenamefont {Sack},
  \citenamefont {Medina}, \citenamefont {Michailidis}, \citenamefont {Kueng},\
  and\ \citenamefont {Serbyn}}]{sack2022avoiding}%
  \BibitemOpen
  \bibfield  {author} {\bibinfo {author} {\bibfnamefont {Stefan~H}\
  \bibnamefont {Sack}}, \bibinfo {author} {\bibfnamefont {Raimel~A}\
  \bibnamefont {Medina}}, \bibinfo {author} {\bibfnamefont {Alexios~A}\
  \bibnamefont {Michailidis}}, \bibinfo {author} {\bibfnamefont {Richard}\
  \bibnamefont {Kueng}}, \ and\ \bibinfo {author} {\bibfnamefont {Maksym}\
  \bibnamefont {Serbyn}},\ }\bibfield  {title} {\enquote {\bibinfo {title}
  {Avoiding barren plateaus using classical shadows},}\ }\href {\doibase
  10.1103/PRXQuantum.3.020365} {\bibfield  {journal} {\bibinfo  {journal} {PRX
  Quantum}\ }\textbf {\bibinfo {volume} {3}},\ \bibinfo {pages} {020365}
  (\bibinfo {year} {2022})}\BibitemShut {NoStop}%
\bibitem [{\citenamefont {Wan}\ \emph {et~al.}(2023{\natexlab{b}})\citenamefont
  {Wan}, \citenamefont {Huggins}, \citenamefont {Lee},\ and\ \citenamefont
  {Babbush}}]{wan2022matchgate}%
  \BibitemOpen
  \bibfield  {author} {\bibinfo {author} {\bibfnamefont {Kianna}\ \bibnamefont
  {Wan}}, \bibinfo {author} {\bibfnamefont {William~J}\ \bibnamefont
  {Huggins}}, \bibinfo {author} {\bibfnamefont {Joonho}\ \bibnamefont {Lee}}, \
  and\ \bibinfo {author} {\bibfnamefont {Ryan}\ \bibnamefont {Babbush}},\
  }\bibfield  {title} {\enquote {\bibinfo {title} {Matchgate shadows for
  fermionic quantum simulation},}\ }\href {\doibase 10.1007/s00220-023-04844-0}
  {\bibfield  {journal} {\bibinfo  {journal} {Communications in Mathematical
  Physics}\ }\textbf {\bibinfo {volume} {404}},\ \bibinfo {pages} {629}
  (\bibinfo {year} {2023}{\natexlab{b}})}\BibitemShut {NoStop}%
\bibitem [{\citenamefont {Koh}\ and\ \citenamefont
  {Grewal}(2022{\natexlab{b}})}]{Koh_2022}%
  \BibitemOpen
  \bibfield  {author} {\bibinfo {author} {\bibfnamefont {Dax~Enshan}\
  \bibnamefont {Koh}}\ and\ \bibinfo {author} {\bibfnamefont {Sabee}\
  \bibnamefont {Grewal}},\ }\bibfield  {title} {\enquote {\bibinfo {title}
  {Classical shadows with noise},}\ }\href {\doibase 10.22331/q-2022-08-16-776}
  {\bibfield  {journal} {\bibinfo  {journal} {Quantum}\ }\textbf {\bibinfo
  {volume} {6}},\ \bibinfo {pages} {776} (\bibinfo {year}
  {2022}{\natexlab{b}})}\BibitemShut {NoStop}%
\bibitem [{\citenamefont {Helsen}\ and\ \citenamefont
  {Walter}(2023)}]{Helsen_2023}%
  \BibitemOpen
  \bibfield  {author} {\bibinfo {author} {\bibfnamefont {Jonas}\ \bibnamefont
  {Helsen}}\ and\ \bibinfo {author} {\bibfnamefont {Michael}\ \bibnamefont
  {Walter}},\ }\bibfield  {title} {\enquote {\bibinfo {title} {Thrifty shadow
  estimation: Reusing quantum circuits and bounding tails},}\ }\href {\doibase
  10.1103/physrevlett.131.240602} {\bibfield  {journal} {\bibinfo  {journal}
  {Physical Review Letters}\ }\textbf {\bibinfo {volume} {131}} (\bibinfo
  {year} {2023}),\ 10.1103/physrevlett.131.240602}\BibitemShut {NoStop}%
\bibitem [{\citenamefont {Kunjummen}\ \emph {et~al.}(2023)\citenamefont
  {Kunjummen}, \citenamefont {Tran}, \citenamefont {Carney},\ and\
  \citenamefont {Taylor}}]{kunjummen2023shadow}%
  \BibitemOpen
  \bibfield  {author} {\bibinfo {author} {\bibfnamefont {Jonathan}\
  \bibnamefont {Kunjummen}}, \bibinfo {author} {\bibfnamefont {Minh~C}\
  \bibnamefont {Tran}}, \bibinfo {author} {\bibfnamefont {Daniel}\ \bibnamefont
  {Carney}}, \ and\ \bibinfo {author} {\bibfnamefont {Jacob~M}\ \bibnamefont
  {Taylor}},\ }\bibfield  {title} {\enquote {\bibinfo {title} {Shadow process
  tomography of quantum channels},}\ }\href {\doibase
  10.1103/PhysRevA.107.042403} {\bibfield  {journal} {\bibinfo  {journal}
  {Physical Review A}\ }\textbf {\bibinfo {volume} {107}},\ \bibinfo {pages}
  {042403} (\bibinfo {year} {2023})}\BibitemShut {NoStop}%
\bibitem [{\citenamefont {Wu}\ and\ \citenamefont {Koh}(2024)}]{Wu_2024}%
  \BibitemOpen
  \bibfield  {author} {\bibinfo {author} {\bibfnamefont {Bujiao}\ \bibnamefont
  {Wu}}\ and\ \bibinfo {author} {\bibfnamefont {Dax~Enshan}\ \bibnamefont
  {Koh}},\ }\bibfield  {title} {\enquote {\bibinfo {title} {Error-mitigated
  fermionic classical shadows on noisy quantum devices},}\ }\href {\doibase
  10.1038/s41534-024-00836-7} {\bibfield  {journal} {\bibinfo  {journal} {npj
  Quantum Information}\ }\textbf {\bibinfo {volume} {10}} (\bibinfo {year}
  {2024}),\ 10.1038/s41534-024-00836-7}\BibitemShut {NoStop}%
\bibitem [{\citenamefont {Zhao}\ and\ \citenamefont
  {Miyake}(2024{\natexlab{b}})}]{Zhao_2024}%
  \BibitemOpen
  \bibfield  {author} {\bibinfo {author} {\bibfnamefont {Andrew}\ \bibnamefont
  {Zhao}}\ and\ \bibinfo {author} {\bibfnamefont {Akimasa}\ \bibnamefont
  {Miyake}},\ }\bibfield  {title} {\enquote {\bibinfo {title} {Group-theoretic
  error mitigation enabled by classical shadows and symmetries},}\ }\href
  {\doibase 10.1038/s41534-024-00854-5} {\bibfield  {journal} {\bibinfo
  {journal} {npj Quantum Information}\ }\textbf {\bibinfo {volume} {10}}
  (\bibinfo {year} {2024}{\natexlab{b}}),\
  10.1038/s41534-024-00854-5}\BibitemShut {NoStop}%
\bibitem [{\citenamefont {Bermejo}\ \emph {et~al.}(2024)\citenamefont
  {Bermejo}, \citenamefont {Braccia}, \citenamefont {Rudolph}, \citenamefont
  {Holmes}, \citenamefont {Cincio},\ and\ \citenamefont
  {Cerezo}}]{bermejo2024quantum}%
  \BibitemOpen
  \bibfield  {author} {\bibinfo {author} {\bibfnamefont {Pablo}\ \bibnamefont
  {Bermejo}}, \bibinfo {author} {\bibfnamefont {Paolo}\ \bibnamefont
  {Braccia}}, \bibinfo {author} {\bibfnamefont {Manuel~S}\ \bibnamefont
  {Rudolph}}, \bibinfo {author} {\bibfnamefont {Zo{\"e}}\ \bibnamefont
  {Holmes}}, \bibinfo {author} {\bibfnamefont {Lukasz}\ \bibnamefont {Cincio}},
  \ and\ \bibinfo {author} {\bibfnamefont {M}~\bibnamefont {Cerezo}},\
  }\bibfield  {title} {\enquote {\bibinfo {title} {Quantum convolutional neural
  networks are (effectively) classically simulable},}\ }\href
  {https://arxiv.org/abs/2408.12739} {\bibfield  {journal} {\bibinfo  {journal}
  {arXiv preprint arXiv:2408.12739}\ } (\bibinfo {year} {2024})}\BibitemShut
  {NoStop}%
\bibitem [{\citenamefont {Angrisani}\ \emph {et~al.}(2024)\citenamefont
  {Angrisani}, \citenamefont {Schmidhuber}, \citenamefont {Rudolph},
  \citenamefont {Cerezo}, \citenamefont {Holmes},\ and\ \citenamefont
  {Huang}}]{angrisani2024classically}%
  \BibitemOpen
  \bibfield  {author} {\bibinfo {author} {\bibfnamefont {Armando}\ \bibnamefont
  {Angrisani}}, \bibinfo {author} {\bibfnamefont {Alexander}\ \bibnamefont
  {Schmidhuber}}, \bibinfo {author} {\bibfnamefont {Manuel~S}\ \bibnamefont
  {Rudolph}}, \bibinfo {author} {\bibfnamefont {M}~\bibnamefont {Cerezo}},
  \bibinfo {author} {\bibfnamefont {Zo{\"e}}\ \bibnamefont {Holmes}}, \ and\
  \bibinfo {author} {\bibfnamefont {Hsin-Yuan}\ \bibnamefont {Huang}},\
  }\bibfield  {title} {\enquote {\bibinfo {title} {Classically estimating
  observables of noiseless quantum circuits},}\ }\href
  {https://arxiv.org/abs/2409.01706} {\bibfield  {journal} {\bibinfo  {journal}
  {arXiv preprint arXiv:2409.01706}\ } (\bibinfo {year} {2024})}\BibitemShut
  {NoStop}%
\bibitem [{\citenamefont {West}\ \emph {et~al.}(2024)\citenamefont {West},
  \citenamefont {Mele}, \citenamefont {Larocca},\ and\ \citenamefont
  {Cerezo}}]{west2024random}%
  \BibitemOpen
  \bibfield  {author} {\bibinfo {author} {\bibfnamefont {Maxwell}\ \bibnamefont
  {West}}, \bibinfo {author} {\bibfnamefont {Antonio~Anna}\ \bibnamefont
  {Mele}}, \bibinfo {author} {\bibfnamefont {Martin}\ \bibnamefont {Larocca}},
  \ and\ \bibinfo {author} {\bibfnamefont {M}~\bibnamefont {Cerezo}},\
  }\bibfield  {title} {\enquote {\bibinfo {title} {Random ensembles of
  symplectic and unitary states are indistinguishable},}\ }\href {\doibase
  10.48550/arXiv.2409.16500} {\bibfield  {journal} {\bibinfo  {journal} {arXiv
  preprint arXiv:2409.16500}\ } (\bibinfo {year} {2024}),\
  10.48550/arXiv.2409.16500}\BibitemShut {NoStop}%
\bibitem [{\citenamefont {Grier}\ \emph {et~al.}(2024)\citenamefont {Grier},
  \citenamefont {Pashayan},\ and\ \citenamefont {Schaeffer}}]{grier2024sample}%
  \BibitemOpen
  \bibfield  {author} {\bibinfo {author} {\bibfnamefont {Daniel}\ \bibnamefont
  {Grier}}, \bibinfo {author} {\bibfnamefont {Hakop}\ \bibnamefont {Pashayan}},
  \ and\ \bibinfo {author} {\bibfnamefont {Luke}\ \bibnamefont {Schaeffer}},\
  }\bibfield  {title} {\enquote {\bibinfo {title} {Sample-optimal classical
  shadows for pure states},}\ }\href {\doibase 10.22331/q-2024-06-17-1373}
  {\bibfield  {journal} {\bibinfo  {journal} {Quantum}\ }\textbf {\bibinfo
  {volume} {8}},\ \bibinfo {pages} {1373} (\bibinfo {year} {2024})}\BibitemShut
  {NoStop}%
\bibitem [{\citenamefont {Brand{\~a}o}\ \emph {et~al.}(2020)\citenamefont
  {Brand{\~a}o}, \citenamefont {Kueng},\ and\ \citenamefont
  {Fran{\c{c}}a}}]{brandao2020fast}%
  \BibitemOpen
  \bibfield  {author} {\bibinfo {author} {\bibfnamefont {Fernando~GSL}\
  \bibnamefont {Brand{\~a}o}}, \bibinfo {author} {\bibfnamefont {Richard}\
  \bibnamefont {Kueng}}, \ and\ \bibinfo {author} {\bibfnamefont
  {Daniel~Stilck}\ \bibnamefont {Fran{\c{c}}a}},\ }\bibfield  {title} {\enquote
  {\bibinfo {title} {Fast and robust quantum state tomography from few basis
  measurements},}\ }\href {\doibase https://doi.org/10.4230/LIPIcs.TQC.2021.7}
  {\bibfield  {journal} {\bibinfo  {journal} {arXiv preprint arXiv:2009.08216}\
  } (\bibinfo {year} {2020}),\
  https://doi.org/10.4230/LIPIcs.TQC.2021.7}\BibitemShut {NoStop}%
\bibitem [{\citenamefont {Bertoni}\ \emph {et~al.}(2022)\citenamefont
  {Bertoni}, \citenamefont {Haferkamp}, \citenamefont {Hinsche}, \citenamefont
  {Ioannou}, \citenamefont {Eisert},\ and\ \citenamefont
  {Pashayan}}]{bertoni2022shallow}%
  \BibitemOpen
  \bibfield  {author} {\bibinfo {author} {\bibfnamefont {Christian}\
  \bibnamefont {Bertoni}}, \bibinfo {author} {\bibfnamefont {Jonas}\
  \bibnamefont {Haferkamp}}, \bibinfo {author} {\bibfnamefont {Marcel}\
  \bibnamefont {Hinsche}}, \bibinfo {author} {\bibfnamefont {Marios}\
  \bibnamefont {Ioannou}}, \bibinfo {author} {\bibfnamefont {Jens}\
  \bibnamefont {Eisert}}, \ and\ \bibinfo {author} {\bibfnamefont {Hakop}\
  \bibnamefont {Pashayan}},\ }\bibfield  {title} {\enquote {\bibinfo {title}
  {Shallow shadows: Expectation estimation using low-depth random clifford
  circuits},}\ }\href {https://arxiv.org/abs/2209.12924} {\bibfield  {journal}
  {\bibinfo  {journal} {arXiv preprint arXiv:2209.12924}\ } (\bibinfo {year}
  {2022})}\BibitemShut {NoStop}%
\bibitem [{\citenamefont {Somma}\ \emph {et~al.}(2024)\citenamefont {Somma},
  \citenamefont {King}, \citenamefont {Kothari}, \citenamefont {O'Brien},\ and\
  \citenamefont {Babbush}}]{somma2024shadow}%
  \BibitemOpen
  \bibfield  {author} {\bibinfo {author} {\bibfnamefont {Rolando~D}\
  \bibnamefont {Somma}}, \bibinfo {author} {\bibfnamefont {Robbie}\
  \bibnamefont {King}}, \bibinfo {author} {\bibfnamefont {Robin}\ \bibnamefont
  {Kothari}}, \bibinfo {author} {\bibfnamefont {Thomas}\ \bibnamefont
  {O'Brien}}, \ and\ \bibinfo {author} {\bibfnamefont {Ryan}\ \bibnamefont
  {Babbush}},\ }\bibfield  {title} {\enquote {\bibinfo {title} {Shadow
  hamiltonian simulation},}\ }\href {\doibase 10.48550/arXiv.2407.21775}
  {\bibfield  {journal} {\bibinfo  {journal} {arXiv preprint arXiv:2407.21775}\
  } (\bibinfo {year} {2024}),\ 10.48550/arXiv.2407.21775}\BibitemShut {NoStop}%
\bibitem [{\citenamefont {Denzler}\ \emph {et~al.}(2023)\citenamefont
  {Denzler}, \citenamefont {Mele}, \citenamefont {Derbyshire}, \citenamefont
  {Guaita},\ and\ \citenamefont
  {Eisert}}]{denzler2023learningfermioniccorrelationsevolving}%
  \BibitemOpen
  \bibfield  {author} {\bibinfo {author} {\bibfnamefont {Janek}\ \bibnamefont
  {Denzler}}, \bibinfo {author} {\bibfnamefont {Antonio~Anna}\ \bibnamefont
  {Mele}}, \bibinfo {author} {\bibfnamefont {Ellen}\ \bibnamefont
  {Derbyshire}}, \bibinfo {author} {\bibfnamefont {Tommaso}\ \bibnamefont
  {Guaita}}, \ and\ \bibinfo {author} {\bibfnamefont {Jens}\ \bibnamefont
  {Eisert}},\ }\href {https://arxiv.org/abs/2309.12933} {\enquote {\bibinfo
  {title} {Learning fermionic correlations by evolving with random
  translationally invariant hamiltonians},}\ } (\bibinfo {year} {2023}),\
  \Eprint {http://arxiv.org/abs/2309.12933} {arXiv:2309.12933 [quant-ph]}
  \BibitemShut {NoStop}%
\bibitem [{\citenamefont {Elben}\ \emph
  {et~al.}(2022{\natexlab{a}})\citenamefont {Elben}, \citenamefont {Flammia},
  \citenamefont {Huang}, \citenamefont {Kueng}, \citenamefont {Preskill},
  \citenamefont {Vermersch},\ and\ \citenamefont {Zoller}}]{Elben_2022}%
  \BibitemOpen
  \bibfield  {author} {\bibinfo {author} {\bibfnamefont {Andreas}\ \bibnamefont
  {Elben}}, \bibinfo {author} {\bibfnamefont {Steven~T.}\ \bibnamefont
  {Flammia}}, \bibinfo {author} {\bibfnamefont {Hsin-Yuan}\ \bibnamefont
  {Huang}}, \bibinfo {author} {\bibfnamefont {Richard}\ \bibnamefont {Kueng}},
  \bibinfo {author} {\bibfnamefont {John}\ \bibnamefont {Preskill}}, \bibinfo
  {author} {\bibfnamefont {Benoît}\ \bibnamefont {Vermersch}}, \ and\ \bibinfo
  {author} {\bibfnamefont {Peter}\ \bibnamefont {Zoller}},\ }\bibfield  {title}
  {\enquote {\bibinfo {title} {The randomized measurement toolbox},}\ }\href
  {\doibase 10.1038/s42254-022-00535-2} {\bibfield  {journal} {\bibinfo
  {journal} {Nature Reviews Physics}\ }\textbf {\bibinfo {volume} {5}},\
  \bibinfo {pages} {9–24} (\bibinfo {year} {2022}{\natexlab{a}})}\BibitemShut
  {NoStop}%
\bibitem [{\citenamefont {Zhu}(2017)}]{zhu2017multiqubit}%
  \BibitemOpen
  \bibfield  {author} {\bibinfo {author} {\bibfnamefont {Huangjun}\
  \bibnamefont {Zhu}},\ }\bibfield  {title} {\enquote {\bibinfo {title}
  {Multiqubit clifford groups are unitary 3-designs},}\ }\href {\doibase
  10.1103/PhysRevA.96.062336} {\bibfield  {journal} {\bibinfo  {journal}
  {Physical Review A}\ }\textbf {\bibinfo {volume} {96}},\ \bibinfo {pages}
  {062336} (\bibinfo {year} {2017})}\BibitemShut {NoStop}%
\bibitem [{\citenamefont {Hashagen}\ \emph {et~al.}(2018)\citenamefont
  {Hashagen}, \citenamefont {Flammia}, \citenamefont {Gross},\ and\
  \citenamefont {Wallman}}]{hashagen2018real}%
  \BibitemOpen
  \bibfield  {author} {\bibinfo {author} {\bibfnamefont {AK}~\bibnamefont
  {Hashagen}}, \bibinfo {author} {\bibfnamefont {ST}~\bibnamefont {Flammia}},
  \bibinfo {author} {\bibfnamefont {David}\ \bibnamefont {Gross}}, \ and\
  \bibinfo {author} {\bibfnamefont {JJ}~\bibnamefont {Wallman}},\ }\bibfield
  {title} {\enquote {\bibinfo {title} {Real randomized benchmarking},}\
  }\href@noop {} {\bibfield  {journal} {\bibinfo  {journal} {Quantum}\ }\textbf
  {\bibinfo {volume} {2}},\ \bibinfo {pages} {85} (\bibinfo {year}
  {2018})}\BibitemShut {NoStop}%
\bibitem [{\citenamefont {Elben}\ \emph
  {et~al.}(2022{\natexlab{b}})\citenamefont {Elben}, \citenamefont {Flammia},
  \citenamefont {Huang}, \citenamefont {Kueng}, \citenamefont {Preskill},
  \citenamefont {Vermersch},\ and\ \citenamefont
  {Zoller}}]{elben2022randomized}%
  \BibitemOpen
  \bibfield  {author} {\bibinfo {author} {\bibfnamefont {Andreas}\ \bibnamefont
  {Elben}}, \bibinfo {author} {\bibfnamefont {Steven~T}\ \bibnamefont
  {Flammia}}, \bibinfo {author} {\bibfnamefont {Hsin-Yuan}\ \bibnamefont
  {Huang}}, \bibinfo {author} {\bibfnamefont {Richard}\ \bibnamefont {Kueng}},
  \bibinfo {author} {\bibfnamefont {John}\ \bibnamefont {Preskill}}, \bibinfo
  {author} {\bibfnamefont {Beno{\^\i}t}\ \bibnamefont {Vermersch}}, \ and\
  \bibinfo {author} {\bibfnamefont {Peter}\ \bibnamefont {Zoller}},\ }\bibfield
   {title} {\enquote {\bibinfo {title} {The randomized measurement toolbox},}\
  }\href {\doibase 10.1038/s42254-022-00535-2} {\bibfield  {journal} {\bibinfo
  {journal} {Nature Review Physics}\ } (\bibinfo {year} {2022}{\natexlab{b}}),\
  10.1038/s42254-022-00535-2}\BibitemShut {NoStop}%
\bibitem [{\citenamefont {Weingarten}(1978)}]{weingarten1978asymptotic}%
  \BibitemOpen
  \bibfield  {author} {\bibinfo {author} {\bibfnamefont {Don}\ \bibnamefont
  {Weingarten}},\ }\bibfield  {title} {\enquote {\bibinfo {title} {Asymptotic
  behavior of group integrals in the limit of infinite rank},}\ }\href
  {\doibase 10.1063/1.523807} {\bibfield  {journal} {\bibinfo  {journal}
  {Journal of Mathematical Physics}\ }\textbf {\bibinfo {volume} {19}},\
  \bibinfo {pages} {999--1001} (\bibinfo {year} {1978})}\BibitemShut {NoStop}%
\bibitem [{\citenamefont {Mele}(2024)}]{mele2023introduction}%
  \BibitemOpen
  \bibfield  {author} {\bibinfo {author} {\bibfnamefont {Antonio~Anna}\
  \bibnamefont {Mele}},\ }\bibfield  {title} {\enquote {\bibinfo {title}
  {Introduction to haar measure tools in quantum information: A beginner's
  tutorial},}\ }\href {\doibase 10.22331/q-2024-05-08-1340} {\bibfield
  {journal} {\bibinfo  {journal} {Quantum}\ }\textbf {\bibinfo {volume} {8}},\
  \bibinfo {pages} {1340} (\bibinfo {year} {2024})}\BibitemShut {NoStop}%
\bibitem [{\citenamefont {Calderbank}\ \emph {et~al.}(1997)\citenamefont
  {Calderbank}, \citenamefont {Rains}, \citenamefont {Shor},\ and\
  \citenamefont {Sloane}}]{calderbank1997quantum}%
  \BibitemOpen
  \bibfield  {author} {\bibinfo {author} {\bibfnamefont {A~Robert}\
  \bibnamefont {Calderbank}}, \bibinfo {author} {\bibfnamefont {Eric~M}\
  \bibnamefont {Rains}}, \bibinfo {author} {\bibfnamefont {Peter~W}\
  \bibnamefont {Shor}}, \ and\ \bibinfo {author} {\bibfnamefont {Neil~JA}\
  \bibnamefont {Sloane}},\ }\bibfield  {title} {\enquote {\bibinfo {title}
  {Quantum error correction and orthogonal geometry},}\ }\href {\doibase
  10.1103/PhysRevLett.78.405} {\bibfield  {journal} {\bibinfo  {journal}
  {Physical Review Letters}\ }\textbf {\bibinfo {volume} {78}},\ \bibinfo
  {pages} {405} (\bibinfo {year} {1997})}\BibitemShut {NoStop}%
\bibitem [{\citenamefont {Nebe}\ \emph {et~al.}(2001)\citenamefont {Nebe},
  \citenamefont {Rains},\ and\ \citenamefont {Sloane}}]{nebe2001invariants}%
  \BibitemOpen
  \bibfield  {author} {\bibinfo {author} {\bibfnamefont {Gabriele}\
  \bibnamefont {Nebe}}, \bibinfo {author} {\bibfnamefont {Eric~M.}\
  \bibnamefont {Rains}}, \ and\ \bibinfo {author} {\bibfnamefont {Neil~JA}\
  \bibnamefont {Sloane}},\ }\bibfield  {title} {\enquote {\bibinfo {title} {The
  invariants of the clifford groups},}\ }\href {\doibase
  10.1023/A:1011233615437} {\bibfield  {journal} {\bibinfo  {journal} {Designs,
  Codes and Cryptography}\ }\textbf {\bibinfo {volume} {24}},\ \bibinfo {pages}
  {99--122} (\bibinfo {year} {2001})}\BibitemShut {NoStop}%
\bibitem [{\citenamefont {Nebe}\ \emph {et~al.}(2006)\citenamefont {Nebe},
  \citenamefont {Rains}, \citenamefont {Sloane} \emph {et~al.}}]{nebe2006self}%
  \BibitemOpen
  \bibfield  {author} {\bibinfo {author} {\bibfnamefont {Gabriele}\
  \bibnamefont {Nebe}}, \bibinfo {author} {\bibfnamefont {Eric~M}\ \bibnamefont
  {Rains}}, \bibinfo {author} {\bibfnamefont {Neil James~Alexander}\
  \bibnamefont {Sloane}},  \emph {et~al.},\ }\href {\doibase
  10.1007/3-540-30731-1} {\emph {\bibinfo {title} {Self-dual codes and
  invariant theory}}},\ Vol.~\bibinfo {volume} {17}\ (\bibinfo  {publisher}
  {Springer},\ \bibinfo {year} {2006})\BibitemShut {NoStop}%
\bibitem [{\citenamefont {Diaz}\ \emph {et~al.}(2023)\citenamefont {Diaz},
  \citenamefont {Garc{\'\i}a-Mart{\'\i}n}, \citenamefont {Kazi}, \citenamefont
  {Larocca},\ and\ \citenamefont {Cerezo}}]{diaz2023showcasing}%
  \BibitemOpen
  \bibfield  {author} {\bibinfo {author} {\bibfnamefont {NL}~\bibnamefont
  {Diaz}}, \bibinfo {author} {\bibfnamefont {Diego}\ \bibnamefont
  {Garc{\'\i}a-Mart{\'\i}n}}, \bibinfo {author} {\bibfnamefont {Sujay}\
  \bibnamefont {Kazi}}, \bibinfo {author} {\bibfnamefont {Martin}\ \bibnamefont
  {Larocca}}, \ and\ \bibinfo {author} {\bibfnamefont {M}~\bibnamefont
  {Cerezo}},\ }\bibfield  {title} {\enquote {\bibinfo {title} {Showcasing a
  barren plateau theory beyond the dynamical lie algebra},}\ }\href
  {https://arxiv.org/abs/2310.11505} {\bibfield  {journal} {\bibinfo  {journal}
  {arXiv preprint arXiv:2310.11505}\ } (\bibinfo {year} {2023})}\BibitemShut
  {NoStop}%
\bibitem [{\citenamefont {West}\ \emph {et~al.}(2025)\citenamefont {West},
  \citenamefont {Dowling}, \citenamefont {Southwell}, \citenamefont {Sevior},
  \citenamefont {Usman}, \citenamefont {Modi},\ and\ \citenamefont
  {Quella}}]{west2025graph}%
  \BibitemOpen
  \bibfield  {author} {\bibinfo {author} {\bibfnamefont {Maxwell}\ \bibnamefont
  {West}}, \bibinfo {author} {\bibfnamefont {Neil}\ \bibnamefont {Dowling}},
  \bibinfo {author} {\bibfnamefont {Angus}\ \bibnamefont {Southwell}}, \bibinfo
  {author} {\bibfnamefont {Martin}\ \bibnamefont {Sevior}}, \bibinfo {author}
  {\bibfnamefont {Muhammad}\ \bibnamefont {Usman}}, \bibinfo {author}
  {\bibfnamefont {Kavan}\ \bibnamefont {Modi}}, \ and\ \bibinfo {author}
  {\bibfnamefont {Thomas}\ \bibnamefont {Quella}},\ }\bibfield  {title}
  {\enquote {\bibinfo {title} {A graph-theoretic approach to chaos and
  complexity in quantum systems},}\ }\href {\doibase 10.48550/arXiv.2502.16404}
  {\bibfield  {journal} {\bibinfo  {journal} {arXiv preprint arXiv:2502.16404}\
  } (\bibinfo {year} {2025}),\ 10.48550/arXiv.2502.16404}\BibitemShut {NoStop}%
\bibitem [{\citenamefont {Schatzki}(2024)}]{schatzki2024random}%
  \BibitemOpen
  \bibfield  {author} {\bibinfo {author} {\bibfnamefont {Louis}\ \bibnamefont
  {Schatzki}},\ }\bibfield  {title} {\enquote {\bibinfo {title} {Random real
  valued and complex valued states cannot be efficiently distinguished},}\
  }\href {https://arxiv.org/abs/2410.17213} {\bibfield  {journal} {\bibinfo
  {journal} {arXiv preprint arXiv:2410.17213}\ } (\bibinfo {year}
  {2024})}\BibitemShut {NoStop}%
\bibitem [{\citenamefont {Cerezo}\ \emph {et~al.}(2017)\citenamefont {Cerezo},
  \citenamefont {Rossignoli}, \citenamefont {Canosa},\ and\ \citenamefont
  {R{\'\i}os}}]{cerezo2017factorization}%
  \BibitemOpen
  \bibfield  {author} {\bibinfo {author} {\bibfnamefont {M.}~\bibnamefont
  {Cerezo}}, \bibinfo {author} {\bibfnamefont {Ra{\'u}l}\ \bibnamefont
  {Rossignoli}}, \bibinfo {author} {\bibfnamefont {N}~\bibnamefont {Canosa}}, \
  and\ \bibinfo {author} {\bibfnamefont {E}~\bibnamefont {R{\'\i}os}},\
  }\bibfield  {title} {\enquote {\bibinfo {title} {Factorization and
  criticality in finite ${XXZ}$ systems of arbitrary spin},}\ }\href {\doibase
  10.1103/PhysRevLett.119.220605} {\bibfield  {journal} {\bibinfo  {journal}
  {Physical Review Letters}\ }\textbf {\bibinfo {volume} {119}},\ \bibinfo
  {pages} {220605} (\bibinfo {year} {2017})}\BibitemShut {NoStop}%
\bibitem [{\citenamefont {Seeley}\ \emph {et~al.}(2012)\citenamefont {Seeley},
  \citenamefont {Richard},\ and\ \citenamefont {Love}}]{seeley2012bravyi}%
  \BibitemOpen
  \bibfield  {author} {\bibinfo {author} {\bibfnamefont {Jacob~T}\ \bibnamefont
  {Seeley}}, \bibinfo {author} {\bibfnamefont {Martin~J}\ \bibnamefont
  {Richard}}, \ and\ \bibinfo {author} {\bibfnamefont {Peter~J}\ \bibnamefont
  {Love}},\ }\bibfield  {title} {\enquote {\bibinfo {title} {The bravyi-kitaev
  transformation for quantum computation of electronic structure},}\ }\href
  {\doibase 10.1063/1.4768229} {\bibfield  {journal} {\bibinfo  {journal} {The
  Journal of chemical physics}\ }\textbf {\bibinfo {volume} {137}} (\bibinfo
  {year} {2012}),\ 10.1063/1.4768229}\BibitemShut {NoStop}%
\bibitem [{\citenamefont {Zhang}\ \emph {et~al.}(2024)\citenamefont {Zhang},
  \citenamefont {Liu},\ and\ \citenamefont {Zhou}}]{zhang2024minimal}%
  \BibitemOpen
  \bibfield  {author} {\bibinfo {author} {\bibfnamefont {Qingyue}\ \bibnamefont
  {Zhang}}, \bibinfo {author} {\bibfnamefont {Qing}\ \bibnamefont {Liu}}, \
  and\ \bibinfo {author} {\bibfnamefont {You}\ \bibnamefont {Zhou}},\
  }\bibfield  {title} {\enquote {\bibinfo {title} {Minimal-clifford shadow
  estimation by mutually unbiased bases},}\ }\href {\doibase
  10.1103/PhysRevApplied.21.064001} {\bibfield  {journal} {\bibinfo  {journal}
  {Physical Review Applied}\ }\textbf {\bibinfo {volume} {21}},\ \bibinfo
  {pages} {064001} (\bibinfo {year} {2024})}\BibitemShut {NoStop}%
\bibitem [{\citenamefont {Park}\ \emph {et~al.}(2023)\citenamefont {Park},
  \citenamefont {Teo},\ and\ \citenamefont {Jeong}}]{park2023resource}%
  \BibitemOpen
  \bibfield  {author} {\bibinfo {author} {\bibfnamefont {Guedong}\ \bibnamefont
  {Park}}, \bibinfo {author} {\bibfnamefont {Yong~Siah}\ \bibnamefont {Teo}}, \
  and\ \bibinfo {author} {\bibfnamefont {Hyunseok}\ \bibnamefont {Jeong}},\
  }\bibfield  {title} {\enquote {\bibinfo {title} {Resource-efficient shadow
  tomography using equatorial stabilizer measurements},}\ }\href
  {https://arxiv.org/abs/2311.14622} {\bibfield  {journal} {\bibinfo  {journal}
  {arXiv preprint arXiv:2311.14622}\ } (\bibinfo {year} {2023})}\BibitemShut
  {NoStop}%
\bibitem [{\citenamefont {Schuster}\ \emph {et~al.}(2024)\citenamefont
  {Schuster}, \citenamefont {Haferkamp},\ and\ \citenamefont
  {Huang}}]{schuster2024random}%
  \BibitemOpen
  \bibfield  {author} {\bibinfo {author} {\bibfnamefont {Thomas}\ \bibnamefont
  {Schuster}}, \bibinfo {author} {\bibfnamefont {Jonas}\ \bibnamefont
  {Haferkamp}}, \ and\ \bibinfo {author} {\bibfnamefont {Hsin-Yuan}\
  \bibnamefont {Huang}},\ }\bibfield  {title} {\enquote {\bibinfo {title}
  {Random unitaries in extremely low depth},}\ }\href
  {https://arxiv.org/abs/2407.07754} {\bibfield  {journal} {\bibinfo  {journal}
  {arXiv preprint arXiv:2407.07754}\ } (\bibinfo {year} {2024})}\BibitemShut
  {NoStop}%
\bibitem [{\citenamefont {Ragone}\ \emph {et~al.}(2024)\citenamefont {Ragone},
  \citenamefont {Bakalov}, \citenamefont {Sauvage}, \citenamefont {Kemper},
  \citenamefont {Ortiz~Marrero}, \citenamefont {Larocca},\ and\ \citenamefont
  {Cerezo}}]{ragone2023unified}%
  \BibitemOpen
  \bibfield  {author} {\bibinfo {author} {\bibfnamefont {Michael}\ \bibnamefont
  {Ragone}}, \bibinfo {author} {\bibfnamefont {Bojko~N}\ \bibnamefont
  {Bakalov}}, \bibinfo {author} {\bibfnamefont {Fr{\'e}d{\'e}ric}\ \bibnamefont
  {Sauvage}}, \bibinfo {author} {\bibfnamefont {Alexander~F}\ \bibnamefont
  {Kemper}}, \bibinfo {author} {\bibfnamefont {Carlos}\ \bibnamefont
  {Ortiz~Marrero}}, \bibinfo {author} {\bibfnamefont {Mart{\'\i}n}\
  \bibnamefont {Larocca}}, \ and\ \bibinfo {author} {\bibfnamefont
  {M}~\bibnamefont {Cerezo}},\ }\bibfield  {title} {\enquote {\bibinfo {title}
  {A lie algebraic theory of barren plateaus for deep parameterized quantum
  circuits},}\ }\href {\doibase 10.1038/s41467-024-49909-3} {\bibfield
  {journal} {\bibinfo  {journal} {Nature Communications}\ }\textbf {\bibinfo
  {volume} {15}},\ \bibinfo {pages} {7172} (\bibinfo {year}
  {2024})}\BibitemShut {NoStop}%
\bibitem [{\citenamefont {Fontana}\ \emph {et~al.}(2023)\citenamefont
  {Fontana}, \citenamefont {Herman}, \citenamefont {Chakrabarti}, \citenamefont
  {Kumar}, \citenamefont {Yalovetzky}, \citenamefont {Heredge}, \citenamefont
  {Sureshbabu},\ and\ \citenamefont {Pistoia}}]{fontana2023adjoint}%
  \BibitemOpen
  \bibfield  {author} {\bibinfo {author} {\bibfnamefont {Enrico}\ \bibnamefont
  {Fontana}}, \bibinfo {author} {\bibfnamefont {Dylan}\ \bibnamefont {Herman}},
  \bibinfo {author} {\bibfnamefont {Shouvanik}\ \bibnamefont {Chakrabarti}},
  \bibinfo {author} {\bibfnamefont {Niraj}\ \bibnamefont {Kumar}}, \bibinfo
  {author} {\bibfnamefont {Romina}\ \bibnamefont {Yalovetzky}}, \bibinfo
  {author} {\bibfnamefont {Jamie}\ \bibnamefont {Heredge}}, \bibinfo {author}
  {\bibfnamefont {Shree~Hari}\ \bibnamefont {Sureshbabu}}, \ and\ \bibinfo
  {author} {\bibfnamefont {Marco}\ \bibnamefont {Pistoia}},\ }\bibfield
  {title} {\enquote {\bibinfo {title} {The adjoint is all you need:
  Characterizing barren plateaus in quantum ans{\"a}tze},}\ }\href
  {https://arxiv.org/abs/2309.07902} {\bibfield  {journal} {\bibinfo  {journal}
  {arXiv preprint arXiv:2309.07902}\ } (\bibinfo {year} {2023})}\BibitemShut
  {NoStop}%
\bibitem [{\citenamefont {Cerezo}\ \emph {et~al.}(2023)\citenamefont {Cerezo},
  \citenamefont {Larocca}, \citenamefont {Garc{\'\i}a-Mart{\'\i}n},
  \citenamefont {Diaz}, \citenamefont {Braccia}, \citenamefont {Fontana},
  \citenamefont {Rudolph}, \citenamefont {Bermejo}, \citenamefont {Ijaz},
  \citenamefont {Thanasilp} \emph {et~al.}}]{cerezo2023does}%
  \BibitemOpen
  \bibfield  {author} {\bibinfo {author} {\bibfnamefont {M}~\bibnamefont
  {Cerezo}}, \bibinfo {author} {\bibfnamefont {Martin}\ \bibnamefont
  {Larocca}}, \bibinfo {author} {\bibfnamefont {Diego}\ \bibnamefont
  {Garc{\'\i}a-Mart{\'\i}n}}, \bibinfo {author} {\bibfnamefont {N~L}\
  \bibnamefont {Diaz}}, \bibinfo {author} {\bibfnamefont {Paolo}\ \bibnamefont
  {Braccia}}, \bibinfo {author} {\bibfnamefont {Enrico}\ \bibnamefont
  {Fontana}}, \bibinfo {author} {\bibfnamefont {Manuel~S}\ \bibnamefont
  {Rudolph}}, \bibinfo {author} {\bibfnamefont {Pablo}\ \bibnamefont
  {Bermejo}}, \bibinfo {author} {\bibfnamefont {Aroosa}\ \bibnamefont {Ijaz}},
  \bibinfo {author} {\bibfnamefont {Supanut}\ \bibnamefont {Thanasilp}},  \emph
  {et~al.},\ }\bibfield  {title} {\enquote {\bibinfo {title} {Does provable
  absence of barren plateaus imply classical simulability? or, why we need to
  rethink variational quantum computing},}\ }\href
  {https://arxiv.org/abs/2312.09121} {\bibfield  {journal} {\bibinfo  {journal}
  {arXiv preprint arXiv:2312.09121}\ } (\bibinfo {year} {2023})}\BibitemShut
  {NoStop}%
\bibitem [{\citenamefont {Goh}\ \emph {et~al.}(2023)\citenamefont {Goh},
  \citenamefont {Larocca}, \citenamefont {Cincio}, \citenamefont {Cerezo},\
  and\ \citenamefont {Sauvage}}]{goh2023lie}%
  \BibitemOpen
  \bibfield  {author} {\bibinfo {author} {\bibfnamefont {Matthew~L}\
  \bibnamefont {Goh}}, \bibinfo {author} {\bibfnamefont {Martin}\ \bibnamefont
  {Larocca}}, \bibinfo {author} {\bibfnamefont {Lukasz}\ \bibnamefont
  {Cincio}}, \bibinfo {author} {\bibfnamefont {M}~\bibnamefont {Cerezo}}, \
  and\ \bibinfo {author} {\bibfnamefont {Fr{\'e}d{\'e}ric}\ \bibnamefont
  {Sauvage}},\ }\bibfield  {title} {\enquote {\bibinfo {title} {Lie-algebraic
  classical simulations for variational quantum computing},}\ }\href
  {https://arxiv.org/abs/2308.01432} {\bibfield  {journal} {\bibinfo  {journal}
  {arXiv preprint arXiv:2308.01432}\ } (\bibinfo {year} {2023})}\BibitemShut
  {NoStop}%
\bibitem [{\citenamefont {Garc{\'\i}a-Mart{\'\i}n}\ \emph
  {et~al.}(2023)\citenamefont {Garc{\'\i}a-Mart{\'\i}n}, \citenamefont
  {Larocca},\ and\ \citenamefont {Cerezo}}]{garcia2023deep}%
  \BibitemOpen
  \bibfield  {author} {\bibinfo {author} {\bibfnamefont {Diego}\ \bibnamefont
  {Garc{\'\i}a-Mart{\'\i}n}}, \bibinfo {author} {\bibfnamefont {Martin}\
  \bibnamefont {Larocca}}, \ and\ \bibinfo {author} {\bibfnamefont
  {M.}~\bibnamefont {Cerezo}},\ }\bibfield  {title} {\enquote {\bibinfo {title}
  {Deep quantum neural networks form gaussian processes},}\ }\href
  {https://arxiv.org/abs/2305.09957} {\bibfield  {journal} {\bibinfo  {journal}
  {arXiv preprint arXiv:2305.09957}\ } (\bibinfo {year} {2023})}\BibitemShut
  {NoStop}%
\bibitem [{\citenamefont {Larocca}\ \emph {et~al.}(2022)\citenamefont
  {Larocca}, \citenamefont {Sauvage}, \citenamefont {Sbahi}, \citenamefont
  {Verdon}, \citenamefont {Coles},\ and\ \citenamefont
  {Cerezo}}]{larocca2022group}%
  \BibitemOpen
  \bibfield  {author} {\bibinfo {author} {\bibfnamefont {Mart\'{\i}n}\
  \bibnamefont {Larocca}}, \bibinfo {author} {\bibfnamefont {Fr\'ed\'eric}\
  \bibnamefont {Sauvage}}, \bibinfo {author} {\bibfnamefont {Faris~M.}\
  \bibnamefont {Sbahi}}, \bibinfo {author} {\bibfnamefont {Guillaume}\
  \bibnamefont {Verdon}}, \bibinfo {author} {\bibfnamefont {Patrick~J.}\
  \bibnamefont {Coles}}, \ and\ \bibinfo {author} {\bibfnamefont
  {M.}~\bibnamefont {Cerezo}},\ }\bibfield  {title} {\enquote {\bibinfo {title}
  {Group-invariant quantum machine learning},}\ }\href {\doibase
  10.1103/PRXQuantum.3.030341} {\bibfield  {journal} {\bibinfo  {journal} {PRX
  Quantum}\ }\textbf {\bibinfo {volume} {3}},\ \bibinfo {pages} {030341}
  (\bibinfo {year} {2022})}\BibitemShut {NoStop}%
\end{thebibliography}%

\onecolumngrid
\appendix

\clearpage
\newpage

\section*{Appendices for ``\textit{R\MakeLowercase{eal classical shadows}}'' }

\section{Preliminaries}

In this section we briefly review the basics of the Weingarten calculus, as these will be used to derived our main results.  An accessible introduction to the relevant mathematics may be found in Ref.~\cite{mele2023introduction}.

The core technical step in proving most of our results is the evaluation of integrals  of the form
\begin{equation}\label{eq:momop}
\mct^{(k)}_G (A) =\int_{U\sim \mathcal{U}} U^{\otimes k} AU^{\dagger \otimes k} = \int_{U\in G}d\mu(U) U^{\otimes k} AU^{\dagger \otimes k}\,.
\end{equation}
Here, we have assumed that $\mathcal{U}$ is the Haar measure $d\mu$ over some group  $G$ and $A\in\mcl\left(\mathcal{H}^{\otimes k} \right)$ is some arbitrary linear operator on $\mathcal{H}^{\otimes k} $. For our purposes, we will take $G\in \{{\rm O}(d),{\rm O}(2)^{\otimes n}\}$ and be interested in the $k=2$ and $k=3$, cases. As is well-known~\cite{mele2023introduction}, the (super) operator $\mct^{(k)}_G$ orthogonally projects onto the $k$-th order commutant of $G$, i.e., onto the set
\begin{equation}
{\rm comm}\left(G, k\right) = \big\{ B\in \mcl\left(\mathcal{H}^{\otimes k} \right) \ \vert \ [B,U^{\otimes k}]=0\ \forall U\in G \big\} \subseteq \mathcal{L}(\mathcal{H}^{\otimes k} ).
\end{equation}
Hence, given a spanning set of ${\rm comm}\left(G, k\right) $ one can evaluate integrals of the form Eq.~\eqref{eq:momop} by means of elementary linear algebra. 

For the case of $G={\rm O}(d)$, it is well known that a basis of the $k$-th order commutant is given by a  representation $F_d$ of the \textit{Brauer algebra} $\mathfrak{B}_k$~\cite{hashagen2018real,garcia2023deep,larocca2022group}. We recall that $\mathfrak{B}_k$ enumerates the two-element pairings of a set of $2k$ objects; an element $\sigma\in\mathfrak{B}_k$ is therefore characterized by a set of $k$ disjoint pairs, $\sigma=\{\{\lm_1,\sigma(\lm_1)\},\dots,\{\lm_k,\sigma(\lm_k)\}\}$. In the graphical  notation~\cite{mele2023introduction}, we can interpret these as labelling the beginning and endpoints of $k$ wires (see Fig.~\ref{fig:comm}). Explicitly, this representation is given by
\begin{equation*}
F_d(\sigma) = \sum_{i_1,\ldots,i_{2k}=0}^{d-1}\ketbra{i_{k+1},i_{k+2},\ldots,i_{2k}}{i_1,i_{2},\ldots,i_{k}}\prod_{\gamma=1}^k\delta_{i_{\lm_\gamma},i_{\sigma(\lm_\gamma)}},
\end{equation*}
a subalgebra of the polynomial ring $\mathbb{Z}[d]$~\cite{garcia2023deep}.
From this expression we can read off the form of the commutants for $k=2,3$. Concretely we have
\begin{equation}\label{eq:oc2}
  {\rm comm}\left({\rm O}(d), 2\right) = {\rm span}\{\id,\mbs,\om\}\,, 
\end{equation}
 with $\mbs$ the SWAP operator and $\ket{\Omega}=\sum_{i=0}^{d-1}\ket{ii}$, and
\begin{align}\label{eq:oc3}
  {\rm comm}\left({\rm O}(d), 3\right) = {\rm span}\{ \mbs_\id, \mbs_{(12)}, \mbs_{(13)}, &\mbs_{(23)}, \mbs_{(123)}, \mbs_{(132)},\nonumber\\ \Omega_{12;12}, \Omega_{23;23}, &\Omega_{13;13}, \Omega_{12;23}, \Omega_{23;12}, \Omega_{13;23}, \Omega_{13;12}, \Omega_{23;13}, \Omega_{12;13}\} \,. 
\end{align}
Here $\om_{ab;xy}=\sum_{i,j}\ket{i}_a\ket{i}_b\bra{j}_x\bra{j}_y$ denotes an (unnormalised) maximally entangled state across two copies of $\mbc^d$ (see Fig.~\ref{fig:comm}), and $\mbs_\sigma$ the natural action of the permutation $\sigma$ on $(\mbc^d)^{\otimes k}$, i.e.
\begin{equation*}
\mbs_\sigma = \sum_{i_1,\ldots,i_{k}=0}^{d-1}\ketbra{i_{\sigma^{-1}(1)},\ldots,i_{\sigma^{-1}(k)}}{i_1,\ldots,i_{k}}.
\end{equation*}
Combined with the above assertion that $\mct^{(k)}_G$ is the orthogonal projection onto the commutant,
\begin{equation}\label{eq:proj}
  \mct^{(k)}_G : \mathcal{L}((\mbc^d)^{\otimes k})  \xrightarrow[]{}  {\rm comm}\left(G, k\right)\,,
\end{equation}
Eqs.~\eqref{eq:oc2} and~\eqref{eq:oc3} 
will allow us to characterize the global and local orthogonal shadow protocols.

\begin{figure}
    \centering
    \includegraphics[width=0.7\textwidth]{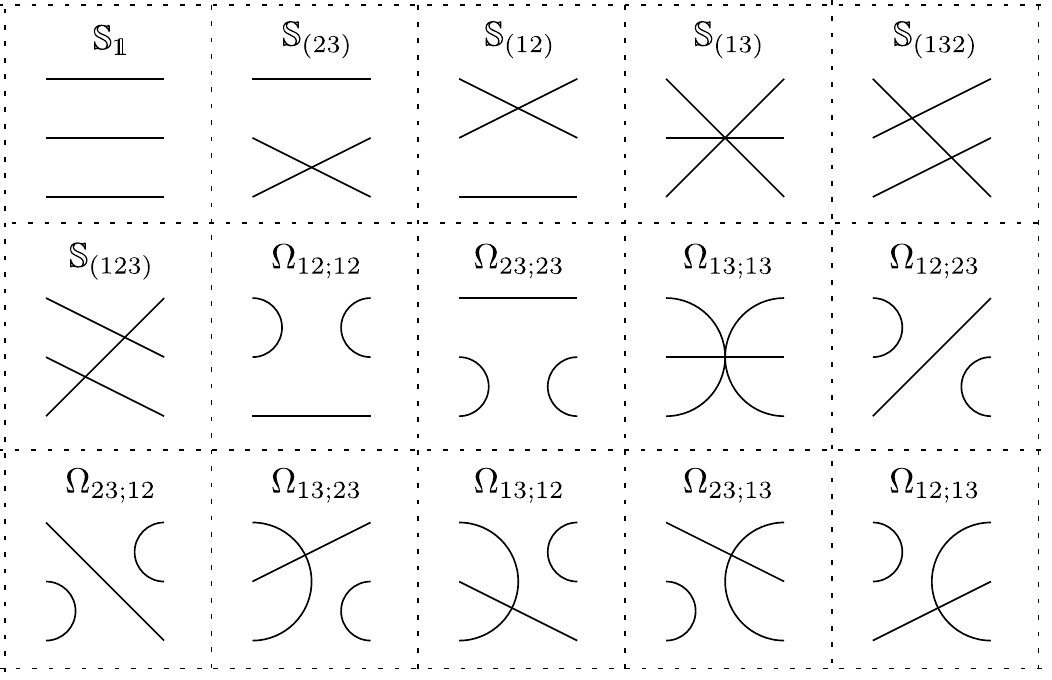}
    \caption{Graphical representation of the 15 elements of the third order commutant ${\rm comm}\left({\rm O}(d), 3\right)$ of the orthogonal group (see Ref.~\cite{mele2023introduction} for an introduction to this notation).} 
    \label{fig:comm}
\end{figure}

\section{Proof of main results}

From the discussion in the main text (in particular Eqs.~\eqref{eq:mc2},~\eqref{eq:ea2} and~\eqref{eq:e2a}) we know that we will need to evaluate $\mct^{(k)}_{{\rm O}(d)}(\Pi_w^{\otimes k})$ for $k=2,3$, which we do over the following two lemmas.

\begin{lemma}\label{lem:gw2}
Let  $\Pi_w=\ketbra{w}$ be any rank-1 projector. Then, we have
\begin{equation}
   \int_{U\in {\rm O}(d)}d\mu(U)\   U^{\dagger\otimes 2}\Pi_w^{\otimes 2}U^{\otimes 2} = \frac{(d-\alpha_w)\id + (d-\alpha_w)\mbs + (\alpha_w d+\alpha_w-2)\om}{d(d-1)(d+2)}   \label{eq:g2c}
  \end{equation}
where $\alpha_w=\abs{\braket{w|w^*}}^2$.
\end{lemma}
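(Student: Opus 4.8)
\textit{Proof idea.} The plan is to use the fact recalled around Eq.~\eqref{eq:proj} that the twirl $\mct^{(2)}_{{\rm O}(d)}(\cdot)=\int_{{\rm O}(d)}d\mu(U)\,U^{\otimes 2}(\cdot)U^{\dagger\otimes 2}$ is the orthogonal projection, with respect to the Hilbert--Schmidt inner product, onto the second-order commutant ${\rm comm}({\rm O}(d),2)={\rm span}\{\id,\mbs,\om\}$ of Eq.~\eqref{eq:oc2}. First I would observe that the integral in the statement is precisely this projection applied to $\Pi_w^{\otimes 2}$: since the Haar measure on the compact group ${\rm O}(d)$ is invariant under $U\mapsto U^{-1}$, and $U^{-1}=U^{\mathsf T}$ so that $(U^{-1})^{\dagger\otimes 2}=U^{\otimes 2}$, the change of variables $U\mapsto U^{-1}$ turns $\int U^{\dagger\otimes 2}\Pi_w^{\otimes 2}U^{\otimes 2}$ into $\mct^{(2)}_{{\rm O}(d)}(\Pi_w^{\otimes 2})$. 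Hence it suffices to write the answer as $a\id+b\mbs+c\om$ and determine the three scalars from the defining property of an orthogonal projection, namely that $\Pi_w^{\otimes 2}-(a\id+b\mbs+c\om)$ is Hilbert--Schmidt orthogonal to each of $\id,\mbs,\om$.

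This reduces the whole computation to a $3\times 3$ linear system. The Gram matrix of $\{\id,\mbs,\om\}$ follows from the elementary identities $\Tr[\id]=d^2$, $\Tr[\mbs]=d$, $\mbs\ket{\Omega}=\ket{\Omega}$ (whence $\Tr[\mbs\,\om]=\braket{\Omega|\Omega}=d$ and $\Tr[\mbs^2]=d^2$) and $\om^2=d\,\om$ (whence $\Tr[\om^2]=d^2$); it is the matrix with $d^2$ on the diagonal and $d$ off it. The right-hand side requires the three overlaps of $\Pi_w^{\otimes 2}$ with the basis: $\Tr[\Pi_w^{\otimes 2}]=1$; $\Tr[\Pi_w^{\otimes 2}\mbs]=\Tr[\Pi_w^2]=1$, using $\Tr[(A\otimes B)\mbs]=\Tr[AB]$; and $\Tr[\Pi_w^{\otimes 2}\,\om]=\bra{\Omega}\Pi_w^{\otimes 2}\ket{\Omega}$. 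For the last, expanding $\ket{w}=\sum_i w_i\ket{i}$ in the computational basis gives $\Pi_w^{\otimes 2}\ket{\Omega}=\big(\sum_i \overline{w_i}^{\,2}\big)\ket{ww}=\braket{w|w^*}\ket{ww}$, so that $\bra{\Omega}\Pi_w^{\otimes 2}\ket{\Omega}=\braket{w|w^*}\braket{w^*|w}=\abs{\braket{w|w^*}}^2=\alpha_w$.

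With the Gram matrix and right-hand side $(1,1,\alpha_w)^{\mathsf T}$ in hand, solving is routine: the equality of the first two right-hand entries, together with the symmetry of the Gram matrix, forces $a=b$ (equivalently, $\Pi_w^{\otimes 2}$ is $\mbs$-invariant), after which a one-line elimination using the factorization $d^2+d-2=(d-1)(d+2)$ yields $a=b=(d-\alpha_w)/[d(d-1)(d+2)]$ and $c=(\alpha_w d+\alpha_w-2)/[d(d-1)(d+2)]$, which is the claimed expression. The only step demanding genuine care is the $\om$-overlap, where one must track where the complex conjugates fall in the computational basis; this is exactly what produces the reality parameter $\alpha_w=\abs{\braket{w|w^*}}^2$ and, downstream, the transpose ($\mbz_2$) structure emphasised in the main text. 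Everything else is bookkeeping. (Alternatively the same coefficients drop out of the orthogonal Weingarten formula for $k=2$, but the projection argument above is the shortest route.)
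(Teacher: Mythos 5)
Your proposal is correct and follows essentially the same route as the paper: expand the twirled projector in the commutant basis $\{\id,\mbs,\om\}$ and fix the three coefficients by tracing against these elements, which yields the identical $3\times 3$ system (your Gram-matrix/orthogonal-projection phrasing is equivalent to the paper's multiply-and-trace step, and your computation $\Tr[\Pi_w^{\otimes 2}\om]=\alpha_w$ matches the paper's $\Tr[\Pi_w\Pi_w^{\mathsf T}]=\alpha_w$). The only addition is your explicit remark that Haar invariance under $U\mapsto U^{-1}$ reconciles the $U^{\dagger\otimes 2}(\cdot)U^{\otimes 2}$ ordering with the projection property, a point the paper leaves implicit.
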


\begin{proof}
  By Eqs.~\eqref{eq:oc2} and~\eqref{eq:proj} we have
  \begin{equation}
    \int_{U\in {\rm O}(d)} d\mu(U)\  U^{\dagger\otimes 2}\Pi_w^{\otimes 2}U^{\otimes 2} = c_{\id}\id + c_{\mbs}\mbs + c_{\om}\om\,,
    \label{eq:span}
  \end{equation}
    for some coefficients $c_{\id},c_{\mbs},c_{\om}$  (we suppress the $d$-dependence of these coefficients to simplify the notation). Multiplying Eq.~\eqref{eq:span} on both sides by $\id,\mbs$ and $\om$, respectively, and then taking the trace, we arrive at the system of equations
    \begin{equation}
    \begin{aligned}
   1  &= c_{\id}d^2 + c_{\mbs}d + c_{\om}d \\
    1  &= c_{\id}d + c_{\mbs}d^2 + c_{\om}d \\
    \alpha_w &= c_{\id}d + c_{\mbs}d + c_{\om}d^2.
    \end{aligned}
    \end{equation}
    Here we have used $(\ketbra{\Omega})(\ketbra{\Omega})=d\ketbra{\Omega}$ and $ \Tr[\Pi_w^{\otimes 2}\ketbra{\Omega}]=\Tr[\Pi_w(\Pi_w)^\mathsf{T}]=\abs{\braket{w|w^*}}^2=\alpha_w$.
Solving for the coefficients yields the claimed result.
\end{proof}

This strategy of expressing $\mct^{(k)}_{{\rm O}(d)}(\Pi_w^{\otimes k})$ as a linear combination of elements in ${\rm comm}\left(G, k\right)$ and then deducing the coefficients by multiplying both sides with operators appearing in the sum before taking the trace will be reused in the next lemma.

\begin{lemma}\label{lem:gw3}
Let $\Pi_w=\ketbra{w}$ be any rank-1 projector. Then, we have
\begin{equation}\label{eq:gco3result}
  \int_{U\in {\rm O}(d)} d\mu(U)\   U^{\dagger\otimes 3}\Pi_w^{\otimes 3}U^{\otimes 3} = \frac{(d-3\alpha_w+2)\sum_{\mbs_{\pi}\in \mathcal{S}} \mbs_{\pi}+(\alpha_w d+\alpha_w-2)\sum_{\Omega_x\in \mathcal{X} } \Omega_x }{d(d-1)(d+2)(d+4)}  \,,
  \end{equation}
  where 
\[\mathcal{S}=\set{\mbs_\id, \mbs_{(12)}, \mbs_{(13)}, \mbs_{(23)}, \mbs_{(123)}, \mbs_{(132)}},\]
  \[ \mathcal{X}=\set{\Omega_{12;12}, \Omega_{23;23}, \Omega_{13;13}, \Omega_{12;23},\Omega_{23;12}, \Omega_{13;23}, \Omega_{13;12}, \Omega_{23;13}, \Omega_{12;13}}\]
  and $\alpha_w=\abs{\braket{w|w^*}}^2$.
\end{lemma}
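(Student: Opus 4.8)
The plan is to mimic exactly the proof of Lemma~\ref{lem:gw2}, but now working in the third-order commutant ${\rm comm}\left({\rm O}(d), 3\right)$ spanned by the 15 operators listed in Eq.~\eqref{eq:oc3}. Since $\mct^{(3)}_{{\rm O}(d)}$ is the orthogonal projection onto this commutant (Eq.~\eqref{eq:proj}), we may write
\begin{equation*}
\int_{U\in {\rm O}(d)} d\mu(U)\  U^{\dagger\otimes 3}\Pi_w^{\otimes 3}U^{\otimes 3} = \sum_{\mbs_\pi\in\mathcal{S}} c_\pi \mbs_\pi + \sum_{\Omega_x\in\mathcal{X}} c_x \Omega_x\,,
\end{equation*}
for 15 unknown scalar coefficients. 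To pin them down, I would take the Hilbert--Schmidt inner product of both sides against each of the 15 basis operators in turn, obtaining a $15\times 15$ linear system. The right-hand-side Gram matrix entries $\Tr[\mbs_\pi \mbs_{\pi'}]$, $\Tr[\mbs_\pi \Omega_x]$, $\Tr[\Omega_x \Omega_{x'}]$ are all standard: traces of permutation/contraction diagrams evaluate to $d^{(\text{number of loops})}$, using $\Omega^2 = d\,\Omega$ and the usual diagrammatic loop-counting. The left-hand-side inner products are $\Tr[\Pi_w^{\otimes 3} B^\dagger]$ for $B$ in the basis; these reduce to products of factors of the form $\Tr[\Pi_w]=1$, $\Tr[\Pi_w \Pi_w^\mathsf{T}]=\alpha_w$, and $\Tr[\Pi_w \Pi_w \Pi_w^\mathsf{T}\cdots]$-type contractions, each of which collapses to $1$ or $\alpha_w$ because $\Pi_w$ is rank one and $\Pi_w^2=\Pi_w$. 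So every left-hand value is a polynomial in $\alpha_w$ (in fact of degree at most one, since at most one transpose/contraction is ``active'' at a time for a rank-one projector).

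The key structural observation that makes this tractable is a symmetry argument: conjugation by $U^{\otimes 3}$ for $U\in{\rm O}(d)$ commutes with the natural $S_3$ action permuting the three tensor factors, and $\Pi_w^{\otimes 3}$ is $S_3$-invariant. Hence the output must be invariant under the simultaneous $S_3$ relabelling of the three legs, which forces $c_\pi$ to be constant across all six $\mbs_\pi\in\mathcal{S}$ and $c_x$ to be constant across all nine $\Omega_x\in\mathcal{X}$ (the nine $\Omega$-diagrams form a single $S_3$-orbit, as do the six permutations). This collapses the $15\times 15$ system to a $2\times 2$ system in two unknowns $c_{\mbs}$ and $c_{\Omega}$, which I would extract by taking traces against, say, $\mbs_\id=\id$ and $\Omega_{12;12}$: using $\Tr[\id]=d^3$, $\Tr[\Pi_w^{\otimes 3}]=1$, $\Tr[\Omega_{12;12}\cdot(\text{each basis element})]$, and $\Tr[\Pi_w^{\otimes 3}\Omega_{12;12}] = \Tr[\Pi_w]\,\Tr[\Pi_w\Pi_w^\mathsf{T}] = \alpha_w$. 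Solving the resulting $2\times 2$ linear system yields $c_{\mbs} = (d-3\alpha_w+2)/[d(d-1)(d+2)(d+4)]$ and $c_{\Omega}=(\alpha_w d + \alpha_w - 2)/[d(d-1)(d+2)(d+4)]$, which is the claimed Eq.~\eqref{eq:gco3result}.

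The main obstacle will be bookkeeping rather than conceptual: one must correctly compute the $2\times 2$ (or, if one distrusts the symmetry shortcut, $15\times 15$) Gram matrix of overlaps among the $\mbs_\pi$ and $\Omega_x$ diagrams, since a $\mbs_\pi$ and an $\Omega_x$ generically overlap nontrivially (e.g. $\Tr[\mbs_{(12)}\Omega_{12;12}]$ involves a partial contraction), and getting the loop counts right for the mixed terms is where sign/exponent errors creep in. I would double-check the final coefficients in two ways: (i) verify the $d\to\infty$ leading behaviour against the known unitary third moment, and (ii) specialize to a real vector ($\alpha_w=1$, so $\Pi_w=\Pi_w^\mathsf{T}$) and confirm consistency with the real-shadows channel of Proposition~\ref{prop:gmc} after tracing out one factor. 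As a sanity check on the $\alpha_w$-dependence, note that the $k=2$ result of Lemma~\ref{lem:gw2} is recovered by an analogous argument with only the three diagrams $\{\id,\mbs,\om\}$, so the $k=3$ computation is the direct generalization and the structure of the answer (an $\mathcal{S}$-part with coefficient affine in $\alpha_w$ and an $\mathcal{X}$-part with coefficient affine in $\alpha_w$, over the Pochhammer-like denominator $d(d-1)(d+2)(d+4)$) is exactly what one should anticipate.
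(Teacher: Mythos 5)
Your overall strategy is the same as the paper's (expand $\int d\mu(U)\, U^{\dagger\otimes 3}\Pi_w^{\otimes 3}U^{\otimes 3}$ in the 15-element commutant basis, pair with basis elements in the Hilbert--Schmidt inner product, solve the linear system), but the symmetry step you use to collapse to two unknowns is not justified as stated, and as written it leaves a genuine gap. Invariance under ``simultaneous relabelling of the three legs'' is conjugation by $\mbs_\sigma$, and under that action the six permutations do \emph{not} form a single orbit: they split into the three conjugacy classes $\{\mbs_\id\}$, $\{\mbs_{(12)},\mbs_{(13)},\mbs_{(23)}\}$, $\{\mbs_{(123)},\mbs_{(132)}\}$, and the nine contraction diagrams split into two orbits, the ``diagonal'' ones $\{\Omega_{12;12},\Omega_{13;13},\Omega_{23;23}\}$ and the six ``off-diagonal'' ones. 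So conjugation symmetry only identifies coefficients within these five orbits --- exactly the five unknowns the paper works with --- and there is no a priori reason from that argument alone for $a_{\id}=a_{(12)}=a_{(123)}$ or $b_{12;12}=b_{12;13}$. Consequently, pairing against only $\id$ and $\Omega_{12;12}$ gives two equations for five unknowns: an underdetermined system, and your $2\times 2$ solve is not licensed by the argument you gave (the fact that the final answer happens to have all $\mbs$-coefficients equal and all $\Omega$-coefficients equal is an output of the computation, not of conjugation symmetry).

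The gap is fixable in either of two ways. (i) Strengthen the symmetry argument: since $\ket{w}^{\otimes 3}$ lies in the totally symmetric subspace, $\mbs_\sigma \Pi_w^{\otimes 3}=\Pi_w^{\otimes 3}\mbs_\sigma=\Pi_w^{\otimes 3}$, and because $\mbs_\sigma$ commutes with $U^{\otimes 3}$ the twirled operator $T$ satisfies the stronger relations $\mbs_\sigma T=T\mbs_\sigma=T$ for all $\sigma\in S_3$. Left and right multiplication by permutations acts transitively on the six $\mbs_\pi$ and on the nine $\Omega_x$, so \emph{this} invariance (together with linear independence of the Brauer diagrams) does force a two-parameter ansatz, after which your $2\times 2$ system is consistent and yields Eq.~\eqref{eq:gco3result}. (ii) Follow the paper: keep the five orbit coefficients $a_{\id},a_{(12)},a_{(123)},b_{12;12},b_{12;13}$, pair against $\id,\mbs_{(12)},\mbs_{(123)},\Omega_{12;12},\Omega_{12;13}$ (the left-hand values being $1,1,1,\alpha_w,\alpha_w$), and solve the resulting $5\times 5$ system; the equality of coefficients within $\mathcal{S}$ and within $\mathcal{X}$ then emerges from the solution. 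Your fallback of the full $15\times 15$ Gram system would also work, but is unnecessary; and note that your stated sanity check that ``the nine $\Omega$-diagrams form a single $S_3$-orbit'' under relabelling is itself false, which is the same issue.
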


The following proof  will be entirely analogous to that of Lemma~\ref{lem:gw2}, albeit resulting in a larger system of simultaneous equations.
\begin{proof}
  By Eqs.~\eqref{eq:oc3} and~\eqref{eq:proj} we have:
  \begin{equation}
  \int_{U\in {\rm O}(d)} d\mu(U)\  U^{\dagger\otimes 3}\Pi_w^{\otimes 3}U^{\otimes 3} = \sum_{\pi\in S_3}a_\pi \mbs_{\pi}+\sum_x b_x \Omega_x 
    \label{eq:c3span1}
  \end{equation}
  for some coefficients $a_\pi$ and $b_x$ to be determined (where we continue to suppress the $d$-dependence of the coefficients to simplify the notation). By the permutation symmetry of Eq.~\eqref{eq:c3span1} we immediately have
 \begin{align*}
   &a_{(12)} = a_{(13)} = a_{(23)};\qquad a_{(123)} = a_{(132)};\qquad \\ &
   b_{12;12} = b_{13;13} = b_{23;23};\hspace{4mm} b_{12;13} = b_{12;23} = b_{13;12} = b_{13;23} =b_{23;12} =b_{23;13} 
  \end{align*}
 and so 
 \small
  \begin{align}
    \int_{U\in {\rm O}(d)} d\mu(U)\ U^{\dagger\otimes 3}\Pi_w^{\otimes 3}U^{\otimes 3} =& a_{\id}\id + a_{(12)}\left(\mbs_{(12)} + \mbs_{(13)} + \mbs_{(23)}\right) + a_{(123)}\left( \mbs_{(123)}+ \mbs_{(132)}\right)\nonumber\\
    &+b_{12;12}\left( \Omega_{12;12}+ \Omega_{23;23}+ \Omega_{13;13}\right)+ b_{12;13}\left(\Omega_{12;23}+\Omega_{23;12}+ \Omega_{13;23}+ \Omega_{13;12}+ \Omega_{23;13}+ \Omega_{12;13}\right)\,.
    \label{eq:c3span}
  \end{align}
  \normalsize
  As in the proof of Lemma~\ref{lem:gw2} we will evaluate the coefficients by multiplying Eq.~\eqref{eq:c3span} by various operators appearing on the right hand side, and then taking the trace.
  We begin by noting
  \small
  \begin{align*}
   &\Tr\left[\id\right]=d^3;\hspace{5mm}\Tr\left[\mbs_{(12)}\right]=\Tr\left[\mbs_{(13)}\right]=\Tr\left[\mbs_{(23)}\right]=d^2;\hspace{5mm}\Tr\left[\mbs_{(123)}\right]=\Tr\left[\mbs_{(132)}\right]=d;\\
&\Tr\left[\Omega_{12;12}\right]=\Tr\left[\Omega_{13;13}\right]=\Tr\left[\Omega_{23;23}\right]=d^2;\hspace{5mm}\Tr\left[\Omega_{12;13}\right]=\Tr\left[\Omega_{12;23}\right]=\Tr\left[\Omega_{13;12}\right]=\Tr\left[\Omega_{13;23}\right]=\Tr\left[\Omega_{23;12}\right]=\Tr\left[\Omega_{23;13}\right]=d\,.
  \end{align*}
  \normalsize
Multiplying both sides of Eq.~\eqref{eq:c3span} by $\id,\mbs_{(12)},\mbs_{(123)},\Omega_{12;12}$ and $\Omega_{12;13}$, respectively, we arrive at the system of equations
\begin{equation}
\begin{aligned}
1&=d^{3}a_{\id}  +3d^{2}a_{(12)}  +2d^{}a_{(123)} + 3d^{2} b_{12;12}+ 6db_{12;13}\\
1&=d^{2}a_{\id}  +(d^{3}+2d)a_{(12)}  +2d^{2}a_{(123)} + (d^{2}+2d) b_{12;12}+ (2d^2+4d)b_{12;13}\\
1&=d^{}a_{\id}  +3d^{2}a_{(12)}  +(d^{3}+d)a_{(123)} + 3d b_{12;12}+ (3d^2+3d)b_{12;13}\\
\alpha_w &= d^{2} a_{\id} + (2d + d^{2}) a_{(12)} + 2d\, a_{(123)} + (d^{3} + 2d) b_{12;12} + (4d^{2} + 2d) b_{12;13}, \\
\alpha_w &= d\, a_{\id} + (2d + d^{2}) a_{(12)} + (d^{2} + d) a_{(123)} + (2d^{2} + d) b_{12;12} + (d^{3} + 2d^{2} + 3d) b_{12;13}.
\end{aligned}
\end{equation}
  Solving this system we find that the result is as claimed. 
\end{proof}

With Lemma~\ref{lem:gw2} in hand it is  straightforward to evaluate the global orthogonal Clifford measurement channel. We will first prove Proposition~\ref{prop:cmc}, of which Proposition~\ref{prop:gmc} is a special case.

\cmc*

\begin{proof}
Starting from the characterization Eq.~\eqref{eq:mc1} of the measurement channel we can simply directly calculate:
\begin{align}
  \mcm_{{\rm O}(d);\alpha}(A)&= \int_{U\in {\rm O}(d)} d\mu(U) \sum_w  \Tr\left[A U^\dagger\Pi_wU\right] U^\dagger\Pi_w U \nonumber \\
  &=  \Tr_1 \left[(A \otimes \id)\sum_w\int_{U\in {\rm O}(d)} d\mu(U)\  U^{\dagger\otimes 2}\Pi_w^{\otimes 2}U^{\otimes 2}\right]\nonumber \\
  &=  \Tr_1 \left[(A \otimes \id)\sum_w \left( \frac{(d-\alpha_w)\id + (d-\alpha_w)\mbs + (\alpha_w d+\alpha_w-2)\om}{d(d-1)(d+2)}  \right)\right] \label{eq:g2cu} \\
  &=  \Tr_1 \left[(A \otimes \id)\left( \frac{(d^2-\alpha)\id + (d^2-\alpha)\mbs + (\alpha d+\alpha-2d)\om}{d(d-1)(d+2)}  \right)\right] \nonumber \\
  &=  \frac{\Tr[A](d^2-\alpha)\id+(d^2-\alpha)A +(\alpha d+\alpha-2d)A^\mathsf{T} }{d(d-1)(d+2)}\nonumber\,,
\end{align}
where we have recalled the \textit{reality} $\alpha=\sum_w \alpha_w$ of the basis and
in Eq.~\eqref{eq:g2cu} used the result Eq.~\eqref{eq:g2c} of Lemma~\ref{lem:gw2}. Further recalling the definition 
\begin{equation*}
  \mcd_{p}(A) = \frac{p\Tr[A]}{d}\id + (1-p)A \,, 
\end{equation*}
of the depolarizing channel of strength $p$ we then have, with $p_\alpha=(d^2-\alpha)/((d-1)(d+2))$, 
\begin{align*}
\mcm_{{\rm O}(d);\alpha}(A)&= \frac{\Tr[A]p_\alpha\id}{d}+\frac{(1-p_\alpha)}{(1-p_\alpha)}\frac{(d^2-\alpha)A +(\alpha d+\alpha-2d)A^\mathsf{T} }{d(d-1)(d+2)}\\
&=\frac{\Tr[A]p_\alpha\id}{d}+(1-p_\alpha)\frac{(d-1)(d+2)}{d-2+\alpha}\frac{(d^2-\alpha)A +(\alpha d+\alpha-2d)A^\mathsf{T} }{d(d-1)(d+2)}\\
&=\frac{\Tr[A]p_\alpha\id}{d}+(1-p_\alpha)\frac{(d^2-\alpha)A +(\alpha d+\alpha-2d)A^\mathsf{T} }{d(d-2+\alpha)}\\
&= \mcd_{p_\alpha}(\widetilde{A}_\alpha)
\end{align*}
with 
\begin{equation}\label{eq:atwiddle}
\widetilde{A}_\alpha=\frac{(d^2-\alpha)A +(\alpha d+\alpha-2d)A^\mathsf{T} }{d(d-2+\alpha)}\,.
\end{equation}
\end{proof}
If the basis is real then we have $\alpha_w=1\ \forall w$, and so $\alpha=d$. Making this substitution in Proposition~\ref{prop:cmc} proves Proposition~\ref{prop:gmc}.\\ 

As discussed in the main text, the general case of a non-real basis interpolates between real and unitary classical shadows. The reality $\alpha$ of a given basis is in general extensive, $\alpha\sim\mco(d)$; defining $\alpha=fd$,  we have
\begin{equation*}
\widetilde{A}_{fd}=\frac{(d-f)A +(f d+f-2)A^\mathsf{T} }{d-2+fd} \xrightarrow[d\to\infty]{} \frac{A +f A^\mathsf{T} }{f+1}.
\end{equation*}
For all values of $d$, and to simplify the notation defining 
\begin{equation*}
    q_\alpha=\frac{d^2-\alpha  }{d(d-2+\alpha)},\quad q'_\alpha=1-q_\alpha,
\end{equation*}
we have (note $q_\alpha\geq q'_\alpha$)
\begin{align*}
\mcm_{{\rm O}(d);\alpha}(A)&= \frac{\Tr[A]p_\alpha\id}{d}+(1-p_\alpha)(q_\alpha A +q'_\alpha A^\mathsf{T})\\
&= (q_\alpha+q'_\alpha)\frac{\Tr[A]p_\alpha\id}{d}+(1-p_\alpha)((q_\alpha+q'_\alpha-q'_\alpha)A +q'_\alpha A^\mathsf{T})\\
&= q_\alpha\frac{\Tr[A]p_\alpha\id}{d} + q'_\alpha\left[ \frac{\Tr[A]p_\alpha\id}{d} + (1-p_\alpha)(A +A^\mathsf{T}) \right] +(1-p_\alpha)(q_\alpha-q'_\alpha)A \\
&= (q_\alpha-q'_\alpha)\frac{\Tr[A]p_\alpha\id}{d} + q'_\alpha\left[ \frac{2\Tr[A]p_\alpha\id}{d} + (1-p_\alpha)(A +A^\mathsf{T}) \right] +(1-p_\alpha)(q_\alpha-q'_\alpha)A \\
&= (q_\alpha-q'_\alpha)\frac{\Tr[A]p_\alpha\id}{d} + 2q'_\alpha\left[ \frac{\Tr[A]p_\alpha\id}{d} + (1-p_\alpha)\frac{(A +A^\mathsf{T})}{2} \right] +(1-p_\alpha)(q_\alpha-q'_\alpha)A \\
&= (q_\alpha-q'_\alpha)\mcd_{p_\alpha} (A) + 2q'_\alpha\mcd_{p_\alpha}(A_{\rm sym})\,.
\end{align*}
That is, the channel constitutes a tunable linear combination of real and unitary classical shadows, depending through $q_\alpha$ and $q'_\alpha$ on the reality $\alpha$ of the measurement basis. For example, let us pick a random basis by taking the columns of a Haar random unitary $U$. We then have
\begin{align*}
\underset{U\sim\mu_\mbu}{\mbe} \alpha &=\underset{U\sim\mu_\mbu}{\mbe}\sum_w \alpha_w=\underset{U\sim\mu_\mbu}{\mbe} \sum_w \abs{\braket{w|w^*}}^2 =\underset{U\sim\mu_\mbu}{\mbe} \sum_{i=1}^d \abs{\sum_{j=1}^d U_{i,j}^2}^2 =d\underset{U\sim\mu_\mbu}{\mbe} \abs{\sum_{j=1}^d U_{1,j}^2}^2 \\
&=d\underset{U\sim\mu_\mbu}{\mbe} \sum_{i,j=1}^d U_{1,i}^2 (U_{1,j}^*)^2 =d\underset{U\sim\mu_\mbu}{\mbe} \sum_{i=1}^d |U_{1,i}|^4 =d^2\underset{U\sim\mu_\mbu}{\mbe}|U_{1,1}|^4 =\frac{2d}{d+1}\,,
\end{align*}
where we have used $\mbe_{U\sim\mu_{\mbu(d)}}|U_{i,j}|^4=2/(d^2+d)\ \forall 1\leq i,j\leq d$.
So for a random basis, as $d\to\infty$ we have $f= 2/(d+1)\to 0,\ q\to 1,\ q'\to 0$, recovering the global unitary shadows channel.\\

\noindent
With the help of Lemma~\ref{lem:gw3} we can readily evaluate the variance induced by the global orthogonal channels. For simplicity we begin with the real case:

\gv*

\begin{proof}
We begin by noting that~\cite{huang2020predicting}
\begin{align*}
{\rm Var}[\hat{a}] &= \mbe\left[\left(\hat{a}-\mbe[\hat{a}]\right)^2\right]\\
&=\mbe\left[\left(\Tr[A\hat{\rho}]-\mbe[\Tr[A\hat{\rho}]]\right)^2\right]\\
&=\mbe\left[\left(\Tr[(A_0 + \Tr[A]\id/d)\hat{\rho}]-\mbe[\Tr[(A_0 + \Tr[A]\id/d)\hat{\rho}]]\right)^2\right]\\
&=\mbe\left[\left(\Tr[A_0\hat{\rho}]+ \Tr[A]\Tr[\hat{\rho}]-\mbe[\Tr[A_0\hat{\rho}]]-\Tr[A]\mbe[\Tr[\hat{\rho}]]\right)^2\right]\\
&=\mbe\left[\left(\Tr[A_0\hat{\rho}]-\mbe[\Tr[A_0\hat{\rho}]]\right)^2\right]\\
&=\mbe\left[\left(\hat{a}_0-\mbe[\hat{a}_0]\right)^2\right]\\
&={\rm Var}[\hat{a}_0] \,,
\end{align*}
where $A_0=A-\Tr[A]\id/d$ is the traceless component of $A$, and we have used that $\Tr[\hat{\rho}]=1$ for all classical shadows $\hat{\rho}$. This implies that we can consider only the traceless component of the observables we wish to measure without affecting the variance, which will reduces the number of non-zero terms in the following calculations. We now evaluate the two terms on the right hand side of
\begin{equation}\label{eq:vardef}
 {\rm Var}[\hat{a}_0] = \mbe\left[\hat{a}_0^2\right]-\mbe\left[\hat{a}_0\right]^2.
\end{equation}
The first term is:
  \begin{align}
    \mbe[\hat{a}_0^2] &= \sum_w\int_{U\in {\rm O}(d)} d\mu(U)\Tr\left[\rho U^\dagger\Pi_wU\right]\Tr\left[A_{ 0} \mcm^{-1}\left(U^\dagger\Pi_wU\right)\right] ^2 \nonumber \\
     &= \sum_w\int_{U\in {\rm O}(d)} d\mu(U)   \Tr\left[\rho U^\dagger\Pi_wU\right]\Tr\left[\mcm^{-1}\left(A_{ 0}\right) U^\dagger\Pi_wU\right] ^2 \nonumber \\
     &= \frac{(d+2)^2}{4}\sum_w\int_{U\in {\rm O}(d)} d\mu(U)   \Tr\left[\rho U^\dagger\Pi_wU\right]\Tr\left[A_{{\rm sym}; 0} U^\dagger\Pi_wU\right] ^2\label{eq:dpinv}  \\
     &= \frac{(d+2)^2}{4}\sum_w\int_{U\in {\rm O}(d)} d\mu(U)   \Tr\left[\left(\rho\otimes A_{{\rm sym}; 0}\otimes A_{{\rm sym}; 0}\right)\left(U^{\dagger\otimes 3}\Pi_w^{\otimes 3}U^{\otimes 3}\right) \right] \nonumber\\
     &= \frac{(d+2)^2}{4}\Tr\left[\left(\rho\otimes A_{{\rm sym}; 0}\otimes A_{{\rm sym}; 0}\right)\sum_w\int_{U\in {\rm O}(d)} d\mu(U) \left(U^{\dagger\otimes 3}\Pi_w^{\otimes 3}U^{\otimes 3}\right) \right] \nonumber\\
     &= \frac{(d+2)^2}{4}\Tr\left[\left(\rho\otimes A_{{\rm sym}; 0}\otimes A_{{\rm sym}; 0}\right)\sum_w\frac{\sum_{\pi\in S_3} \mbs_{\pi}+\sum_x \Omega_x }{d(d+2)(d+4)}   \right]\label{eq:usegc3} \\
     &= \frac{d+2}{4(d+4)}\Tr\left[\left(\rho\otimes A_{{\rm sym}; 0}\otimes A_{{\rm sym}; 0}\right)\left({\sum_{\pi\in S_3} \mbs_{\pi}+\sum_x \Omega_x }\right)   \right] \label{eq:15sumo}\\
     &= \frac{d+2}{4(d+4)}\left(2\Tr\left[ A_{{\rm sym}; 0}^2\right] + 8\Tr\left[\rho A_{{\rm sym}; 0}^2\right]\right) \nonumber  \\
     &= \frac{d+2}{2d+8}\left(\Tr\left[ A_{{\rm sym}; 0}^2\right] + 4\Tr\left[\rho A_{{\rm sym}; 0}^2\right]\right) \nonumber\,.
  \end{align}
In Eq.~\eqref{eq:dpinv} we have used the characterization of $\mcm$ as a depolarizing channel (Prop.~\ref{prop:gmc}) to easily  invert its action on (the traceless) $A_0$, and in Eq.~\eqref{eq:usegc3} the result of Lemma~\ref{lem:gw3}. Finally, one can readily verify that the 15 summands arising from Eq.~\eqref{eq:15sumo} consist of two copies of $\Tr\left[ A_{{\rm sym}; 0}^2\right]$, eight copies of $\Tr\left[\rho A_{{\rm sym}; 0}^2\right]$ and five copies of zero (due to factors of $\Tr[A_0]=0$). For the second term on the right hand side of Eq.~\eqref{eq:vardef} we have
  \begin{align*}
    \mbe[\hat{a}_0]^2 &=\left( \sum_w\int_{U\in {\rm O}(d)} d\mu(U)   \Tr\left[\rho U\dagger\Pi_wU\right]\Tr\left[A_{ 0} \mcm^{-1}\left(U^\dagger\Pi_wU\right)\right]\right) ^2  \\
     &= \left(\sum_w\int_{U\in {\rm O}(d)} d\mu(U) \Tr\left[\rho U^\dagger\Pi_wU\right]\Tr\left[\mcm^{-1}\left(A_{ 0}\right) U^\dagger\Pi_wU\right] \right)^2  \\
     &= \left(\Tr\left[\mcm^{-1}\left(A_{ 0}\right)\sum_w\int_{U\in {\rm O}(d)} d\mu(U) \Tr\left[\rho U^\dagger\Pi_wU\right] U^\dagger\Pi_wU\right] \right)^2  \\
     &= \Tr\left[\mcm^{-1}\left(A_{ 0}\right)\mcm(\rho)\right] ^2  \\
     &= \Tr\left[\mcm\left(\mcm^{-1}\left(A_{ 0}\right)\right)\rho\right] ^2  \\
     &= \Tr\left[A_{{\rm sym}; 0}\rho\right] ^2  \,,
  \end{align*}
  where we have used the self-adjointness of $\mcm$ with respect to the Hilbert-Schmidt inner product~\cite{huang2020predicting} and the fact that 
  \begin{equation*}
      \mcm\circ\mcm^{-1}=\id_{{\rm range}(\mcm)}={\rm proj}_{\{A\hspace{0.5mm}|\hspace{0.5mm}A=A^\mathsf{T}\}}\,.
  \end{equation*}
Altogether we therefore have
  \begin{equation*}
  {\rm Var}(\hat{a}) =\frac{d+2}{2d+8} \left(\Tr[A_{{\rm sym}; 0}^2] + 4\Tr[\rho A_{{\rm sym}; 0}^2]\right) - \Tr[A_{{\rm sym}; 0} \rho]^2 .
  \end{equation*}
\end{proof}

One can repeat essentially the same calculation as in the proof of Prop.~\ref{prop:gvar} in the more general case of a non-real basis of given reality $\alpha$; the result is, with $p_\alpha=(d^2-\alpha)/((d-1)(d+2))$ and $\widetilde{A}_\alpha$ as in Eq.~\eqref{eq:atwiddle},
\begin{align*}
{\rm Var}_{\alpha}(\hat{a}) =& \frac{1}{(1-p_\alpha)^2}\left( \frac{(d^2-3\alpha+2d)\left(\Tr[\widetilde{A}_{\alpha;0}^2] + 2\Tr[\rho\widetilde{A}_{\alpha;0}^2]  \right) }{d(d-1)(d+2)(d+4)}   \right)\\
&+\frac{1}{(1-p_\alpha)^2}\left( \frac{(\alpha d+\alpha-2d)\left( \Tr[\widetilde{A}_{\alpha;0}\widetilde{A}_{\alpha;0}^\mathsf{T}]+2\Tr[\rho\widetilde{A}_{\alpha;0}\widetilde{A}_{\alpha;0}^\mathsf{T}]+2\Tr[\rho\widetilde{A}_{\alpha;0}^\mathsf{T}\widetilde{A}_{\alpha;0}]+2\Tr[\rho\widetilde{A}_{\alpha;0}^\mathsf{T}\widetilde{A}_{\alpha;0}^\mathsf{T}]\right) }{d(d-1)(d+2)(d+4)}   \right) \nonumber\\
&- \Tr\left[\widetilde{A}_{\alpha;0}\rho\right] ^2  \,,
\end{align*}
where $\widetilde{A}_{\alpha;0}=\widetilde{A}_\alpha-\Tr[A]\id/d$. As discussed in the main text, for large $d$ we can typically (by Holder's inequality) neglect the terms which involve $\rho$. By the Cauchy-Schwarz inequality we then have
\begin{align*}
{\rm Var}_{\alpha}(\hat{a}) &\sim \frac{1}{(1-p_\alpha)^2}\left( \frac{(d^2-3\alpha+2d)\Tr[\widetilde{A}_{\alpha;0}^2]   + (\alpha d+\alpha-2d) \Tr[\widetilde{A}_{\alpha;0}\widetilde{A}_{\alpha;0}^\mathsf{T}] }{d(d-1)(d+2)(d+4)}   \right)\\
&\leq \frac{\|\widetilde{A}_0\|_2^2}{(1-p_\alpha)^2}\left( \frac{d^2 + \alpha d-2\alpha }{d(d-1)(d+2)(d+4)}   \right)\\
&= \|\widetilde{A}_0\|_2^2\frac{(d-1)^2(d+2)^2}{(d+\alpha-2)^2}\left( \frac{d^2 + \alpha d-2\alpha }{d(d-1)(d+2)(d+4)}   \right)\\
&\sim \frac{\|\widetilde{A}_0\|_2^2}{1+f} \,,
\end{align*}
where in the last step we have dropped all but the leading terms in $d$, and recalled the definition $f=\alpha/d$.

\lmc*

\begin{proof}
  Let 
  \begin{equation}
  A=    \sum_{i_1,\ldots,i_n\in\{\id,X,Y,Z\}}a_{i_1,\ldots,i_n}A_{i_1}\otimes\ldots\otimes A_{i_n}
  \end{equation}
  As in the case of local unitary Cliffords~\cite{huang2020predicting}, the measurement channel here factorises into a tensor product (following the factorisation of both the sampled operator $O\in {\rm O}(2)^{\otimes n}$ and the computational basis measurements). We therefore have
\begin{align*}
  \mcm(A)&=  \bigotimes_{j=1}^n \sum_{b_j\in\{0,1\}}\int_{U_j\in O(2)}d\mu(U_j)\sum_{i_1,\ldots,i_n\in\{\id,X,Y,Z\}}a_{i_1,\ldots,i_n}  \Tr\left[A_{i_j} U_j^\dagger \ketbra{b_j} U_j\right] U_j^\dagger\ketbra{b_j} U_j\\
  &=  \bigotimes_{j=1}^n \sum_{i_1,\ldots,i_n\in\{\id,X,Y,Z\}}a_{i_1,\ldots,i_n} \mcd_{1/2} \left((A_{i_j})_{\rm sym}\right)\\
  &=  \bigotimes_{j=1}^n \sum_{i_1,\ldots,i_n\in\{\id,X,Z\}}a_{i_1,\ldots,i_n} \mcd_{1/2} \left(A_{i_j}\right)\\
  &=  \mcd_{1/2}^{\otimes n} \big(A_{\rm L.S.}\big)\,.
\end{align*}
Where we have applied the result of Proposition~\ref{prop:gmc} in the special case $d=2$. The channel pseudo-inversion follows immediately. For instance, for locally symmetric $A=\bigotimes_j A_j$ we have
\begin{equation}
 \mcm^{-1}\bigg(\bigotimes_j A_j\bigg)=\bigotimes_{j=1}^n \left(2A_j - \frac{\Tr[A_j]}{2}\id\right),
  \label{eq:localoinv}
\end{equation}
with the classical shadows therefore being given by
\begin{equation}
  \hat{\rho} =\mcm^{-1}\left(U^\dagger |\hat{b}\rangle\langle\hat{b}| U\right)=\bigotimes_{j=1}^n \mcd_{1/2}^{-1}\left(U_j^\dagger |\hat{b}_j\rangle\langle\hat{b}_j| U_j\right) =\bigotimes_{j=1}^n \left(2U_j^\dagger |\hat{b}_j\rangle\langle\hat{b}_j| U_j - \frac{1}{2}\id\right)\,.
\end{equation}
\end{proof}

\lp*

\begin{proof}

For notational simplicity let $P=P_ {i_1}\otimes\ldots\otimes P_{i_k}\otimes\mbi^{n-k},\ P_{i_j}\in\{X,Z\}\ \forall 1\leq j\leq k$. As we will see next, the calculation is exactly the same irrespective of the $k$ over which the Pauli operators act non-trivially. From Eq.~\eqref{eq:localoinv} we have
\begin{align*}
\mcm^{-1}(P)&=\mcm^{-1}\left(P_{i_1}\otimes\ldots\otimes P_{i_k}\otimes\mbi^{n-k}\right)\\
&=\bigotimes_{j=1}^k \left(2P_{i_j} - \frac{\Tr[P_{i_j}]}{2}\id\right)\otimes \bigotimes_{j=k+1}^n \left(2\mbi - \frac{\Tr[\mbi]}{2}\id\right)\\
&=2^k \bigotimes_{j=1}^k \left(P_{i_j}\right)\otimes \bigotimes_{j=k+1}^n \mbi\\
&=2^kP\,.
\end{align*}
We can now directly calculate:  
\begin{align}
    {\rm Var}[\hat{p}] &\leq \max_{\rho} \sum_w\int_{U\in O(2)^{\otimes n}} d\mu(U)  \Tr\left[\rho U^\dagger\Pi_wU\right]\Tr\left[\bigg(\bigotimes_{j=1}^k P_j \otimes \mbi^{n-k}\bigg)\mcm^{-1}\left(U^\dagger\Pi_wU\right)\right] ^2\nonumber  \\
     &= \max_{\rho} \sum_w\int_{U\in O(2)^{\otimes n}} d\mu(U)   \Tr\left[\rho U^\dagger\Pi_wU\right]\Tr\left[\mcm^{-1}\bigg(\bigotimes_{j=1}^k P_j \otimes \mbi^{n-k}\bigg) U^\dagger\Pi_wU\right] ^2  \nonumber\\
     &= \max_{\rho} 4^k \Tr\Big[\rho\Big(\bigotimes_{j=1}^k \sum_{b_j\in\{0,1\}}\int_{U\in O(2)}d\mu(U)\ U^\dagger\Pi_{b_j}U\Tr\left[ P_j U^\dagger\Pi_{b_j}U\right] ^2 \nonumber\\
     &\quad\quad \quad\quad \quad\quad \quad\quad \times\bigotimes_{j'=k+1}^n \sum_{b_{j'}\in\{0,1\}}\int_{U\in O(2)} d\mu(U)\ U^\dagger\Pi_{b_{j'}}U\Tr\left[  U^\dagger\Pi_{b_{j'}}U\right] ^2 \Big)\Big] \nonumber\\
     &= \max_{\rho} 4^k \Tr\left[\rho\left(\bigotimes_{j=1}^k \sum_{b_j\in\{0,1\}}\int_{U\in O(2)}d\mu(U)\ U^\dagger\Pi_{b_j}U\Tr\left[ P_j U^\dagger\Pi_{b_j}U\right] ^2 \bigotimes_{j'=k+1}^n \int_{U\in O(2)}d\mu(U) U^\dagger\sum_{b_{j'}\in\{0,1\}}\Pi_{b_{j'}}U\right)\right]\nonumber \\
     &= \max_{\rho} 4^k \Tr\left[\rho\left(\bigotimes_{j=1}^k \sum_{b_j\in\{0,1\}}\int_{U\in O(2)}d\mu(U)\ U^\dagger\Pi_{b_j}U\Tr\left[ P_j U^\dagger\Pi_{b_j}U\right] ^2 \bigotimes_{j'=k+1}^n \mbi\right)\right]\nonumber \\
     &=  \max_{\rho} 4^k \Tr\left[(\rho\otimes\id\otimes\id)\bigotimes_{j=1}^k \sum_{b_j\in\{0,1\}}\frac{1}{48} \left(\mbi\otimes P_{j}\otimes P_j\right)\left({\sum_{\pi\in S_3} \mbs_{\pi}+\sum_x \Omega_x }\right)\right]\label{eq:48} \\
     &= \max_{\rho} 4^k \Tr\left[\rho\bigotimes_{j=1}^k \frac{1}{24} \left(2\Tr[P_j^2]\mbi + 8P_j^2\right)\right]\nonumber \\
     &= \max_{\rho} 4^k \Tr\left[ \frac{1}{2^k} \rho\right]\nonumber \\
     &= 2^k\nonumber\,.
  \end{align}
Here we have again used the self-adjointness of $\mcm$ with respect to the Hilbert-Schmidt inner product~\cite{huang2020predicting}, along with $\sum_{b_{j'}\in\{0,1\}}\Pi_{b_{j'}}=\mbi$. In Eq.~\eqref{eq:48} we have used our expression Eq.~\eqref{eq:gco3result} for the third order twirl of a rank one projector over O (whence the factor of $2(2+2)(2+4)=48)$. The sum of the 15 terms of Eq.~\eqref{eq:48} is evaluated as in the proof of Prop.~\ref{prop:gvar} (note $\Tr P_j = 0\ \forall 1\leq j\leq k$). 
\end{proof}

Before coming to the case of a more general local observable we prove the following lemma (c.f. Lemma 4 of Ref.~\cite{huang2020predicting}):

\begin{lemma}\label{lem:lr}
Given two $k$-qubit ``locally real'' Pauli observables $P_{\boldsymbol{p}}=P_{p_1}\otimes \cdots \otimes P_{p_k},\ P_{\boldsymbol{q}}=P_{q_1}\otimes \cdots \otimes P_{q_k}$, with $\boldsymbol{p},\boldsymbol{q}\in\{\mbi,X,Z\}^k$, we have, for any state $\rho$,
\begin{equation}
    \int_{U\in {\rm O}(2)^{\otimes k}}d\mu(U) \sum_w \Tr[\rho U^\dagger \Pi_w U]\Tr[(\mcd^{-1}_{1/2})^{\otimes k}(P_{\boldsymbol{p}}) U^\dagger \Pi_w U]\Tr[(\mcd^{-1}_{1/2})^{\otimes k}(P_{\boldsymbol{q}}) U^\dagger \Pi_w U]=f(\boldsymbol{p},\boldsymbol{q})\Tr[\rho P_{\boldsymbol{p}}P_{\boldsymbol{q}}]\,,
\end{equation}
where $f(\boldsymbol{p},\boldsymbol{q})=0$ whenever there exists an index $i$ such that $p_i  \neq q_i$ and $p_i, q_i  \neq\mbi$. Otherwise, $f(\boldsymbol{p},\boldsymbol{q})=2^s$, where $s$ is the number of indices where the corresponding Paulis both match and are not the identity, i.e. $s = \lvert\{i : p_i = q_i,\ p_i\neq \mbi\}\rvert$.
\end{lemma}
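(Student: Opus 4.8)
The plan is to exploit the tensor-product structure of the local orthogonal group, of the computational basis, and of the operators $(\mcd^{-1}_{1/2})^{\ot k}(P_{\boldsymbol{p}})$ to reduce everything to a single-qubit computation, and then to evaluate that single-qubit object using Lemma~\ref{lem:gw3} specialized to $d=2$.

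First I would push the Haar average inside a single trace. Writing $Y \defeq U^\dagger\Pi_wU$, $O_{\boldsymbol{p}}\defeq(\mcd^{-1}_{1/2})^{\ot k}(P_{\boldsymbol{p}})$ and $O_{\boldsymbol{q}}\defeq(\mcd^{-1}_{1/2})^{\ot k}(P_{\boldsymbol{q}})$, the elementary identity $\Tr[\rho Y]\Tr[O_{\boldsymbol{p}}Y]\Tr[O_{\boldsymbol{q}}Y]=\Tr[(\rho\ot O_{\boldsymbol{p}}\ot O_{\boldsymbol{q}})Y^{\ot 3}]$ turns the left-hand side into $\Tr[(\rho\ot O_{\boldsymbol{p}}\ot O_{\boldsymbol{q}})K]$ with $K \defeq \sum_w\int_{U\in{\rm O}(2)^{\ot k}}d\mu(U)\,(U^\dagger\Pi_wU)^{\ot 3}$. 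Since $U=U_1\ot\cdots\ot U_k$ with the $U_j$ independently Haar on ${\rm O}(2)$, since $\Pi_w=\Pi_{b_1}\ot\cdots\ot\Pi_{b_k}$ and $\sum_w=\sum_{b_1}\cdots\sum_{b_k}$, and since $O_{\boldsymbol{p}},O_{\boldsymbol{q}}$ are themselves tensor products of single-qubit operators, reordering the $3k$ tensor legs so that the three copies of each qubit are grouped together shows that $\Tr[(\rho\ot O_{\boldsymbol{p}}\ot O_{\boldsymbol{q}})K]=\Tr[\rho\,\Xi]$ with $\Xi=\bigotimes_{j=1}^k\xi_{p_j,q_j}$, where for $a,b\in\{\id,X,Z\}$ I set $\xi_{a,b}\defeq\Tr_{2,3}[(\id\ot\mcd^{-1}_{1/2}(P_a)\ot\mcd^{-1}_{1/2}(P_b))M]$, $\Tr_{2,3}$ is the partial trace over the second and third single-qubit copies, and $M\defeq\sum_{b'\in\{0,1\}}\int_{{\rm O}(2)}d\mu(V)\,(V^\dagger\Pi_{b'}V)^{\ot 3}$ is the single-qubit third moment.

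Second, I would compute $M$ from Lemma~\ref{lem:gw3} at $d=2$: the single-qubit computational basis is real, so there $\alpha_{b'}=1$, and the lemma gives $M=\frac{1}{24}(\sum_{\pi\in S_3}\mbs_\pi+\sum_x\Omega_x)$, the normalized sum of the fifteen Brauer elements of Eq.~\eqref{eq:oc3}. For each of the nine pairs $(a,b)$ I would then contract $(\id\ot\mcd^{-1}_{1/2}(P_a)\ot\mcd^{-1}_{1/2}(P_b))$ against each of the fifteen operators and trace out the last two copies, the same bookkeeping used in the proofs of Proposition~\ref{prop:gvar} and Corollary~\ref{prop:lp}. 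Two facts make this short: $\mcd^{-1}_{1/2}(\id)=\id$, $\mcd^{-1}_{1/2}(X)=2X$, $\mcd^{-1}_{1/2}(Z)=2Z$ (Eq.~\eqref{eq:localoinv}); and each of $\id,X,Z$ is symmetric, so every transpose produced by an $\Omega$ term drops out. Writing $A=\mcd^{-1}_{1/2}(P_a)$ and $B=\mcd^{-1}_{1/2}(P_b)$, the six permutations contribute $\Tr[A]\Tr[B]\id+\Tr[B]A+\Tr[A]B+\Tr[AB]\id+AB+BA$ and the nine $\Omega$ terms contribute $\Tr[B]A+\Tr[A]B+\Tr[AB]\id+3AB+3BA$, so that $\xi_{a,b}=\frac{1}{24}(\Tr[A]\Tr[B]\id+2\Tr[B]A+2\Tr[A]B+2\Tr[AB]\id+4AB+4BA)$. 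Substituting the three possible values of $A$ and $B$ gives $\xi_{\id,\id}=\id$, $\xi_{X,X}=\xi_{Z,Z}=2\id$, $\xi_{\id,X}=\xi_{X,\id}=X$, $\xi_{\id,Z}=\xi_{Z,\id}=Z$ and $\xi_{X,Z}=\xi_{Z,X}=0$; in every case $\xi_{a,b}=g(a,b)P_aP_b$, where $g(a,b)=0$ if $a\neq b$ with $a,b\neq\id$, $g(a,b)=2$ if $a=b\neq\id$, and $g(a,b)=1$ otherwise.

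Finally, $\Xi=\bigotimes_{j=1}^k\xi_{p_j,q_j}=\left(\prod_{j=1}^k g(p_j,q_j)\right)P_{\boldsymbol{p}}P_{\boldsymbol{q}}$, and $\prod_j g(p_j,q_j)$ vanishes exactly when some index $i$ has $p_i\neq q_i$ with both non-identity, and otherwise equals $2^{s}$ with $s=|\{i:p_i=q_i\neq\id\}|$; that is, $\prod_j g(p_j,q_j)=f(\boldsymbol{p},\boldsymbol{q})$. Hence the left-hand side equals $\Tr[\rho\,\Xi]=f(\boldsymbol{p},\boldsymbol{q})\Tr[\rho\,P_{\boldsymbol{p}}P_{\boldsymbol{q}}]$, as claimed. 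The main obstacle I anticipate is the careful evaluation of the nine $\xi_{a,b}$ — tracking how each of the fifteen $\mbs_\pi$ and $\Omega_x$ contracts under $\Tr_{2,3}$, and checking that the transposes indeed collapse — together with a careful justification of the tensor-leg reordering in the first step; both are routine but need attention.
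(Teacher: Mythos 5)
Your proposal is correct and follows essentially the same route as the paper: reduce to the single-qubit case via the tensor-product structure of ${\rm O}(2)^{\otimes k}$, the computational basis and $(\mcd^{-1}_{1/2})^{\otimes k}$, then evaluate the single-qubit moments through the Brauer-algebra expansion of the orthogonal twirl (Lemma~\ref{lem:gw3} at $d=2$), exploiting the symmetry of $\id,X,Z$ to collapse the transposes. The only (cosmetic) differences are that you treat all nine single-qubit pairs uniformly with the third moment and a general formula for $\xi_{a,b}$, whereas the paper splits into subcases and uses the second-moment result of Lemma~\ref{lem:gw2} when one Pauli is the identity, and that you spell out the tensor-leg reordering more explicitly; your single-qubit values and the resulting $f(\boldsymbol{p},\boldsymbol{q})$ agree with the paper's.
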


\begin{proof}
The proof is highly similar to that of the analogous result for local unitaries (as given in Ref.~\cite{huang2020predicting}) which we follow while making the appropriate adjustments. Now, by linearity and the respect that the inverse measurement channel $(\mcd^{-1}_{1/2})^{\otimes k}$ shows for tensor product  observables, the claim will readily follow if we can establish it in the single-qubit case. As such, we begin by considering some important subcases. Beginning with the case $P_p=P_q=\mbi$ we have:
\begin{align*}
\int_{U\in {\rm O}(2)}d\mu(U) \sum_w \Tr[\rho U^\dagger \Pi_w U]\Tr[\mcd^{-1}_{1/2}(\mbi) U^\dagger \Pi_w U]^2 &=\int_{U\in {\rm O}(2)} d\mu(U)\sum_w \Tr[\rho U^\dagger \Pi_w U]\Tr[ U^\dagger \Pi_w U]^2\\
&=\int_{U\in {\rm O}(2)} d\mu(U)  \Tr[\rho U^\dagger\sum_w \Pi_w U]\\
&=f(\mbi,\mbi)\Tr[\rho \mbi^2]\,,
\end{align*}
where we have used $\mcd^{-1}_{1/2}(\mbi)=\sum_w \Pi_w=\mbi$. Next, suppose $P_q=\mbi\neq P_p$ (the case $P_q\neq \mbi= P_p$ is similar).  Recalling Eq.~\eqref{eq:g2c} and that $\mcd^{-1}_{1/2}(P_{{p}})=2P_{{p}}$ we then see:
\begin{align*}
&\int_{U\in {\rm O}(2)}d\mu(U) \sum_w \Tr[\rho U^\dagger \Pi_w U]\Tr[\mcd^{-1}_{1/2}(P_{{p}}) U^\dagger \Pi_w U]\Tr[\mcd^{-1}_{1/2}(\mbi) U^\dagger \Pi_w U]\nonumber\\
&= \int_{U\in {\rm O}(2)}d\mu(U) \sum_w \Tr[\rho U^\dagger \Pi_w U]\Tr[2P_{{p}} U^\dagger \Pi_w U]\\
&= 2 \Tr[(\rho\otimes P_p) \sum_w \int_{U\in {\rm O}(2)} d\mu(U)\ U^{\dagger\otimes 2} \Pi_w^{\otimes 2} U^{\otimes 2}]\\
&= 2 \Tr[(\rho\otimes P_p) \left(\frac{\id + \mbs + \om}{2+2}  \right)]\\
&= f(P_p,\mbi) \Tr[\rho P_p \mbi]\,.
\end{align*}
Finally, we consider the case $P_p, P_q\neq \mbi$. This time we find (recalling Eq.~\eqref{eq:gco3result}):
\begin{align*}
&\int_{U\in {\rm O}(2)}d\mu(U) \sum_w \Tr[\rho U^\dagger \Pi_w U]\Tr[\mcd^{-1}_{1/2}(P_{{p}}) U^\dagger \Pi_w U]\Tr[\mcd^{-1}_{1/2}(P_{{q}}) U^\dagger \Pi_w U]\\
&=4\int_{U\in {\rm O}(2)} d\mu(U)\sum_w \Tr[\rho U^\dagger \Pi_w U]\Tr[P_{{p}} U^\dagger \Pi_w U]\Tr[P_{{q}} U^\dagger \Pi_w U]\\
&=4 \Tr[(\rho \otimes P_{{p}} \otimes P_{{q}}) \sum_w \int_{U\in {\rm O}(2)}d\mu(U)\ U^{\dagger\otimes 3} \Pi_w^{\otimes 3} U^{\otimes 3}]\\
&=4 \Tr[(\rho \otimes P_{{p}} \otimes P_{{q}})2\left( \frac{\sum_{\pi\in S_3} \mbs_{\pi}+\sum_x \Omega_x }{2(2+2)(2+4)} \right) ]\\
&=\frac{1}{6} \Tr[(\rho \otimes P_{{p}} \otimes P_{{q}})\left( \sum_{\pi\in S_3} \mbs_{\pi}+\sum_x \Omega_x \right) ]\\
&=\frac{1}{6} \Tr[\rho ( 2 \Tr[P_{{p}}  P_{{q}}]\mbi + 4P_{{p}}  P_{{q}}+4P_{{q}}  P_{{p}}) ]\\
&=2\delta_{p,q} \\
&=f({p},{q})\Tr[\rho P_{{p}}P_{{q}}]\,,
\end{align*}
where the various traces were evaluated by consulting the graphical representation of the commutant elements (see Fig.~\ref{fig:comm}), and recalling $X=X^\mathsf{T},\ Z=Z^\mathsf{T}$. We have also used $\{P_p,P_q\}=\Tr[P_{{p}}  P_{{q}}]=2\delta_{p,q} $. So the claim follows for all single-qubit Pauli strings, and therefore, as can readily be seen, for all Pauli strings. 
\end{proof}

\lo*
\begin{proof}
Here we will also closely follow the proof of the analogous result in Ref.~\cite{huang2020predicting}, making slight adjustments as required.  Let us consider an operator $A=\sum_{\boldsymbol{p}}a_{\boldsymbol{p}}P_{\boldsymbol{p}}$, where $P_{\boldsymbol{p}}\in\{\mbi,X,Z\}^k$ (as in previous results, the more general case in which the $P_{\boldsymbol{p}}$ act non-trivially on $k$ of $n$ total qubits is similar, and we focus on this special case for notational simplicity). We have:
\begin{align*}
{\rm Var}[\hat{a}] &\leq \max_{\rho} \int_{U\in {\rm O}(2)}d\mu(U) \sum_w \Tr[\rho U^\dagger \Pi_w U]\Tr[(\mcd^{-1}_{1/2})^{\otimes k}(A) U^\dagger \Pi_w U]^2\\
&=\max_{\rho} \int_{U\in {\rm O}(2)}d\mu(U) \sum_w \Tr[\rho U^\dagger \Pi_w U]\sum_{\boldsymbol{p},\boldsymbol{q}}a_{\boldsymbol{p}}a_{\boldsymbol{q}}\Tr[(\mcd^{-1}_{1/2})^{\otimes k}(P_{\boldsymbol{p}}) U^\dagger \Pi_w U]\Tr[(\mcd^{-1}_{1/2})^{\otimes k}(P_{\boldsymbol{q}}) U^\dagger \Pi_w U]\\
&=\max_{\rho}\Tr[\rho\sum_{\boldsymbol{p},\boldsymbol{q}}a_{\boldsymbol{p}}a_{\boldsymbol{q}}f({\boldsymbol{p}},{\boldsymbol{q}}) P_{\boldsymbol{p}}P_{\boldsymbol{q}}]\\
&=\left\|\sum_{\boldsymbol{p},\boldsymbol{q}}a_{\boldsymbol{p}}a_{\boldsymbol{q}}f({\boldsymbol{p}},{\boldsymbol{q}}) P_{\boldsymbol{p}}P_{\boldsymbol{q}}\right\|_{\infty}\,,
\end{align*}
where we have recalled the result of Lemma~\ref{lem:lr}, and that for any operator $O$ we have $\|O\|_\infty = \max_\rho \Tr[\rho O]$, where the  maximisation is over all states. To make further progress, as  in Ref.~\cite{huang2020predicting}, we introduce a partial order on Pauli strings $\boldsymbol{q},\boldsymbol{s}\in\{\mbi,X,Z\}^k$, writing $\boldsymbol{q}\triangleright\boldsymbol{s}$ if one can obtain $\boldsymbol{q}$ from $\boldsymbol{s}$ by replacing some local non-identity Paulis with $\mbi$. 
We note that there are $2^{k-|\boldsymbol{q}|}$ such strings in $\{X,Z\}^k$, where $|\boldsymbol{q}|$ is the number of qubits on which $\boldsymbol{q}$ acts non-trivially. 
Similarly, one can readily verify that the number of strings that dominate both $\boldsymbol{p}$ and $\boldsymbol{q}$ is $d({\boldsymbol{p}},{\boldsymbol{q}})=2^{k-|\boldsymbol{p}|-|\boldsymbol{q}|}f({\boldsymbol{p}},{\boldsymbol{q}})$. 
We may therefore continue:
\begin{align*}
\left\|\sum_{\boldsymbol{p},\boldsymbol{q}}a_{\boldsymbol{p}}a_{\boldsymbol{q}}f({\boldsymbol{p}},{\boldsymbol{q}}) P_{\boldsymbol{p}}P_{\boldsymbol{q}}\right\|_{\infty}&=\left\|\sum_{\boldsymbol{p},\boldsymbol{q}}a_{\boldsymbol{p}}a_{\boldsymbol{q}}d({\boldsymbol{p}},{\boldsymbol{q}})2^{-k+|\boldsymbol{p}|+|\boldsymbol{q}|} P_{\boldsymbol{p}}P_{\boldsymbol{q}}\right\|_{\infty}\\
&=\left\|2^{-k}\sum_{\boldsymbol{p},\boldsymbol{q}}a_{\boldsymbol{p}}a_{\boldsymbol{q}} \sum_{\boldsymbol{s}\hspace{0.5mm} \triangleleft \hspace{0.5mm}\boldsymbol{p},\boldsymbol{q}} 2^{|\boldsymbol{p}|+|\boldsymbol{q}|} P_{\boldsymbol{p}}P_{\boldsymbol{q}}\right\|_{\infty}\\
&=\left\|2^{-k}\sum_{\boldsymbol{s}\in\{X,Z\}^k }\left(\sum_{\boldsymbol{q} \hspace{0.5mm} \triangleright \boldsymbol{s}}   2^{|\boldsymbol{p}|} a_{\boldsymbol{p}}P_{\boldsymbol{p}}\right)^2\right\|_{\infty}\\
&\leq 2^{-k}\sum_{\boldsymbol{s}\in\{X,Z\}^k }\left(\sum_{\boldsymbol{q} \hspace{0.5mm} \triangleright \boldsymbol{s}}   2^{|\boldsymbol{q}|} |a_{\boldsymbol{q}}|\right)^2\,,
\end{align*}
where in the final line we have used the triangle inequality, the sub-multiplicativity of the spectral norm (i.e. $\|AB\|_\infty\leq\|A\|_\infty\|B\|_\infty$) and the fact that $\left\|P_{\boldsymbol{p}}\right\|_{\infty} = 1\ \forall \boldsymbol{p}$. Next, we notice that, for any $\boldsymbol{s}\in\{X,Z\}^k $,
\begin{equation*}
\sum_{\boldsymbol{q} \hspace{0.5mm} \triangleright \boldsymbol{s}} 2^{|\boldsymbol{q}|} = \sum_{j=0}^k {k \choose j}2^j = 3^k.
\end{equation*}
Applying the Cauchy-Schwarz inequality to the vectors $(2^{|\boldsymbol{q}|/2})_{\boldsymbol{q} \hspace{0.5mm} \triangleright \boldsymbol{s}}$ and $(2^{|\boldsymbol{q}|/2}|a_{\boldsymbol{q}}|)_{\boldsymbol{q} \hspace{0.5mm} \triangleright \boldsymbol{s}}$ then gives
\begin{align*}
2^{-k}\sum_{\boldsymbol{s}\in\{X,Z\}^k }\left(\sum_{\boldsymbol{q} \hspace{0.5mm} \triangleright \boldsymbol{s}}   2^{|\boldsymbol{q}|} |a_{\boldsymbol{q}}|\right)^2&\leq 2^{-k}\sum_{\boldsymbol{s}\in\{X,Z\}^k }\left(\sum_{\boldsymbol{q} \hspace{0.5mm} \triangleright \boldsymbol{s}}   2^{|\boldsymbol{q}|}\right)\left(\sum_{\boldsymbol{q} \hspace{0.5mm} \triangleright \boldsymbol{s}}   2^{|\boldsymbol{q}|} |a_{\boldsymbol{q}}|^2\right)\\
&=3^{k}\sum_{\boldsymbol{s}\in\{X,Z\}^k } \left(\sum_{\boldsymbol{q} \hspace{0.5mm} \triangleright \boldsymbol{s}}   2^{|\boldsymbol{q}|-k} |a_{\boldsymbol{q}}|^2\right)\\
&=3^{k}\sum_{\boldsymbol{q}}|a_{\boldsymbol{q}}|^2\\
&=3^{k}2^{-k}\|A\|_2^2\\
&\leq3^{k}\|A\|_\infty^2\,.
\end{align*}
Here we have used that 
\begin{equation*}
\|A\|_2^2=\left\|\sum_{\boldsymbol{p}}a_{\boldsymbol{p}}P_{\boldsymbol{p}}\right\|_2^2=\sum_{\boldsymbol{p,q}}a_{\boldsymbol{p}}a_{\boldsymbol{q}}^*\Tr[P_{\boldsymbol{p}}P_{\boldsymbol{q}}]=\sum_{\boldsymbol{p,q}}a_{\boldsymbol{p}}a_{\boldsymbol{q}}^*(2^k\delta_{\boldsymbol{p},\boldsymbol{q}})=2^k\sum_{\boldsymbol{p}}|a_{\boldsymbol{p}}|^2 \,.
\end{equation*}
Putting it all together we therefore conclude
\begin{equation*}
    {\rm Var}[\hat{a}] \leq 3^k \|A\|_{\infty}^2\,.
\end{equation*}
\end{proof}

\section{On the relation of the global real and unitary shadows variances}\label{sec:appc}
Let us consider the estimation of some symmetric self-adjoint operator $A$, with traceless component $A_0$. The exact variance expressions in the global unitary~\cite{huang2020predicting} and real (Prop.~\ref{prop:gvar}) cases are respectively
\begin{equation}
    {\rm Var}_{\rm Unitary} = \frac{d+1}{d+2}\left(\tr[A_0^2] + 2\tr[\rho A_0^2]\right) - \tr[\rho A_0]^2
\end{equation}
and
\begin{equation}
    {\rm Var}_{\rm Real} = \frac{d+2}{2d+8}\left(\tr[A_0^2] + 4\tr[\rho A_0^2]\right) - \tr[\rho A_0]^2.
\end{equation}
For $d\gg 1$ (but making no assumptions about the relevant sizes of $\tr[A_0^2]$ and $\tr[\rho A_0^2]$, i.e. not assuming $\|A_0\|_2\gg \|A_0\|_\infty$) we see that
\begin{align}
    {\rm Var}_{\rm Unitary} &\approx  \tr[A_0^2] + 2\tr[\rho A_0^2]-\tr[\rho A_0]^2,\\
    {\rm Var}_{\rm Real} &\approx \frac{1}{2} \tr[A_0^2] +  2\tr[\rho A_0^2]-\tr[\rho A_0]^2.
\end{align}
So certainly\footnote{Note ${\rm Var}_{\rm Unitary}-{\rm Var}_{\rm Real}=(1/2)\tr[A_0^2]\geq 0$} ${\rm Var}_{\rm Real} \leq {\rm Var}_{\rm Unitary}$. Using $\tr[A_0^2]=\|A_0\|_2^2  $ and $\tr[\rho A_0]\leq \|A_0\|_\infty$, in the main text we dropped the later two terms in the above equations, obtaining the claimed factor-of-two increase in sample-complexity,  justified when $\|A_0\|_2\gg \|A_0\|_\infty$ -- a condition that is not always true.  For example, let us take (for some real-valued state $\ket{\varphi}$) $A=\ketbra{\varphi},\ A_0 = \ketbra{\varphi}-(1/d)\id,\ A_0^2 = (1-2/d)\ketbra{\varphi}+ (1/d^2)\id$. Then we have $\tr[A_0^2]=1-1/d,\ \tr[\rho A_0]=\rho_{\varphi,\varphi}-1/d,\ \tr[\rho A_0^2]=(1-2/d)\rho_{\varphi,\varphi}+1/d^2$ and so 
\begin{align}
    {\rm Var}_{\rm Unitary} &\approx 1-1/d+1/d^2+2\rho_{\varphi,\varphi}(1-1/d)-\rho_{\varphi,\varphi}^2\\
    {\rm Var}_{\rm Real} &\approx  1/2-1/(2d)+1/d^2+2\rho_{\varphi,\varphi}(1-1/d)-\rho_{\varphi,\varphi}^2.
\end{align}
Now, if ``typically'' $\rho_{\varphi,\varphi}$ is negligible (say, $\mathcal{O}(1/d)$) then we recover the result ${\rm Var}_{\rm Unitary}/ {\rm Var}_{\rm Real} \approx 2$; on the other hand, if (say) $\rho_{\varphi,\varphi}=1$, we find ${\rm Var}_{\rm Unitary}/ {\rm Var}_{\rm Real} \approx 4/3$, a weaker improvement.

\end{document}